\documentclass[12pt]{amsart}

\usepackage{macros-new,amsaddr,physics}

\addbibresource{suco.bib}

\setcounter{tocdepth}{1}

\setlength{\parindent}{2em}

\begin{document}

\title[Superconformal algebras and holomorphic field theories]{Superconformal algebras \\ and holomorphic field theories \\[1em] (\lowercase{or:} T\lowercase{owards derived enhancements of superconformal algebras,} I)}

\author{Ingmar Saberi}
\address{Ludwig-Maximilians-Universit\"at M\"unchen \\ Fakult\"at f\"ur Physik \\ Theresienstra\ss{}e 37 \\ 80333 M\"unchen \\ Deutschland}
\email{i.saberi@physik.uni-muenchen.de}

\author{Brian R. Williams}
\address{School of Mathematics\\
University of Edinburgh \\ 
Edinburgh \\ 
UK}
\email{brian.williams@ed.ac.uk}

\begin{abstract}
We show that four-dimensional superconformal algebras admit an infinite-dimensional derived enhancement after performing a holomorphic twist. 
The type of higher symmetry algebras we find are closely related to algebras studied by Faonte--Hennion--Kapranov, Hennion--Kapranov, and the second author with Gwilliam in the context of holomorphic QFT. 
We show that these algebras are related to the two-dimensional chiral algebras extracted from four-dimensional superconformal theories by Beem and collaborators; further deforming by a superconformal element induces the Koszul resolution of a plane in~$\C^2 \cong \RR^4$. 
The central charges at the level of chiral algebras arise from central extensions of the higher symmetry algebras. 
\end{abstract}

\date{\today}

\maketitle

\newpage

\tableofcontents

\section{Introduction}

Superconformal field theories are an important and interesting class of supersymmetric quantum field theories, characterized, at least roughly speaking, by their lack of any characteristic energy scale.
Algebraically, superconformal theories admit an action of the superconformal super Lie algebra.
This algebra contains both the algebra of conformal transformations and the super Poincar\'e algebra.
The super Poincar\'e algebra $\lie{saff}(d,\N)$ is a super Lie algebra containing the algebra of infinitesimal affine symmetries on $\RR^d$, together with odd symmetries $\lie{saff}(d,\N)_-$ in spinor representations of~$\lie{so}(d)  \subset \lie{aff}(d)$. 
The integer (or pair of integers) $\N$ labels the number of copies of the minimal set of odd generators that are present, and so can be thought of as the degree of extended supersymmetry. 
Superconformal algebras should thus sit in a commuting diagram of inclusions of super Lie algebras:
\[
  \begin{tikzcd}
    & \lie{conf}(d) \ar[dr] & \\
    \lie{aff}(d) \ar[ru] \ar[rd] & & \lie{sconf}(d,\N). \\
    & \lie{saff}(d,\N) \ar[ru]
  \end{tikzcd}
\]
A superconformal algebra, if it exists, is a completion of this diagram which is a simple super Lie algebra, and for which 
\[
\dim \lie{sconf}(d,\N)_- = 2 \dim \lie{saff}(d,\N)_-.
\]
Unlike super Poincar\'e algebras, algebras describing superconformal symmetries cannot be constructed in every dimension; examples exist only up to spacetime dimension six. (The first classification of superconformal algebras is due to Nahm~\cite{Nahm}, making use of Kac's classification of simple super Lie algebras~\cite{Kac}.)

It is well known that the algebra of conformal vector fields in two dimensions is infinite-dimensional; as such, all two-dimensional superconformal algebras have the same property.
In spacetime dimensions from three to six, however, superconformal algebras are finite-dimensional.
The goal of this paper is to provide an infinite-dimensional enhancement of the four-dimensional superconformal algebra which exists after performing a {\em twist} of the supersymmetric field theory. 

At root, a twist of a supersymmetric field theory is obtained by taking the invariants of an appropriate fermionic element of the super Poincar\'e algebra. This generally means asking that the chosen supercharge be a Maurer--Cartan element, and therefore define a deformation of the differential; the twist is then precisely the corresponding deformation. 
Maurer--Cartan elements in super Poincar\'e algebras have been classified~\cite{NV,Chris}; since the internal differential is here trivial, the Maurer--Cartan equation reduces to the familiar condition $\{Q,Q\}=0$.

The most heavily studied examples of twists are topological. Such twists extract a topological quantum field theory from a supersymmetric theory as studied in physics. Such twists are of enormous interest, since topological quantum field theories are amenable to axiomatization and  provide invariants of manifolds. However, as tools for studying the full field theory, topological twists leave much to be desired: they are only available in the presence of sufficient extended supersymmetry, and forget much of the data of the supersymmetric theory from which they arose. 

The minimal twists are in fact not topological, but rather are \emph{holomorphic}. These have been studied by many authors over the last twenty-five years; we cite~\cite{JohansenHet,NekThesis,CostelloHol, Yangian} just for example. As tools for the study of the original supersymmetric theory, these have three distinct advantages over other twists: firstly, they are more often available, appearing in any even-dimensional theory for which nontrivial Maurer--Cartan elements are present.
For example, any four-dimensional supersymmetric theory admits a holomorphic twist. 
Secondly, the holomorphic twist is the least forgetful twist; the space of nilpotent supercharges is naturally stratified~\cite{NV}, and as such lives naturally over a poset. Holomorphic twists always form the minimal elements of this poset, and therefore can be used to study any other twist by further deformation. This also means that, even for theories that do admit topological twists---such as $\N=2$ theories in four dimensions---the holomorphic twist can be used to extract much finer information about the original theory.

The third and final point is that holomorphic theories have a richer and more intricate structure than topological theories, admitting (for example) nontrivial operator product expansions that depend holomorphically on the spacetime. The familiar example to keep in mind is the distinction between two-dimensional topological field theories (which, by a simple and familiar classification, correspond to finite-dimensional Frobenius algebras) and vertex algebras.  

Vertex algebras---and, relatedly, the familiar phenomenon of symmetry enhancement in two-dimensional chiral theories, which replaces finite-dimensional global or conformal symmetries by infinite-dimensional Kac--Moody or Virasoro algebras---have long been seen as peculiar to two-dimensional physics. One main philosophical point of this note is to argue that these phenomena, which have been of such enormous importance and profit to  theoretical physics at least since the foundational work of~\cite{BPZ}, occur in holomorphic theories much more generally, and are not at all peculiar to two dimensions \emph{per se}. 

The reason that attention has largely been restricted to two-dimensional theories thusfar has to do with two distinct phenomena. The first of these is that the wave equation, on which free field theory is based, factors into left- and right-moving (or holomorphic and anti-holomorphic) sectors. This means that ordinary field theory in two dimensions is very closely related to holomorphic field theory, even in the absence of supersymmetry. In higher dimensions, this of course fails; as outlined above, though, there is still a close connection between supersymmetric and holomorphic theories. 

The second, perhaps more subtle, reason is often alluded to in the physics literature by citing Hartogs' theorem, which implies that every holomorphic function on~$\C^n\setminus 0$ (for $n\geq 2$) extends to a holomorphic function on~$\C^n$. It thus seems to be hopeless to make sense of an analogue of the Kac--Moody construction in more than one complex dimension. Let us give a brief outline of the usual argument for this enhancement in two dimensions that shows how it seems to break down for $n\geq 2$.

Suppose a theory has a global symmetry by a Lie algebra $\fg$. The only obstruction to a local symmetry is the presence of derivatives in the kinetic term; as such, the holomorphic theory admits a symmetry by all holomorphic functions with values in~$\fg$, since only the $\dbar$ operator appears in the action functional. On the local operators, there is a symmetry by any holomorphic function on the punctured affine plane, 
\deq{
  \cO_\hol(\C^n\setminus 0) \otimes \lie{g} \quad \xto{n=1}  \quad\C[z,z^{-1}]\otimes\fg.
}
When $n=1$, there is then a central extension by the residue pairing, which gives rise to the Kac--Moody algebra and is represented in interesting fashion on the local operators. When $n>1$, there are no meromorphic functions and no such pairing on $\cO_\hol(\C^n\setminus 0)$ exists.

However, a natural analogue of the residue map \emph{does} exist. It is, however, not defined on~$\cO_\hol$ itself, but rather on its derived replacement: the Dolbeault complex $\Omega^{0,\bu}(\C^n\setminus 0)$. (It is  worth emphasizing here that a twist of a physical field theory will always produce such a derived replacement, since the original sheaves of fields or currents are locally free over $\C^\infty$ functions at the cochain level.) The homotopy type of $\C^n\setminus 0$ is, of course, that of the $(2n-1)$-sphere, and so the wedge product followed by integration over the top class defines a pairing on differential forms. 
The integration map is a trace of degree $n-1$ on the cdga of Dolbeault forms, defined by taking
\deq{
  \alpha \mapsto \int_{S^{2n-1}} \alpha\wedge \Omega,
}
where $\Omega$ is a holomorphic Calabi--Yau form on~$\C^n\setminus 0$. The degree of the pairing on Dolbeault cohomology is therefore $n-1$, which is zero precisely in complex dimension one. In general, Dolbeault cohomology of punctured affine space is supported in degrees $0$ and $n-1$, and can be thought of as consisting of holomorphic functions on affine space in degree zero, together with their dual (multiples of the Bochner--Martinelli kernel) in degree $n-1$. These are superimposed, purely by accident, in complex dimension one, and form the positive- and negative-degree parts of the Laurent polynomials $\C[z,z^{-1}]$.
Thus, in our view, the second confusing coincidence in complex dimension one is the fact that Dolbeault cohomology is supported only in degree zero in this case, and the residue map is defined without any shift of grading. 

At this stage, it is worth remarking on a connection between the structure at hand and recent work~\cite{SecondaryOps} studying higher operations in topological quantum field theory arising from the homology  of the operad of little $n$-disks (i.e., of configuration spaces of points in~$\R^n$). The ghost number in our higher algebras is essentially Dolbeault form degree, and a holomorphic analogue of topological descent is possible, making use of those supercharges which witness a nullhomotopy of the antiholomorphic translations in the twisted theory. The graded pairing that gives rise to higher central extensions, as we have emphasized, arises from the top class in the homology of~$\C^n\setminus 0$, which is the same class that gives rise to the bracket operation on local operators in TQFT discussed in~\cite{SecondaryOps}, albeit paired with the Calabi--Yau form. In a sense, for us, $\C^n\setminus 0$ is playing the role of a holomorphic analogue of~$\mathrm{Conf}_2(\R^{2n})$;  one physical interpretation of our higher symmetry algebras is that  nonlocal operators play an important role, giving rise to algebraic structures on  local operators via holomorphic descent. We expect that it is possible to study a holomorphic  analogue of the operad of little disks, and to use it to characterize secondary operations in holomorphic theories at the level of operads which can imposed concretely via a holomorphic analog of descent; however, we do not pursue this here, reserving such questions for future work.

Using the formalism of factorization algebras and the pairing discussed above, higher analogues of Kac--Moody algebras were recently introduced in~\cite{FHK,GWkm}. It was then argued in~\cite{GWkm,ISBW} that these algebras appear naturally in holomorphic twists of four-dimensional field theory as twists of the current supermultiplet associated to a global symmetry. 
A natural higher analogue of the Virasoro algebra in holomorphic theories was also proposed in~\cite{KH,Wthesis}; there is a model which uses the Dolbeault resolution of the Lie algebra of holomorphic vector fields on $\C^n\setminus 0$. (The reader will recall that the Virasoro algebra is a central extension of the Witt algebra of holomorphic vector fields on~$\C\setminus 0$.) 
Central extensions generalizing the well-known Kac--Moody and Virasoro central extensions were shown to exist. 
The space of {\em local} central extensions of the higher Virasoro algebra was shown in~\cite{Wthesis} to be two-dimensional; it is natural to guess that these cocycles correspond to the $a$ and~$c$ central charges of four-dimensional conformal field theory, and we hope to make this connection explicit in future work.

In the present work, our aim is to explore the relation of the four-dimensional higher Virasoro algebra to superconformal symmetry in the full theory. 
We compute the holomorphic twist of the four-dimensional superconformal algebra in~\S\ref{sec: twist}, and argue that the resulting algebra, $\lie{sl}(3|\N-1)$ (or $\lie{psl}(3|3)$ when $\N=4$), acts naturally as a finite-dimensional closed subalgebra of the holomorphic vector fields on an appropriate holomorphic superspace, $\C^{2|\N-1}$. 
See Theorem \ref{thm: twist} for the precise statement. 

In \S\ref{sec:hol} we define the main holomorphic symmetry algebras of interest which are natural from the point of view of complex geometry. 
In~\S\ref{sec:enhance}, we show that the holomorphic twists of supersymmetric theories in four dimensions admit natural actions of these holomorphic symmetry algebras (at the classical level). At this stage it plays no role if the untwisted theory is in fact superconformal or not. 
For the precise results pertaining to symmetry enchancements of twists of four-dimensional theories, see Propositions~\ref{prop: symenhance1}--\ref{prop: symenhance4}.

In~\S\ref{sec: fact}, we introduce factorization algebras associated to the various symmetry algebras and theories. 
\S\ref{sec: defsym} then goes on to consider further deformations of the enhanced symmetry algebra as a factorization algebra, which play a role in other twists of the theory. 
We consider, in particular, the deformation of the centrally extended higher Kac--Moody and Virasoro factorization algebras by a Maurer--Cartan element arising from a special supersymmetry in the global superconformal algebra, making connection with the work of Beem and collaborators~\cite{BeemEtAl}. 
For us, this deformation appears as a simple odd vector field implementing the Koszul resolution of a subspace $\C\subset\C^2$. 
The main results of \S\ref{sec: defsym} can be summarized as follows.

\begin{thm}\label{thm: intro}
Let $\cF_{\rm KM}$ and $\cF_{\rm Vir}$ be the $\cN=2$ higher Kac--Moody and Virasoro factorization algebras of level $\kappa$ and charge $c$, respectively, on $(z_1,z_2) \in \CC^2$.
(See Definitions~\ref{dfn: kmfact} and~\ref{dfn: virfact} respectively). 
Let $\cF_{\rm KM}'$ and $\cF_{\rm Vir}'$ be the corresponding factorization algebras deformed by the Maurer--Cartan element $z_2 \frac{\partial}{\partial \ep}$ arising as a special supercharge in the global superconformal algebra (see \S\ref{sec: defsym}).
Then:
\begin{itemize}
\item $\cF_{\rm KM}'$ is equivalent to a stratified factorization algebra on $\CC^2$ which is trivial away from $z_2 = 0$ and along the plane $\CC_{z_1} = \{z_2 = 0\}$ is equivalent to the Kac--Moody vertex algebra at level $- \kappa / 2$. 
\item $\cF_{\rm Vir}'$ is equivalent to the stratified factorization algebra on $\CC^2$ which is trivial away from $z_2 = 0$ and along the plane $\CC_{z_1} = \{z_2 = 0\}$ is equivalent to the Virasoro vertex algebra at level $- 12 c$.
\end{itemize} 
\end{thm}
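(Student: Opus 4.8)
The plan is to treat the two deformed objects as deformations of the local Lie algebras that resolve them, and to recognize the Maurer--Cartan element $z_2\partial_\ep$ as implementing a Koszul resolution in the directions transverse to the plane $\CC_{z_1}=\{z_2=0\}$. For the higher Kac--Moody case the underlying local Lie algebra is $\Omega^{0,\bu}(\CC^{2|1})\otimes\fg$, and deforming by $z_2\partial_\ep$ adds the odd derivation $z_2\partial_\ep\otimes\mathrm{id}_\fg$ to the Dolbeault differential; for the higher Virasoro case the local Lie algebra is $\Omega^{0,\bu}(\CC^{2|1},T^{1,0})$ and we add the adjoint action $[z_2\partial_\ep,-]$. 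First I would check that the deformed differential $\dbar+z_2\partial_\ep$ squares to zero: the two terms anticommute (as $z_2$ is holomorphic, $\dbar z_2=0$, and $\partial_\ep$ involves only the odd direction) and each squares to zero ($\partial_\ep^2=0$), confirming that $z_2\partial_\ep$ is a genuine Maurer--Cartan element and, being an honest element of the higher symmetry algebra, that it deforms the factorization algebra and not merely the underlying complex.

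The core of the argument is then a cohomology computation. Writing a section as $\alpha+\beta\,\ep$ with $\alpha,\beta\in\Omega^{0,\bu}(\CC^2)$, the derivation $z_2\partial_\ep$ is precisely the Koszul differential of the regular element $z_2\in\O_\hol(\CC^2)$. Filtering by $\ep$-degree and running the associated spectral sequence, I would take Koszul cohomology first: because $z_2$ is a non-zero-divisor, the two-term complex $\O_\hol(\CC^2)\cdot\ep\xrightarrow{z_2}\O_\hol(\CC^2)$ resolves $\O_\hol(\CC^2)/(z_2)\cong\O_\hol(\CC_{z_1})$, the spectral sequence collapses, and tensoring with the (fine, hence flat) Dolbeault resolution identifies the total complex with the Dolbeault complex of the plane, $\Omega^{0,\bu}(\CC_{z_1})\otimes\fg$ (resp. $\Omega^{0,\bu}(\CC_{z_1},T^{1,0})$). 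In the Virasoro case the only extra point is to verify that the cohomology of $[z_2\partial_\ep,-]$ on holomorphic vector fields of $\CC^{2|1}$ is exactly the holomorphic vector fields on $\CC_{z_1}$, which follows from the same regularity of $z_2$ after unwinding the vector-field bracket. On an open set $U$ disjoint from $\{z_2=0\}$ the element $z_2$ is invertible, the Koszul differential is an isomorphism, and the local Lie algebra is acyclic; since the factorization algebra associated to an acyclic local Lie algebra is equivalent to the trivial (unit) factorization algebra, this gives the ``trivial away from $z_2=0$'' assertion and exhibits the result as a stratified factorization algebra. Along the plane the computation above reduces both algebras to their one-complex-dimensional counterparts, which by the discussion of the $n=1$ case are the ordinary Kac--Moody and Virasoro vertex algebras; functoriality of the enveloping construction upgrades the quasi-isomorphism of local Lie algebras to an equivalence of factorization algebras.

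The remaining, and I expect hardest, step is to match the levels---that is, to recover the factors $-\kappa/2$ and $-12c$. Here I would track the higher central-extension cocycles of~\S\ref{sec: xtns} through the Koszul reduction. These cocycles are built from the degree-$(n-1)$ residue pairing $(\alpha,\beta)\mapsto\int_{S^{3}}(\alpha\wedge\beta)\wedge\Omega$ on $\Omega^{0,\bu}(\CC^2\setminus 0)$, and the claim is that restricting to representatives of the reduced cohomology on the plane and integrating out the transverse variables $z_2,\ep$ collapses the $S^3$ residue to the one-dimensional $S^1$ residue defining the ordinary Kac--Moody/Virasoro cocycle. The numerical factors should emerge from this fiber integration together with the Calabi--Yau form $\Omega=dz_1\,dz_2\,d\ep$ and the normalizations fixed in Definitions~\ref{dfn: kmfact} and~\ref{dfn: virfact}; pinning down these constants---including signs and the factor of $-12$ relating the Virasoro cocycle to the central charge $c$---is the delicate part, since it requires a compatible choice of representatives and a careful comparison of orientation and normalization conventions on the two sides.
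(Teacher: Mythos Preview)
Your conceptual picture is correct: the deformation is a Koszul differential for the regular element $z_2$, the complex localizes to $\CC_{z_1}$, and on the open stratum invertibility of $z_2$ kills everything. But there is a genuine gap between this and the paper's argument, and it is not just the level computation.

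The factorization enveloping algebra is built from \emph{compactly supported} sections, i.e.\ from the cosheaf $\sL_c$, not the sheaf $\sL$. Your Koszul/spectral-sequence argument, together with the restriction map to $\CC_{z_1}$, gives a quasi-isomorphism of \emph{sheaves}; the paper makes exactly this point and then explains why it is not enough. For cosheaves the restriction map has no natural quasi-inverse (pullback along $\pi:\CC^2\to\CC_{z_1}$ does not preserve compact support), so ``functoriality of the enveloping construction'' does not apply as stated. The paper's fix is to choose a bump function $\rho$ along $z_2=0$ and define an explicit cochain map
\[
s_\rho(\alpha)\;=\;\rho\,\pi^*\alpha \;-\;\ep\,\frac{\dbar\rho}{z_2}\wedge\pi^*\alpha
\]
from $\Omega^{0,\bu}_c(\CC_{z_1},\fg)$ into $\pi_*\sG'_{\cN=2,c}$. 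This $s_\rho$ is not a strict Lie map, so a further $L_\infty$ correction $s_\rho^{(2)}(\alpha,\beta)=\ep\,\tfrac{\rho(\rho-1)}{z_2}[\pi^*\alpha,\pi^*\beta]$ is needed before one can pass to Chevalley--Eilenberg chains and conclude anything about factorization algebras. The Virasoro case requires in addition an analysis of the adjoint action on the full vector-field complex (the ``big complex'' with $\partial_{z_1},\partial_{z_2},\partial_\ep$ components), showing that a quadrilateral subcomplex is acyclic before one is reduced to the Koszul two-term complex; your one-line ``same regularity of $z_2$'' does not quite cover this.

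The level computation is then not a residue/fiber-integration argument of the kind you sketch, but a direct pullback of the four-dimensional local cocycle along the explicit map $s_\rho$. Because $s_\rho$ has a $\dbar\rho/z_2$ term, the pulled-back cocycle contains an integral of the form $\int_{\CC_{z_2}} dz_2\wedge\tfrac{\dbar(\rho^2)}{2z_2}$, which evaluates to $\pi i$ independently of $\rho$; this is the origin of the $-\tfrac12$ in both the Kac--Moody and Virasoro cases, and the $-12$ is purely the normalization discrepancy between $\psi_{2d}$ and the physicists' central charge. Without the $s_\rho$ construction you have no concrete cocycle representative on which to perform this computation, which is why your last paragraph remains schematic.
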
 
In words, at the level of stratified factorization algebras, the deformations of our higher symmetry algebras reproduce the chiral algebras studied in~\cite{BeemEtAl}; these are strictly contained within the full higher Virasoro and Kac--Moody symmetries, and are obtained from them by a further twist. We also reproduce the correct correspondence between four- and two-dimensional central extensions of these algebras; this is strong evidence that the central extensions of the higher algebra correspond precisely to the higher-dimensional central charges of the physical theory, just as in two dimensions.

Other examples of chiral algebras have been extracted from four-dimensional $\N=2$ theories, and we expect that the higher Virasoro and Kac--Moody algebras can profitably be used to understand all of them. Many of these appear from further twists; for example, the half-holomorphic twist of~\cite{Kapustin} is implemented by a natural further deformation. We expect that the recent results of~\cite{OhYagi}, producing chiral algebras isomorphic to those of~\cite{BeemEtAl} from this half-holomorphic twist in the presence of an $\Omega$-deformation, can be swiftly understood in our setting. The first study of infinite-dimensional symmetry at the level of the holomorphic twist was performed in~\cite{JohansenKM}, but was restricted to the setting of a product of Riemann surfaces; for us, the essential geometry for the study of local operators in four dimensions is that of $\C^2\setminus 0$. However, it is worth noting that the formalism of factorization algebras allows us to think of symmetries by local Lie algebras across all complex surfaces uniformly.

In addition, we emphasize that the symmetry enhancement in the holomorphic theory means that \emph{many more} deformations of the differential are available after the holomorphic twist. Of course, any appropriate Maurer--Cartan element of the global superconformal algebra gives rise to such a deformation, but new deformations appear in the holomorphic twist which are not visible at the level of the full theory. While we reserve a full analysis for future work, we explore some of these new deformations briefly in~\S\ref{ssec: moredefs}, arguing that the higher Virasoro algebra in $\N=2$ supersymmetric theories admits a deformation that localizes it to the holomorphic vector fields on \emph{any} smooth affine curve in~$\C^2$, not just to planes. We expect even more interesting behavior in the case of singular or nonreduced curves, though we do not explore this direction further here.

In~\S\ref{sec: qft}, we turn to some explicit examples of theories, and in particular to $\N=2$ super QCD. We demonstrate that the higher Virasoro symmetry is, in fact, anomalous, and can be realized in the quantum theory precisely when the familiar condition $N_f = 2 N_c$ is satisfied---i.e., when the full theory is in fact superconformal. The beta function of the full theory is thus visible as an anomaly in the holomorphic twist---in spite of the fact that the holomorphic theory itself is automatically scale-invariant. We then offer a precise characterization of the chiral algebras (or two-dimensional holomorphic theories) that appear upon deforming $\N=2$ superconformal QCD as above, again reproducing results of~\cite{BeemEtAl} in our formalism.




\subsection*{Acknowledgements}
We thank K.~Costello, S.~Gukov, O.~Gwilliam, Si Li, S.~Nawata, K.~Nilles, D.~Pei, P.~Yoo and J.~Walcher for conversations and advice that helped lead to the completion of this work. Special thanks are due to R.~Eager for early discussions and calculations, related to those in~\S3, that helped spark this project. I.S. thanks the Center for Quantum Geometry of Moduli Spaces for hospitality during the preparation of this work. B.W. thanks Ruprecht-Karls-Universit\"at Heidelberg and the Max-Planck-Institut f\"ur Mathematik for hospitality during the preparation of this work. The work of I.S. was supported in part by the Deutsche Forschungsgemeinschaft, within the framework of the Exzellenzstrategie des Bundes und der L\"{a}nder. The work of B.W. was supported by Northeastern University, the University of Edinburgh, and National Science Foundation Award DMS-1645877.

\section{Superconformal algebras and their twists}
\label{sec: twist}

\subsection{The conformal and superconformal symmetry algebras}

We here review some basic notions of conformal and superconformal symmetry in physical theories. Our index conventions are standard; indices for the vector representation of an orthogonal group are raised and lowered with the metric. We sometimes use the isomorphism between the vector representation of $\lie{so}(4)$ and the tensor product of its two chiral spinors; spinor indices are raised and lowered with the $\lie{su}(2)$-invariant alternating form $\epsilon_{\alpha\beta}$. 

\begin{dfn}
  The conformal algebra in dimension~$d>2$, with signature $(p,q)$, is $\lie{so}(p+1,q+1)$.
\end{dfn}

\begin{prop}[Standard; see for example~\cite{CFTbook}]
  \label{prop: VFs}
  The conformal algebra acts by vector fields on~$\R^{p,q}$.
\end{prop}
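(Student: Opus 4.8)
\subsection*{Proof proposal}

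The plan is to exhibit an explicit homomorphism from $\lie{so}(p+1,q+1)$ into the Lie algebra of vector fields on $\R^{p,q}$, by solving the conformal Killing equation and then identifying the resulting finite-dimensional Lie algebra of conformal vector fields with $\lie{so}(p+1,q+1)$. First I would write down the infinitesimal condition that the flow of a vector field $\xi = \xi^\mu \partial_\mu$ preserve the flat metric $\eta$ up to a Weyl rescaling, namely the conformal Killing equation
\[
  \partial_\mu \xi_\nu + \partial_\nu \xi_\mu = \frac{2}{d}\,(\partial_\rho \xi^\rho)\,\eta_{\mu\nu}.
\]
Differentiating this equation repeatedly and invoking $d>2$, one finds that all third-order derivatives of $\xi$ vanish, so that every solution is at most quadratic in the coordinates $x^\mu$. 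Substituting the general quadratic ansatz back into the equation and solving the resulting linear constraints produces exactly four families of solutions: the translations $P_\mu = \partial_\mu$, the (pseudo-)rotations $M_{\mu\nu} = x_\mu \partial_\nu - x_\nu \partial_\mu$, the dilatation $D = x^\mu \partial_\mu$, and the special conformal generators $K_\mu = x^2 \partial_\mu - 2 x_\mu x^\nu \partial_\nu$. A dimension count gives $d + \binom{d}{2} + 1 + d = \binom{d+2}{2}$, matching $\dim \lie{so}(p+1,q+1)$.

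Next I would check that these vector fields span a Lie subalgebra of the vector fields on $\R^{p,q}$ by computing their brackets; this is routine but lengthy. To identify the result with $\lie{so}(p+1,q+1)$, rather than matching structure constants by hand, I would pass to the geometric model: realize $\R^{p,q}$ as an affine chart on the projectivized null cone of $\R^{p+1,q+1}$ (the conformal compactification), on which $SO(p+1,q+1)$ acts linearly, preserving the quadratic form and hence the null cone. This action descends to the quadric and, differentiated, restricts on the affine chart to precisely the vector fields found above. This route simultaneously establishes closure under the bracket, identifies the bracket with that of $\lie{so}(p+1,q+1)$, and supplies a conceptual origin for the special conformal generators as the images of the ``other'' boosts and translations of $\R^{p+1,q+1}$.

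The main obstacle is exactly this identification step: showing that the conformal vector fields assemble into $\lie{so}(p+1,q+1)$, and not merely into some abstract $\binom{d+2}{2}$-dimensional Lie algebra. The hypothesis $d>2$ is essential and enters precisely where one concludes that the third derivatives of $\xi$ vanish; in $d=2$ this step fails and the algebra of conformal vector fields becomes infinite-dimensional (indeed this is the very phenomenon the paper is generalizing), so any correct argument must visibly use $d>2$. The cleanest way around the bracket bookkeeping is the projective quadric realization, which makes the $\lie{so}(p+1,q+1)$-structure manifest from the linear action on $\R^{p+1,q+1}$ and relegates the explicit vector fields to a local coordinate computation on one affine chart.
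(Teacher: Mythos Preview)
Your proposal is correct and in fact more thorough than the paper's own treatment. The paper's proof is deliberately minimal: it simply displays the explicit vector fields $P_\mu$, $M_{\mu\nu}$, $\Delta = -E$, and $K_\mu$ (with $E$ the Euler vector field), lists their commutation relations, and defers the derivation---that these exhaust the solutions of the conformal Killing equation---to a standard reference. You recover the same list of generators but actually supply that derivation, and your use of the projectivized null cone in $\R^{p+1,q+1}$ to identify the resulting algebra with $\lie{so}(p+1,q+1)$ is conceptually cleaner than verifying the structure constants by hand, which is effectively what the paper does. The paper does invoke a closely related homogeneous-space picture, but only in a subsequent remark and only in the special case $d=4$, via the accidental isomorphism $\lie{so}(6)\cong\lie{su}(4)$ and the flag variety $\Fl(2;4)$; your null-cone argument is the general-signature version of that construction.
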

\begin{proof}
  This is essentially by definition, since the conformal group is the set of diffeomorphisms of~$\R^{p,q}$ that act by local rescaling on the metric. We remind the reader that the relevant vector fields form a finite-dimensional algebra in dimensions greater than two, and can be explicitly given as
  \deq[eq:confVF]{
    \begin{aligned}[c]
      P_\mu &= \pdv{ }{x^\mu}, \\
      M_{\mu\nu} &= x_\mu \pdv{ }{x^\nu} - x_\nu \pdv{ }{x^\mu}, \\
      \Delta &= - E, \\
      K_\mu &= |x|^2 \pdv{ }{x^\mu} - 2 x_\mu E.
    \end{aligned}
  }
  Here $E = x^\mu \partial_\mu$ is the Euler vector field. It is straightforward to check that these satisfy the commutation relations
  \deq{
    \begin{aligned}[c]
      [D,P_\mu] &= P_\mu, \\
      [D,K_\mu] &= - K_\mu, \\
      [M_{\mu\nu}, K_{\rho}] &= g_{\rho\nu} K_{\mu} - g_{\mu\rho} K_\nu, \\
      [M_{\mu\nu}, P_{\rho}] &= g_{\rho\nu} P_{\mu} - g_{\mu\rho} P_\nu, \\
      [K_\mu, P_\nu] &= 2 M_{\mu\nu} + 2 g_{\mu\nu} E,\\
      [M_{\mu\nu}, M_{\rho\sigma}] &= g_{\mu\sigma} M_{\nu\rho} + g_{\nu\rho} M_{\mu\sigma} - g_{\mu\rho} M_{\nu\sigma} - g_{\nu\sigma} M_{\mu\rho},
    \end{aligned}
    }
    with other commutators vanishing. For a proof that these vector fields span the space of solutions to the conformal Killing vector field equations, see~\cite{CFTbook}.
\end{proof}

\begin{rmk}
  \label{rmk: flags}
  In dimension four, the accidental isomorphism $\lie{so}(6)\cong\lie{su}(4)$ gives rise to a convenient way of thinking about the vector fields defined above. Let us pass to using complex coefficients. We can realize the action of the complexified conformal group on complexified Minkowski space $\C^4$ by considering the quotient of~$GL(4,\C)$ by a particular parabolic subgroup:
  \deq{
    {\rm Fl}(2;4) = GL(4,\C)/P, \quad P = \begin{bmatrix} * & * \\ 0 & * \end{bmatrix}.
  }
  (We choose to use $GL(4,\C)$, rather than $SL$, for the sake of convenience; note, however, that the unit-determinant condition can be imposed everywhere, and does not affect our discussion.) Here $P$ consists of two-by-two blocks with the lower left block zero and other blocks arbitrary. The resulting symmetric space is the space of 2-flags in~$\C^4$; it has an open dense subset isomorphic to~$\C^4$, given by cosets represented by matrices of the form
  \deq[eq:blockX]{
    \begin{bmatrix} 1 & 0 \\ x_{\alpha\dot\alpha} & 1 \end{bmatrix} \in GL(4,\C).
  }
  The reader will recall that the chiral (or antichiral) spinor of~$\Spin(6)$, equivalent to the fundamental (or antifundamental) representation of~$SU(4)$, becomes one chiral and one anti-chiral spinor of~$\Spin(4)\cong SU(2)\times SU(2)$, which can be thought of as sitting block-diagonally inside~$SU(4)$. Our index convention in~\eqref{eq:blockX} is meant to suggest this. The vector fields witnessing the natural action of~$GL(4,\C)$ on~$\Fl(2;4)$ from the left become the conformal vector fields of~\eqref{eq:confVF} when restricted to the image of this embedding of Minkowski space in~$\Fl(2;4)$.
\end{rmk}

The construction of Remark~\ref{rmk: flags} becomes even more important in the context of superconformal symmetry.
In  a limited number of cases classified by Nahm~\cite{Nahm}---in particular, when the spacetime dimension does not exceed six---the conformal algebra can be extended to supersymmetric theories, which then admit the action of a simple \emph{superconformal algebra} $\lie{c}(d,\N)$ containing both the conformal algebra and the $\N$-extended super Poincar\'e algebra. A complete list of such algebras for dimension greater than two is
\deq{
  \lie{sconf}(d,\N) = 
  \begin{cases}
    \lie{osp}(\N|4), & d=3; \\
    \lie{su}(2,2|\N), & d = 4, ~  \N\neq 4; \\
    \lie{psu}(2,2|4), & d = 4, ~ \N = 4; \\
    \lie{f}(4), & d=5, ~  \N = 1; \\
    \lie{osp}(6,2|\N), & d = 6.
  \end{cases}
}
In each case, the construction relies on an accidental isomorphism of Lie algebras, akin to that used in Remark~\ref{rmk: flags}, that allows one to fit the spinor representations of low-dimensional spin groups into Kac's classification of simple super Lie algebras~\cite{Kac}, where no infinite families with odd elements in spinor representations appear.

We now specialize to four-dimensional theories, and thus to the algebras~$\lie{su}(2,2|\N)$ for $\N=1$ and~$2$, and $\lie{psu}(2,2|4)$ in the case $\N=4$. In our considerations, we will always complexify, and thus deal with the complex Lie algebras $\lie{sl}(4|\N)$ or $\lie{psl}(4|4)$. (The change for $\N=4$ comes about because $\lie{sl}(k|k)$ has a one-dimensional center and is therefore not simple; algebras with $\N>4$ exist, but are not of physical relevance, as they cannot be represented on interacting theories.) One can helpfully think of the generators of this algebra as  arranged in the following diagram: 
  \begin{equation}
    \begin{tikzcd}[row sep = 1 ex, column sep = 1 ex]
      & & P_{\alpha\dot\alpha} & &\\
      & Q_\alpha^i & & \bar{Q}_{\dot\alpha i} & \\
      M_{\alpha\beta} & & \Delta, R^i_j & & \bar{M}_{\dot\alpha \dot\beta} \\
      & S_{i}^\alpha & & \bar{S}^{\dot\alpha i} & \\
    & & K^{\alpha \dot\alpha } & &
    \end{tikzcd}
    \label{eq:SCAdiagram}
  \end{equation}
  Here, vertical position in the diagram represents the conformal weight of the corresponding generator, and horizontal position is determined by the difference in number of chiral and antichiral spinor indices. If vertical position is interpreted as a $\Z$-grading, parity is determined by its value modulo two. The charges $Q_\alpha^i$ and~$\bar S^{\dot\alpha i}$ together form a chiral  spinor of~$\cO(4,2)$, which  is  equivalent to  the antifundamental representation of~$\su(2,2)$;  they transform in the fundamental representation of the $R$-symmetry group. Likewise, $\bar{Q}_{\dot\alpha i}$ and~$S_i^\alpha$ together sit in the fundamental representation of~$\lie{su}(2,2)$ and the antifundamental representation of~$R$-symmetry. (We generally follow the conventions of~\cite{DolanOsborn}.)

The superconformal algebra acts naturally by vector fields on supermanifolds. For example, the usual superspace for $\N=1$ supersymmetry in four dimensions is $\R^{4|4}$, with one odd copy of each chiral spinor representation; it admits an action of $\lie{su}(2,2|1)$ by supervector fields that extends the natural action of super Poincar\'e by supertranslations. The vector fields were written explicitly in~\cite{VongDuc77}, and shown to arise (as in the bosonic case) from conjugating super Poincar\'e transformations by the superspace analogue of the inversion transformation. This generalizes to extended superconformal symmetry; see~\cite{HoweHartwell} for details. Here, a consistent real structure can be imposed, such that the odd part of~$\R^{4|4}$ is the Majorana spinor of~$SO(3,1)$. However, this will play no role in our further considerations.
  
  However, the standard (unconstrained) superspace is not the only superspace where the superconformal algebra naturally acts. Of particular interest for us will be an action on superfields satisfying a chiral constraint.
\begin{prop}
  The complexified four-dimensional superconformal algebra $\lie{sl}(4|\N)$ acts geometrically by supervector fields on the chiral superspace
  \deq{
    \C^{4|2\N} \cong V \oplus \Pi(S_+ \otimes R),
  }
  where $V$ and~$S_+$ denote respectively the fundamental and chiral spinor representations of~$\Spin(4)$, and $R$ the defining representation of the $U(\N)$ $R$-symmetry.
\end{prop}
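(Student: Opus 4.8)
The plan is to mimic the flag-variety construction of Remark~\ref{rmk: flags}, but now in the super-setting. Just as the bosonic conformal algebra $\lie{sl}(4)$ acts on $\C^4$ realized as an open dense cell in the flag variety $\Fl(2;4)=GL(4,\C)/P$, I would realize the chiral superspace as an open cell in a \emph{super}-flag variety associated to $\lie{sl}(4|\N)$. Concretely, I would choose the parabolic super-subgroup $P\subset GL(4|\N,\C)$ whose Lie algebra consists of all generators in~\eqref{eq:SCAdiagram} lying at or below the middle row---i.e.\ $M_{\alpha\beta}$, $\Delta$, $R^i_j$, $\bar M_{\dot\alpha\dot\beta}$, the lower fermionic charges $S^\alpha_i$, $\bar S^{\dot\alpha i}$, and~$K^{\alpha\dot\alpha}$---together with the antiholomorphic translations $\bar Q_{\dot\alpha i}$. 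The even translations $P_{\alpha\dot\alpha}$ and the \emph{chiral} supercharges $Q^i_\alpha$ then span a graded complement to $\lie{p}$ in $\lie{sl}(4|\N)$, and exponentiating this complement gives the open cell. Counting dimensions, $P_{\alpha\dot\alpha}$ contributes the bosonic $\C^4\cong V$, while $Q^i_\alpha$ contributes the odd directions $\Pi(S_+\otimes R)$ of dimension $2\N$, yielding exactly $\C^{4|2\N}$.

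The key steps, in order, are as follows. First I would fix the $\Z$-grading on $\lie{sl}(4|\N)$ by conformal weight, as indicated by the vertical axis of~\eqref{eq:SCAdiagram}, so that $\lie{p}$ is the non-positive part (suitably including the chosen fermionic generators) and the complement $\lie{m}$ is spanned by $P$ and~$Q$. Second, I would verify that $\lie{p}$ is indeed a sub-superalgebra---this is immediate from the grading, since brackets of non-positive weight generators have non-positive weight, and one checks that the fermionic lower generators close among themselves and with the $K$'s. Third, I would write the generic coset representative as a super-analogue of~\eqref{eq:blockX}, namely a lower-triangular super-unipotent matrix whose entries in the off-diagonal blocks are the even coordinates $x^{\alpha\dot\alpha}$ and the odd coordinates $\theta^{\alpha}_i$ dual to $Q^i_\alpha$. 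Fourth, I would compute the vector fields implementing the left $\lie{sl}(4|\N)$-action on this cell by the standard recipe: act on the coset representative from the left by each generator, reabsorb the result into $\lie{p}$ on the right, and read off the induced infinitesimal motion of $(x,\theta)$. This reproduces the supertranslations and superconformal vector fields of~\cite{VongDuc77,HoweHartwell} restricted to chiral superspace, giving the explicit geometric action.

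The main obstacle I anticipate is the correct identification of the parabolic---specifically, deciding which fermionic generators belong in $\lie{p}$ so that the complement is exactly $S_+\otimes R$ (the \emph{chiral} half) rather than the full $\R^{4|4\N}$ of unconstrained superspace. The point is that the chiral constraint corresponds precisely to quotienting out the antichiral odd directions $\bar Q_{\dot\alpha i}$ by placing them inside the parabolic, so that only $Q^i_\alpha$ survives as odd coordinates. I would need to check carefully that this choice is consistent---i.e.\ that $\bar Q$ can be added to the lower-triangular superalgebra without spoiling closure---using the fact that $\{Q,\bar Q\}\sim P$ lands in the complement while $\{\bar Q,\bar Q\}=0$ and $\{\bar Q, S\}$, $\{\bar Q, \bar S\}$ remain within $\lie{p}$. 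Once the parabolic is pinned down, the remaining computation of vector fields is routine bookkeeping with the $\lie{su}(2,2)$ and $R$-symmetry index structure, and the geometric action follows, with the representation-theoretic decomposition $\C^{4|2\N}\cong V\oplus\Pi(S_+\otimes R)$ emerging directly from the weights of $P_{\alpha\dot\alpha}$ and $Q^i_\alpha$ under the Levi factor $\Spin(4)\times U(\N)$.
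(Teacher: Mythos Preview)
Your approach is correct and coincides with the super-flag variety construction that the paper presents immediately \emph{after} its proof (citing Howe--Hartwell): chiral superspace is the big cell of $\Fl(2|0;4|\N)$, with coset representative the lower-unipotent supermatrix carrying $x^{\alpha\dot\alpha}$ and $\theta^\alpha_i$ in the $P$- and $Q$-slots. Your identification of the parabolic is exactly the stabilizer of the standard $2|0$-flag, and closure is automatic for such stabilizers, so your anticipated obstacle dissolves---there is no delicate bracket check needed beyond recognizing that block-upper-triangular supermatrices form a subalgebra.

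The paper's own proof, however, takes a different route. It invokes the characterization (attributed to Howe--West and Osborn) of $\lie{sl}(4|\N)$ as precisely those supervector fields on \emph{unconstrained} superspace $\C^{4|4\N}$ that act compatibly with every chiral constraint, $[X,D_{\alpha i}]\propto D_{\alpha i}$ and $[X,\bar D_{\dot\alpha}^j]\propto \bar D_{\dot\alpha}^j$. Since such vector fields preserve the chiral subspace by definition, they descend to it; the paper then records the resulting supervector fields explicitly for $\N=1$. This buys immediate contact with the physics literature on chiral superfields and yields concrete formulas without any coset computation. Your coset approach is structurally cleaner, makes the $\Spin(4)\times U(\N)$ content of the tangent space transparent via the Levi factor, and---as the paper itself remarks---is the more efficient tool for the subsequent holomorphic-twist computation in Theorem~\ref{thm: twist}.
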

\begin{proof} As in Proposition~\ref{prop: VFs}, this is almost a proof by definition, although the characterizations involved seem to be less well-known in this case. 
  One can in fact define the four-dimensional superconformal algebra to be the collection of vector fields on unconstrained superspace, $\R^{4|4\N}$, that (after complexification) act compatibly with every possible chiral constraint. That is,
  \deq{
    [X,D_{\alpha i}] \sim D_{\alpha i}, \quad
    [X,\bar D_{\dot\alpha}^j] \sim D_{\dot\alpha}^j.
  }
  For general~$\N$, this characterization is given, for instance, in~\cite{HoweWest};
  see~\cite{Osborn} for further discussion and an explicit treatment of the case~$\N=1$. 

  Since the superconformal transformations act preserving chiral subspaces, they also act on each chiral subspace. In the case $\N=1$, the explicit supervector fields involved are
  \deq{
    P_{\alpha\dot\alpha} = \partial_{\alpha\dot\alpha}, \quad M_{\alpha\beta} = x_{\alpha\dot\alpha} \partial_{\beta\dot\alpha} + \theta_\alpha \partial_\beta + (\alpha \leftrightarrow \beta),
    \quad \bar{M}_{\dot\alpha\dot\beta} = x_{\alpha\dot\alpha} \partial_{\alpha\dot\beta} + (\dot\alpha \leftrightarrow \dot\beta)
  }
  for generators of affine transformations, as well as
  \deq{
    \Delta = E + \frac{1}{2} \theta^\alpha \partial_\alpha,
    \quad
    R = \theta^\alpha\partial_\alpha, 
    \quad
    K^{\alpha\dot\alpha} = |x|^2 \partial^{\alpha\dot\alpha} - 2 x^{\alpha\dot\alpha} E +  \theta^\alpha x^{\beta\dot\alpha} \partial_\beta
  }
  for dilatations and $U(1)$ $R$-symmetry, and special conformal transformations, and lastly
  \deq{
    \begin{aligned}[c]
    Q_\alpha &= \partial_\alpha, \\
    \bar{Q}_{\dot\alpha} &= \theta^\alpha \partial_{\alpha\dot\alpha},
  \end{aligned}
  \qquad
  \begin{aligned}[c]
    S^\alpha &= - x^{\alpha\dot\alpha} \theta^\beta \partial_{\beta\dot\alpha} + \theta^2 \partial^\alpha,\\
    \bar{S}^{\dot\alpha} &= x^{\alpha\dot\alpha} \partial_\alpha
  \end{aligned}
  }
  for the fermionic transformations. Here $E$ again denotes the Euler vector field. We further  note that when $\N=4$, the action factors through the simple quotient $\lie{psl}(4|4)$. 
\end{proof}
\begin{rmk}
  Note that the action of the conformal algebra is modified from its purely bosonic form! While the supervector fields realizing supertranslations remain unaffected, $K$ now contains fermion-dependent terms. However, under the quotient map from functions on superspace to functions on (bosonic) $\C^4$, the vector fields of Proposition~\ref{prop: VFs} are reproduced.  
\end{rmk}

As above in Remark~\ref{rmk: flags}, it is extremely helpful to justify the existence of such vector fields by relating the affine superspace that carries this group action to a symmetric space constructed directly from the superconformal group. In doing this, we follow the excellent discussion in~\cite{HoweHartwell}; the interested reader is referred there for more information. 
\begin{dfn}
  Let $\C^{m|n}$ be a supervector space. A \emph{flag} is a sequence of subobjects in the category of supervector spaces,
  \deq{
    0 \subset V_1 \subset \cdots \subset V_k \subset \C^{m|n}, \quad \dim V_i = m_i|n_i,
  }
  where each containment is strict. As in the usual case, flags are characterized by their \emph{type}, which is the list of dimensions $\{m_i|n_i\}$. These must form a strictly increasing sequence inside of the  poset $\Z_+ \times \Z_+$. We will denote the space of flags of  a  fixed type by~$\Fl(m_i|n_i;m|n)$.
\end{dfn}

\begin{obs}
  The flag manifold $\Fl(m_i|n_i;m|n)$ naturally carries an action of~$GL(m|n,\C)$, exhibiting it as a symmetric space. 
  As in the usual case, we can see this by exhibiting the space of flags as the right quotient of~$GL(m|n,\C)$ by the stabilizer of a standard flag of appropriate type. We form a standard flag of type $m'|n'$ by fixing an ordered basis of~$\C^{m|n}$, considered as a $GL(m|n,\C)$ module in the standard way, and taking the flag spanned by the first $m'$ even and the first $n'$ odd basis vectors. The left $GL(m|n,\C)$ action on the flag variety remains unbroken and gives rise to a subalgebra of the vector fields on~$\Fl(m_i|n_i;m|n)$ representing~$\lie{gl}(m|n)$. For example, the parabolic subgroup stabilizing a standard flag of type $m'|n'$ consists of block matrices of the form
  \deq{
    \left[ \begin{array}{cc|cc}
        * & * & * & * \\
        0 &  * &  0  & * \\ \hline
        *  & * & *  & *  \\
        0 & * & 0  &  *
      \end{array}
    \right]
    \left[ \begin{array}{c}
        * \\ 0 \\ \hline * \\ 0 
      \end{array}
    \right]
  }
  as a subgroup of~$GL(m|n,\C)$.
\end{obs}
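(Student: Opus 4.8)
The plan is to mirror, in the category of supermanifolds, the classical identification of a flag variety with a homogeneous space $G/P$ via an orbit--stabilizer argument, and then to read off the block form of the stabilizer directly. Three things must be established: (i) that $GL(m|n,\C)$ acts transitively on flags of a fixed type $\{m_i|n_i\}$; (ii) that the stabilizer $P$ of the standard flag is the indicated parabolic; and (iii) that the residual left action integrates the copy of $\lie{gl}(m|n)$ acting by vector fields. Parts (i) and (ii) together yield the presentation $\Fl(m_i|n_i;m|n)\cong GL(m|n,\C)/P$, and (iii) then follows formally by differentiating left translation at the identity coset.

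For transitivity, the key observation is that a subobject in the category of supervector spaces is automatically $\Z/2$-graded, so any flag $0\subset V_1\subset\cdots\subset V_k\subset\C^{m|n}$ of type $\{m_i|n_i\}$ is filtered by homogeneous subspaces. One may therefore choose an ordered homogeneous basis adapted to the flag---a basis of the even part of $V_1$, then of its odd part, then extending compatibly up the filtration and finally completing to all of $\C^{m|n}$. Sorting so that the even basis vectors precede the odd ones produces a single element of $GL(m|n,\C)$ carrying the standard flag to the given one; as this can be done for any flag, the action is transitive. In the functor-of-points formulation the same argument is run family-wise over an arbitrary test superalgebra, which is precisely what upgrades the statement from one about $\C$-points to one about supermanifolds.

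To compute $P$, one imposes that $g\in GL(m|n,\C)$ send each standard $V_i$ into itself. Writing $g$ in the even/odd block decomposition, preservation of the nested standard subspaces forces the lower-left blocks---both within each parity sector and in the sectors mixing even with odd---to vanish in exactly the displayed pattern, leaving a block-upper-triangular parabolic; invertibility of $g$ reduces to invertibility of the surviving diagonal blocks, so $P$ is a genuine closed super subgroup. Finally, left multiplication by $GL(m|n,\C)$ acts on $G/P$ by automorphisms of the homogeneous space, and its derivative at the base point produces the fundamental vector fields. Since the action is transitive, these realize all of $\lie{gl}(m|n)$, with $\lie{p}=\operatorname{Lie}(P)$ acting through the isotropy representation on the tangent space $\lie{gl}(m|n)/\lie{p}$; restricting to the open dense cell, where the flag variety is modeled on an affine superspace, recovers the explicit supervector fields used elsewhere in the paper.

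I expect the main obstacle to be formal rather than computational: making precise, in the super setting, the notions of transitive action and of the quotient $G/P$. Over $\C$-points everything is classical, but to genuinely exhibit $\Fl(m_i|n_i;m|n)$ as a supermanifold carrying a $GL(m|n,\C)$-action one must either work consistently in the functor-of-points language---verifying transitivity and the stabilizer computation over every test superalgebra while tracking odd nilpotents---or invoke the general theory of homogeneous superspaces $G/H$ for a closed super subgroup $H$. Once that framework is fixed, the block-matrix computation of $P$ and the differentiation yielding the $\lie{gl}(m|n)$-vector fields are routine; the real care lies in the super-geometric bookkeeping of the even and odd directions.
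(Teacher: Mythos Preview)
Your proposal is correct and follows exactly the approach the paper takes: the orbit--stabilizer identification $\Fl(m_i|n_i;m|n)\cong GL(m|n,\C)/P$, together with the block-form computation of the parabolic and the observation that left translation produces the $\lie{gl}(m|n)$ vector fields. In fact the paper treats this as a self-contained observation with no separate proof---the reasoning is sketched entirely within the statement---so you have supplied considerably more detail (the adapted-basis argument for transitivity, the functor-of-points caveat) than the paper itself does.
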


We are now equipped to give supersymmetric analogues of the construction  of the conformal compactification $\Fl(2;4)$ of four-dimensional affine space in Remark~\ref{rmk: flags}. 
\begin{prop}[\cite{HoweHartwell}]
  The left-chiral $\N$-extended superspace in four dimensions is a dense open subset in~$\Fl(2|0;4|\N)$. Similarly, the right-chiral superspace is a dense open subset in~$\Fl(2|\N;4|\N)$, and the full superspace $\C^{4|4\N}$ admits a compactification to~$\Fl(2|0,2|\N;4|\N)$. 
\end{prop}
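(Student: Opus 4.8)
The plan is to transcribe the bosonic construction of Remark~\ref{rmk: flags} into the super setting, exhibiting each chiral superspace as the opposite big cell of the relevant super-flag manifold. Throughout I write the defining module of $GL(4|\N,\C)$ as $\C^{4|\N} = (S_+ \oplus S_-) \oplus \Pi R$, with $S_\pm$ the two chiral spinors of $\Spin(4)\cong SL(2)\times SL(2)$ sitting block-diagonally in the even part (exactly as in Remark~\ref{rmk: flags}), and $R$ the defining representation of the $U(\N)$ $R$-symmetry. As in that remark, I work with $GL(4|\N,\C)$ rather than $SL$ for convenience, the unit-superdeterminant condition and, for $\N=4$, the quotient to $\lie{psl}(4|4)$ being imposed exactly as there (the center acts trivially on flags, so the action descends). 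Fixing an ordered homogeneous basis adapted to this splitting, the Observation realizes each $\Fl$ as a quotient $GL(4|\N,\C)/P$ by the parabolic stabilizing the standard flag of the given type, and the unbroken left action produces the superconformal vector fields we seek.

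First I would identify, for each of the three types, the nilradical $\bar{\lie{n}}$ of the opposite parabolic, whose exponential $\bar N$ parametrizes the opposite big cell. For the type $2|0$ flag $S_+ \subset \C^{4|\N}$, demanding that a matrix preserve $S_+$ forces precisely the blocks $\operatorname{Hom}(S_+, S_-)$ and $\operatorname{Hom}(S_+, \Pi R)$ to vanish in $\lie{p}$, so that $\bar{\lie{n}} \cong \operatorname{Hom}(S_+, S_-) \oplus \operatorname{Hom}(S_+, \Pi R)$. Lowering the $S_+$ index with $\epsilon_{\alpha\beta}$ identifies this with $(S_+ \otimes S_-) \oplus \Pi(S_+ \otimes R) \cong V \oplus \Pi(S_+ \otimes R)$, of dimension $4|2\N$ --- precisely the left-chiral superspace of the preceding Proposition, with coset representatives the lower-unipotent supermatrices that are the direct analogue of~\eqref{eq:blockX}. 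The identical computation for the type $2|\N$ flag $S_+ \oplus \Pi R$ gives $\bar{\lie{n}} \cong (S_+\otimes S_-) \oplus \Pi(S_- \otimes R^*)$, the right-chiral superspace, again of dimension $4|2\N$; and the two-step flag $S_+ \subset S_+\oplus\Pi R$ yields all three off-diagonal blocks, totalling $V \oplus \Pi(S_+\otimes R) \oplus \Pi(S_-\otimes R^*)$ of dimension $4|4\N$, the full superspace $\C^{4|4\N}$. A dimension count $\dim GL(4|\N) - \dim P$ confirms in each case that $\bar N$ has the same dimension as its flag manifold.

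Next I would show that the orbit map $\bar N \to GL(4|\N,\C)/P$, $n \mapsto nP$, is an open immersion with dense image. Being unipotent, $\bar N$ is isomorphic as a supervariety to the affine superspace $\bar{\lie{n}}$, and its differential at the identity is the inclusion $\bar{\lie{n}} \hookrightarrow \lie{gl}(4|\N)/\lie{p}$, which is an isomorphism by construction of $\bar{\lie{n}}$ as a complement to $\lie{p}$; since the map is moreover injective ($\bar N \cap P = \{e\}$, a standard parabolic fact) and $\bar N$-equivariant, the differential is bijective everywhere, so by the supermanifold inverse function theorem the map is an open immersion. Density is the one genuinely super-geometric point: because the odd coordinates are nilpotent, an open sub-supermanifold is dense exactly when its reduction is dense in the reduced manifold, so the claim reduces to the classical fact that the opposite big cell is the unique top-dimensional Schubert cell. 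The reduction of each of the three super-flag manifolds is the ordinary Grassmannian $\Fl(2;4)$ of Remark~\ref{rmk: flags}, for which this is known; this also exhibits each target as proper, so that the embedding of $\C^{4|4\N}$ is genuinely a compactification, establishing the final clause.

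Finally, restricting the left $GL(4|\N,\C)$-action to the big cell reproduces the explicit supervector fields of the preceding Proposition --- the super-analogue of the last sentence of Remark~\ref{rmk: flags} --- by conjugating the left-invariant vector fields into the coordinates $(x,\theta)$; the fermion-dependent corrections to $K$ flagged in the subsequent remark then appear automatically. I expect the only real obstacle to be organizational rather than conceptual: keeping the three parabolics, their Levi decompositions, and the induced $\Spin(4)\times U(\N)$ content straight at once, and being careful that density is understood in the correct, reduced sense for supervarieties. The geometry itself is a faithful transcription of the bosonic case.
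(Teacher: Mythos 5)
Your proof is correct and takes essentially the approach the paper itself gestures at: the paper gives no argument of its own (it defers to Howe--Hartwell), but the coset representatives it displays immediately after the statement are exactly the exponentials of your opposite nilradicals, i.e.\ the big cells you construct. Your block identifications (e.g.\ $\bar{\lie{n}} \cong V \oplus \Pi(S_+ \otimes R)$ for the type $2|0$ flag, matching the paper's earlier proposition on chiral superspace) and your reduction of density and properness to the underlying reduced manifold $\Fl(2;4)$ supply precisely the details the paper leaves to the reference.
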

We note that the map for the chiral superspaces can be represented by matrices of the form
\deq{
  \left[
    \begin{array}{cc|c}
      1 & 0 & 0 \\
      x^{\alpha\dot\alpha} & 1 & 0 \\ \hline
    \theta^{\alpha}_{i} & 0 & 0 
    \end{array}
  \right],
  \qquad
  \left[
    \begin{array}{cc|c}
      1 & 0 & 0 \\
      x^{\alpha\dot\alpha} & 1 & \bar\theta^{\dot\alpha i} \\ \hline
      0 & 0 & 0 
    \end{array}
  \right]
}
respectively.
The reader is referred to~\cite{HoweHartwell} for the proof and further discussion.
\begin{rmk}
  For future use, it is  helpful to summarize the correspondence for the reader between superconformal generators as tabulated above and a matrix presentation of~$\lie{sl}(4|\N)$ with the following diagram:
  \deq[eq:confmats]{
    \left[
      \begin{array}{cc|c}
        M^\alpha_\beta & K^{\alpha\dot\beta} & S^\alpha_j \\
        P_{\alpha\dot\beta} & \bar{M}_{\dot\alpha}^{\dot\beta} & \bar{Q}_{\dot\alpha j} \\ \hline
        Q_\alpha^i & \bar{S}^{\dot\alpha i} & R^i_j
      \end{array}
    \right]
    ; \qquad
    \Delta = \left[
      \begin{array}{cc|c}
        1 & 0  & 0 \\
        0 & -1 & 0 \\ \hline
        0 & 0 & 0
    \end{array}\right],
    \qquad
    r = \frac{1}{\N-4}\left[
      \begin{array}{cc|c}
        \N & 0 & 0 \\
        0 & \N & 0 \\ \hline
        0 & 0 & 4 
    \end{array}\right].
  }
\end{rmk}

\subsection{Twisted superconformal symmetry}
We now proceed to consider consequences of superconformal symmetry for holomorphically twisted theories. It is sensible to begin by deforming the superconformal algebra to a dg-Lie algebra, using the differential generated by the adjoint action of the holomorphic supercharge, and then by computing its cohomology. The resulting algebra will act on any holomorphically twisted superconformal theory.

\begin{thm}
  \label{thm: twist}
Let $\sca(\N)$ be the complexified superconformal algebra in four dimensions. The cohomology of~$\sca(\N)$ with respect to a holomorphic supercharge is $\sca^\hol(\N) = \lie{sl}(3|\N-1)$ for $\N=1$ and~$2$, or $\lie{psl}(3|3)$ for $\N=4$. 
\end{thm}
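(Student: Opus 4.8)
The plan is to compute the cohomology of $\sca(\N) = \lie{sl}(4|\N)$ (or $\lie{psl}(4|4)$) with respect to the differential $d = [Q,-]$, where $Q$ is a holomorphic supercharge, and to identify the result with $\lie{sl}(3|\N-1)$. The first step is to pin down the supercharge. Among the odd generators tabulated in~\eqref{eq:SCAdiagram}, a holomorphic twist is effected by a supercharge $Q$ squaring to zero whose image under the anticommutator exhausts one of the two antiholomorphic translations—i.e., $Q$ should be a single component of the chiral supersymmetries $Q_\alpha^i$, say $Q = Q_1^1$. The key input here is that $\{Q,Q\}=0$ automatically (the bracket of two chiral supercharges with the same $R$-index vanishes), so $Q$ is genuinely a Maurer--Cartan element and $d^2=0$.

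The most efficient route is through the matrix presentation of~\eqref{eq:confmats}. There, $Q$ is a single nilpotent off-diagonal entry of the $(4|\N)\times(4|\N)$ supermatrix, and the differential $d = [Q,-] = \ad_Q$ is the adjoint action of a nilpotent element of $\lie{sl}(4|\N)$. Computing Lie algebra cohomology with respect to the adjoint action of a fixed element is a purely linear-algebraic problem: $H^\bullet(\lie{sl}(4|\N), \ad_Q)$ is the cohomology of the two-term complex $\lie{sl}(4|\N) \xrightarrow{\ad_Q} \lie{sl}(4|\N)$, i.e., $\ker(\ad_Q)/\im(\ad_Q)$, which inherits a Lie algebra structure because $\ad_Q$ is a derivation squaring to zero. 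So first I would write $Q = E_{ab}$ for the appropriate index pair corresponding to $Q_1^1$ in the block decomposition, and then compute $\ker \ad_Q$ and $\im \ad_Q$ directly on the basis of elementary supermatrices $E_{cd}$, using $[E_{ab}, E_{cd}] = \delta_{bc} E_{ad} \mp \delta_{da} E_{cb}$.

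The cleanest way to organize this calculation—and to recognize the answer—is to exploit that $\ad_Q$ for a nilpotent $Q$ sits inside an $\lie{sl}(2)$ (or $\lie{osp}(1|2)$) triple. Since $Q$ is a single matrix unit $E_{ab}$ with $a\neq b$, one can choose the complementary raising operator (the transpose entry $E_{ba}$) and the semisimple element $H = E_{aa} - E_{bb}$ to form a standard triple. Then $\lie{sl}(4|\N)$ decomposes as a representation of this triple, and the cohomology of $\ad_Q$ on each irreducible is elementary: it is the lowest-weight line of each trivial summand, and vanishes on every nontrivial irreducible. Concretely, the effect of quotienting by this differential is to collapse the two rows/columns $a,b$ participating in $Q$ into a single one, reducing the defining module from $\C^{4|\N}$ to $\C^{3|\N-1}$ and hence identifying the cohomology with $\lie{gl}(3|\N-1)$ modulo a center, i.e., $\lie{sl}(3|\N-1)$. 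I would verify this by matching graded dimensions—the superdimension of $\lie{sl}(4|\N)$ is preserved appropriately under passing to cohomology—and by checking that the residual generators are exactly those commuting with $Q$ modulo those in its image.

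The main obstacle will be the bookkeeping in the $\N=4$ case, where one works with $\lie{psl}(4|4)$ rather than $\lie{sl}(4|4)$. Here $\lie{sl}(4|4)$ fails to be simple because it has a one-dimensional center (the identity supermatrix is supertraceless), and one must take the further quotient by this center and verify that the cohomology commutes with this quotient—i.e., that the center and the $\ad_Q$-differential interact cleanly so that $H^\bullet(\lie{psl}(4|4), \ad_Q) = \lie{psl}(3|3)$ rather than $\lie{sl}(3|3)$. I would handle this by computing in $\lie{gl}$ first, tracking the supertrace and the center throughout, and imposing the supertracelessness and central-quotient conditions at the end, checking that they descend to the analogous conditions on the $3|3$ cohomology. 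A secondary subtlety worth confirming is that the $\Z$-grading by conformal weight in~\eqref{eq:SCAdiagram} is compatible with the cohomological grading, so that the identification with $\lie{sl}(3|\N-1)$ respects the expected weight assignments; this is what ties the abstract computation to the geometric statement (made precise elsewhere in the paper) that the result acts by holomorphic vector fields on $\C^{2|\N-1}$.
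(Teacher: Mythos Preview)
Your proposal is correct and follows essentially the same approach as the paper: identify $Q$ with an elementary matrix unit in the block presentation~\eqref{eq:confmats}, compute $\ker(\ad_Q)$ and $\im(\ad_Q)$ directly on the basis of elementary supermatrices via the commutation relation $[E_{ab},E_{cd}] = \delta_{bc}E_{ad} \mp \delta_{da}E_{cb}$, and read off the quotient as $\lie{sl}(3|\N-1)$. The paper carries this out tersely without invoking the $\lie{sl}(2)$-triple organization you suggest (which is a valid but inessential refinement), and is likewise brief on the $\N=4$ passage to $\lie{psl}$, which you treat more carefully.
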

  We proceed by computing the cohomology directly; it is a quotient of the commutant of~$Q$. To begin, we change from Lorentz indices to holomorphic notation, adapted to the symmetry group left unbroken in $Q$-cohomology; upon breaking $SO(4)$ to~$U(2)$, the left-chiral spinor index becomes a pair of charged singlets, labeled by $\pm$, and the right-chiral spinor becomes the fundamental of the unbroken $SU(2)$. As for $R$-symmetry indices, we break $U(\N)$ to $U(1) \times U(\N-1)$; label the corresponding indices $0$ and~$i$, with position of the index recording fundamental versus antifundamental  representations of  the corresponding groups.
  
  After having done this, we  can represent a basis for the algebra using a diagram  analogous  to~\eqref{eq:SCAdiagram} above:
    \begin{equation}
    \begin{tikzcd}[row sep = 1 ex, column sep = 1 ex]
      & & P^+_{\dot\alpha}, P^-_{\dot\alpha} & &\\
    & Q_+^i, Q_-^i, Q_+^0, Q_-^0  & & \bar{Q}_{\dot\alpha i}, \bar{Q}_{\dot\alpha 0} & \\
    M_{++}, M_{--}, M_{+-} & & \Delta, R^i_j, R^0_j, R^i_0, R^0_0 & & \bar{M}_{\dot\alpha \dot\beta} \\
  & S^{+}_{0}, S^{-}_{0}, S^{+}_i, S^{-}_i & & \bar{S}^{\dot\alpha i}, \bar{S}^{\dot\alpha 0} & \\
    & & K^{\dot\alpha }_+, K^{\dot\alpha}_- & &
    \end{tikzcd}
  \end{equation}
  We choose the holomorphic supercharge to be $Q_+^0$. 
  One can then simply use the commutation relations given above to determine exact pairs; the conclusion is that the holomorphic momenta $P^-_{\dot\alpha}$ and superconformal transformations $K^{\dot\alpha}_+$ survive, together with $\bar{M}_{\dot\alpha\dot\beta}$ and the traceless $R$-symmetry $R_i^j$. The surviving fermions are $\bar{Q}_{\dot\alpha}^i$ and~$Q^i_-$ from the Poincar\'e supercharges, and $\bar{S}^{\dot\alpha i}$ and~$S^-_i$ from the conformal supersymmetries. Together with two additional central generators, these implement the algebra $\lie{sl}(3|\N-1)$.

  We have included this discussion to orient those readers with a physics background, using relatively standard notation, as to which of the conformal symmetries survive in the holomorphically twisted theory. However, a proof of the proposition that is both less cumbersome and more useful can be given by just considering the matrix group $GL(4|\N)$ together with a parabolic subgroup, and computing the holomorphic twist directly. The advantage is that one directly obtains a description  of the twist of the module $GL(4|\N)/P$, with its action of the twisted superconformal algebra.  (As above,  it is convenient to ignore the traceless condition for the moment and restore it later on.)

\begin{proof}[Proof of Theorem~\ref{thm: twist}]
  Identifying the conformal algebra with~$\lie{sl}(4|\N)$ as in~\eqref{eq:confmats} above, the holomorphic supercharge corresponds to the elementary matrix generator $e_{0+}$. (We continue to use the index set~$\alpha,\dot\alpha,i$ for a basis of the supervector space $\C^{4|\N}$; $+$ and~$0$, as in the previous discussion, denote specific values of these indices, and we will use $\mu$ for any element of this basis, chosen without specifying parity or spin.) Using the standard commutation relations, we see that
  \deq{
    [e_{0+}, e_{\mu\nu}] = \delta_{+\mu} e_{0\nu} - (-)^{|\mu\nu|} \delta_{\nu 0} e_{\mu +},
  }
  which immediately implies that 
  $\ker(\ad_Q)$ is spanned by elementary matrices with $\mu\neq +$ and $\nu \neq 0$, together with $e_{00} + e_{++}$, and that
  $\im(\ad_Q)$ is spanned by the elementary matrices $e_{0\nu}$ and~$e_{\mu +}$ (allowing only the diagonal combination $e_{00}+e_{++}$).
  The cohomology is therefore isomorphic to~$\lie{sl}(3|\N-1)$; if we follow the parabolic subgroup defining the chiral superspace through the same computation, we find matrices of the form
  \deq{
    \left[\begin{array}{ccc|c}
        - & - & - & * \\
        0 & - & - & * \\
        0 & - & - & * \\ \hline
        0 & * & * & *
      \end{array}
    \right]
  }
  Looking just at the bosonic part of this calculation (or, equivalently, setting $\N=1$), the reader will recognize the parabolic subgroup defining $\Fl(1;3)\cong\C P^2$ as a maximally symmetric space for the group~$SL(3)$. In general, the resulting coset space is~$\Fl(1|0;3|\N-1)$. Very similarly to the untwisted case, an open dense subset is the holomorphic affine superspace $\C^{2|\N-1} = \Spec \C[z_1,z_2;\epar_i]$, where $\epar_i$ ($1\leq i < \N$) are fermionic scalars. 
\end{proof}

\begin{cor}
  The twisted superconformal algebra $\lie{sl}(3|\N-1)$ acts geometrically by holomorphic supervector fields on~$\C^{2|\N-1}$.
\end{cor}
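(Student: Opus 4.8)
The plan is to read this off directly from the proof of Theorem~\ref{thm: twist}, in which the twist of the homogeneous space $GL(4|\N)/P$ was already identified. First I would recall that $Q_+^0$-cohomology realizes $\sca^\hol(\N)$ as the matrix algebra $\lie{sl}(3|\N-1)$ acting on the reduced supervector space $\C^{3|\N-1}$ spanned by the surviving basis vectors $\{-,\dot\alpha;\, i=1,\dots,\N-1\}$, and that the parabolic $P$ cutting out chiral superspace descends, under the same computation, to the stabilizer of a $1|0$-dimensional subspace. Thus the twist of $GL(4|\N)/P$ is the super-flag variety $\Fl(1|0;3|\N-1)$, carrying the residual action of $\lie{sl}(3|\N-1)$.

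Next I would invoke the observation recorded above: for any parabolic, the left action of $GL(m|n,\C)$ on a super-flag variety is implemented by genuine supervector fields, i.e.\ by a Lie algebra map from $\lie{gl}(m|n)$ into the global holomorphic supervector fields. Specializing to $(m|n)=(3|\N-1)$ and restricting along $\lie{sl}(3|\N-1)\hookrightarrow\lie{gl}(3|\N-1)$ gives an action of the traceless subalgebra on $\Fl(1|0;3|\N-1)$, exactly parallel to the untwisted Proposition. I would then pass to the dense open chart: as in Remark~\ref{rmk: flags}, $\Fl(1|0;3|\N-1)$ contains an open dense $\C^{2|\N-1}=\Spec\C[z_1,z_2;\epar_i]$, parametrized by coset representatives of the single column $[\,1:z_1:z_2:\epar_1:\cdots:\epar_{\N-1}\,]$. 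Since the fundamental vector fields of a linear action on projective-type space are polynomial (degree at most two) in the affine coordinates, and hence holomorphic, their restriction yields the desired supervector fields on $\C^{2|\N-1}$; if wanted, one could write them out explicitly in $(z_1,z_2;\epar_i)$ by the same recipe used for chiral superspace above.

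Finally I would address the projective subtlety. For $\N\neq4$ the algebra $\lie{sl}(3|\N-1)$ is simple and nothing further is needed; for $\N=4$ the identity matrix has vanishing supertrace and spans the center of $\lie{sl}(3|3)$, and since scalars lie in the stabilizer of \emph{every} flag they act trivially on $\Fl(1|0;3|3)$. Hence the action factors through $\lie{psl}(3|3)$, in agreement with Theorem~\ref{thm: twist}.

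The hard part here is conceptual rather than computational: one must read ``acts geometrically by holomorphic supervector fields'' as the existence of a Lie algebra homomorphism into the \emph{regular} supervector fields on the affine chart, and not as a group action preserving that chart---indeed the finite transformations analogous to the special conformal $K$ move points out of $\C^{2|\N-1}$. What makes the statement go through cleanly is that regularity on the chart is automatic, since the relevant fundamental vector fields are polynomial; the only remaining task is the routine bookkeeping of parities under restriction.
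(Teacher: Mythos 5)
Your proposal is correct and takes essentially the same route as the paper: the corollary is read off from the proof of Theorem~\ref{thm: twist}, which identifies the twisted coset space as $\Fl(1|0;3|\N-1)$ with dense affine chart $\C^{2|\N-1}$, combined with the earlier observation that the left action of the general linear supergroup on a super-flag variety is realized by holomorphic supervector fields. The only difference is presentational: where you argue abstractly that the fundamental vector fields of a linear action are polynomial (at most quadratic) in the affine coordinates, the paper instead exhibits them explicitly ($p_i = \partial/\partial z_i$, $m_{ij} = z_i\,\partial/\partial z_j$, $k_i = z_i e$, together with their odd companions), and your remark that the $\N=4$ action factors through $\lie{psl}(3|3)$ spells out a point the paper leaves implicit.
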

Indeed, it is easy to describe these vector fields explicitly. 
In the case $\N=1$, no fermions remain, so that we just need to give an action of~$\lie{sl}(3)$ by holomorphic vector fields on~$\C^2$. A straightforward calculation shows that the vector fields
  \deq{
    p_i = \pdv{}{z_i}, \qquad
      m_{ij} = z_i \pdv{ }{z_j} , \quad
      k_i = z_i e 
  }
  give the desired module structure.
  Here $e = z_i \partial/\partial{z_i} = \tr(m)$ is the holomorphic Euler vector field.

  In the general case, we need to add additional even vector fields to implement the $R$-symmetry, as well as fermionic vector fields in the appropriate representations of~$\lie{sl}(3)\times\lie{sl}(\N-1)$. We must also modify the vector field implementing the conformal weight to 
  \deq{
    z_i \pdv{ }{z_i} + \frac{1}{2} \epar \pdv{ }{\epar},
  }
  although this of course just amounts to a change of basis in the Cartan subalgebra. Further, we must replace the Euler vector field in the definition of the generators $k_i$ by
  $\hat e = z_i \partial/\partial{z_i} + \epar \partial/\partial\epar$
  (in the case $\N=2$). The needed odd vector fields in this case are
  \deq{
    \pdv{ }{\epar}, \quad \epar\pdv{ }{z_i}
  }
  for positive conformal weight, and 
  \deq{
    z_i\pdv{ }{\epar}, \quad \epar e
  }
  for negative conformal weight. In general, we obtain one copy of this for every odd parameter; the $R$-symmetry is of course implemented by the vector fields $\epar_i \partial/\partial \epar_j$.


\section{Derived structures in complex geometry}
\label{sec:hol}

We will demonstrate that there are natural enhancements of the twisted superconformal algebras computed above to certain infinite-dimensional Lie algebras. 
The Lie algebras will be defined as the {\em derived} global holomorphic sections of certain (graded) holomorphic vector bundles which enlarge the holomorphic tangent bundle. 
We will use a convenient model for derived sections given by the Dolbeault resolution of a holomorphic vector bundle.
Throughout, the reader should bear in mind the familiar process in two dimensions by which the holomorphic M\"obius transformations are enhanced to the Witt algebra of holomorphic vector fields on~$\C^\times$ and subsequently centrally extended in the quantum theory to the Virasoro algebra; central extensions of higher symmetry algebras will be discussed below in~\S\ref{sec: fact}.  

In parallel, we introduce a similar enhancement of global symmetry in holomorphically twisted theories to a local Lie algebra, analogous to Kac--Moody symmetry in two dimensions. 
Additionally, we review results on holomorphic twists of supersymmetric field theories in four dimensions. 


\subsection{Local Lie algebras and symmetries}
\label{ssec: localLie}

A local Lie algebra is a graded vector bundle $L^\bu$ on $X$ equipped with differential and bi-differential operators which turn the corresponding sheaf of sections $\cL^\bu$ into a sheaf of dg Lie algebras. \footnote{There is also a version of this for $L_\infty$-algebras, in which the structure maps are required to be poly-differential operators.}
For a precise definition we refer the reader to~\cite[Definition 6.2.1]{CG1}. 

Throughout this section we fix a smooth complex surface $X$. 
There are two varieties of local Lie algebras on $X$ that will be of interest to us: (1) Lie algebras of holomorphic currents which arise as resolutions of the sheaf of holomorphic $\fg$-valued functions on $X$, and (2) Lie algebras of holomorphic vector fields on $X$. 

\subsubsection{Lie algebras of holomorphic currents}
\label{sssec: localcur}

These local Lie algebras are the natural enhancements of global (flavor) symmetries in holomorphically twisted theories.
\begin{dfn}
Let $\fg$ be a Lie algebra.
The local Lie algebra of {\em $\cN$-extended holomorphic $\fg$-currents} on a complex surface $X$ is
\begin{equation}
  \cG_{\cN}^\bu (X) = \Omega^{0,\bu}(X, \fg \otimes_{\CC} A),
\end{equation}
where $A = \O(\C^{\N-1}[1]) = \CC[\varepsilon_1,\ldots,\varepsilon_{\N-1}]$, and $\varepsilon_i$ are variables of cohomological degree $-1$. 
\end{dfn}

The algebras just mentioned live naturally over a complex manifold $X$ of any dimension. 
Indeed, when $\cN=1$, it is simply given as the Dolbeault complex on $X$ with values in $\fg$.
For extended supersymmetry, we can also give a geometric interpretation that thinks of them as objects living over a certain graded space.

Let $X$ be a complex manifold of dimension $d$.
For any $m \geq 0$, define the graded space $X^{d|m}$ to have graded ring of functions 
\[
  \cO(X^{d|m}) = \cO(X) [\varepsilon_1,\ldots, \varepsilon_{m}] = \O(X) \otimes \O(\C^m[1]),
\]
where $\varepsilon_i$ are variables of cohomological degree $-1$.
Note that we here treat the odd directions as completely algebraic, and will persist in this convention. 
Thus, for instance, when we write $\Omega^{p,q}(X^{d|m})$ we mean forms of type $(p,q)$ on $X$ with values in the graded ring $\CC[\varepsilon_1,\ldots, \varepsilon_m]$. 

Another way of describing this operation is to say that we are forming the trivial holomorphic bundle with fiber $\C^{m}$ over~$X$, and then defining $X^{d|m}$ to be its parity shift. Of course, there are many other supermanifolds with body~$X$---we could, for example, consider the parity shift of an arbitrary holomorphic bundle---but this family are appropriate for our present purposes. This is indicated by the fact that, after the holomorphic twist of the module of chiral superfields in~\S\ref{sec: twist}, all remaining fermions transformed as scalars under the structure group.

We can thus interpret the $\N$-extended holomorphic currents as just consisting of the Dolbeault complex with coefficients in~$\lie{g}$, but taken on the $\N$-extended space $X^{2|\N-1}$:
\deq{
  \cG_\cN^\bu (X) = \Omega^{0,\bu}(X,\lie{g}) \otimes_\C \O(\C^{\N-1}[1]) = \Omega^{0,\bu}(X^{2|\N-1}, \lie{g}).
}

\subsubsection{Holomorphic vector fields}
\label{sssec: localVF}

Let $T_X$ be the holomorphic tangent bundle.
There is a natural resolution of this sheaf by vector bundles given by the Dolbeault complex
\[
\fX_1^{\bu}(X) = \Omega^{0,\bu}(X, T_X) 
\]
which is equipped with its natural $\dbar$ operator. 
(The subscript $1$ is to be consistent with notation that we will introduce momentarily.)
On a $\dbar$-acyclic open set, this resolution is quasi-isomorphic to holomorphic vector fields. 
The Lie bracket of holomorphic vector fields extends naturally to $\fX_1^\bu(X)$ to give it the structure of a sheaf of dg Lie algebras. 

The differential and bracket on $\fX_1^{\bu}(X)$ are given by differential and bidifferential operators, respectively. 
Thus, $\fX_1^\bu(X)$ defines a local Lie algebra on $X$. 
As a sheaf of dg Lie algebras, $\fX_1^\bu(X)$ is equivalent to the sheaf of holomorphic vector fields. 
However, the sheaf of holomorphic vector fields is {\em not} a local Lie algebra since it is obviously not given as the $C^\infty$-sections of a vector bundle.
We will refer to $\fX_1^\bu(X)$ as the {\em local Lie algebra of holomorphic vector fields} (and will omit the bullet for cohomological degree unless necessary). 

Let us introduce an extended version of this dg Lie algebra.
For $A$ a graded commutative algebra, we denote its graded Lie algebra of derivations by ${\rm Der}(A)$. 

\begin{dfn}\label{dfn: vf}
The local Lie algebra of {\em $\cN$-extended holomorphic vector fields} on a complex surface $X$ is
\[
\fX_{\cN}^\bu (X) = \left( \Omega^{0,\bu}(X, T X) \otimes_\CC A \right) \bowtie \left( \Omega^{0,\bu}(X) \otimes_\CC {\rm Der}(A) \right) 
\]
where $A = \O(\C^{\N-1}[1])$ as above. 
(Note that $\N=1$ extended holomorphic vector fields are just holomorphic vector fields again, since no fermions survive the twist of the $\N=1$ algebra.) 

The symbol $\bowtie$ here denotes a direct sum of dg vector spaces, but equipped with a different Lie algebra structure.
The desired dg Lie structure can be described concretely as follows:
\begin{itemize}
\item the differential is $\dbar$ on both summands in the above decomposition;
\item the Lie bracket on $\Omega^{0,\bu}(X, T X) \otimes_\CC A$ is obtained from tensoring the ordinary Lie bracket on vector fields with the graded commutative product on $A$. 
That is, if $X \otimes a$ and $X' \otimes a'$ are sections, then the bracket is
\[
[X \otimes a, X' \otimes a'] = [X,Y] \otimes a a' ;
\]
\item the Lie bracket on $\Omega^{0,\bu}(X) \otimes_\CC {\rm Der}(A))$ is obtained from tensoring the graded commutative wedge product on differential forms with the Lie bracket on derivations on $A$.
That is, if $\omega \otimes D$ and $\omega' \otimes D'$ are sections, then the bracket is
\[
[\omega \otimes D , \omega' \otimes D'] = (\omega \wedge \omega') \otimes [D,D'] ;
\]
\item the remaining brackets are through the Lie derivative of holomorphic vector fields on $X$ and the natural action of~$\Der(A)$ on~$A$.
\end{itemize}
\end{dfn}

Just as in the case of the current algebras associated to a Lie algebra, there is an interpretation of these local Lie algebras of vector fields as vector fields living on a certain graded manifolds. 
If $X$ is a complex manifold and $\cN \geq 1$, we have the graded manifold $X^{d|\cN-1}$. 
Its holomorphic tangent bundle $T X^{d|\cN-1}$ has as its space of sections $\Gamma(X^{d|\cN-1}, T X^{d|\cN-1})$ which splits as a vector space $\Gamma^{hol}(X, TX)[\ep_1,\ldots, \ep_{\cN-1}] \oplus \cO^{hol}(X) \otimes \Der(\CC[\ep_1,\ldots, \ep_{\cN-1}]$).
The local Lie algebra $\fX_{\cN}$ is a resolution of this sheaf of holomorphic section, where we only resolve by forms on the manifold, and treat the odd directions as algebraic.  


\begin{notation}
When $X = \CC^2$ we will abbreviate the local Lie algebras $\cG_{\cN}(\CC^2)$ and $\fX_{\cN}(\CC^2)$ by $\cG_{\cN}$ and $\fX_{\cN}$ respectively. 
\end{notation}

%

\subsection{Holomorphic theories on complex surfaces}
\label{ssec: theories}

In this section we introduce some classes of holomorphic field theories defined on complex surfaces.
We work inside of the BV formalism so that the space of fields is equipped with a $(-1)$-shifted symplectic pairing, see \cite{CosRenorm, CG2, CostelloHol} for the requisite background. 
We recall how these theories arise as holomorphic twists of $\cN=1,2,4$ supersymmetric Yang--Mills theory in four dimensions in the next section.

We start with the simplest holomorphic gauge theory, which is a holomorphic analog of a familiar topological theory. 

\begin{dfn}
Let $\fh$ be a $\ZZ$-graded Lie algebra \footnote{A similar definition applies for any $L_\infty$ algebra.} and $X$ a complex surface.
{\em Holomorphic BF theory} on $X$ with values in $\fh$ is the BV theory whose fields are
\begin{align*}
\ul{A} & \; \in \Omega^{0,\bu}(X, \fh)[1] \\
\ul{B} & \; \in \Omega^{2,\bu}(X, \fh^\vee) .
\end{align*}
with action functional 
\[
S(\ul{A}, \ul{B}) = \int_X \<\ul{B} \wedge F_{\ul{A}}\>_{\fh} = \int_X \<\ul{B} \wedge \dbar \ul{A}\>_{\fh} + \frac{1}{2} \int_X \<\ul{B} \wedge [\ul{A}, \ul{A}]\>_{\fh}
\]
where $\<-,-\>_{\fh}$ denotes the graded symmetric pairing between $\fh$ and $\fh^\vee$. 
\end{dfn}

If $\fh$ is equipped with a graded {\em skew}-symmetric invariant pairing $\<-\>$ and $X$ is equipped with a holomorphic volume form $\Omega$ , then there is a different action we can write down
\[
S (\ul{A}) = \int_X \Omega \wedge \<\ul{A} \wedge F_{\ul{A}}\> = \frac12 \int_X \<\ul{A} \wedge \dbar \ul{A}\> + \frac{1}{6} \int_X \<\ul{A} \wedge [\ul{A}, \ul{A}]\>
\]
which only depends on the field $A$. 
We refer to this as {\em holomorphic Chern--Simons theory} on $X$ with values in $\fh$. 

\subsubsection{}

We now turn to field theories describing holomorphic analogs of matter and linear $\sigma$-models.

\begin{dfn}
Let $\VV$ be a finite dimensional graded vector space and $L$ a line bundle on a complex surface $X$. 
The {\em holomorphic $\beta\gamma$ system on $X$}, twisted by $L$, with values in $\VV$, is the the BV theory whose fields are
\begin{align*}
\ul{\gamma} \in \; & \Omega^{0, \bu}(X, L) \otimes \VV \\
\ul{\beta} \in \; & \Omega^{2, \bu}(X, L^\vee) \otimes \VV^\vee [1] 
\end{align*}
with action functional
\[
S(\ul{\beta},\ul{\gamma}) = \int_{X} \<\ul{\beta} \wedge \dbar \ul{\gamma}\>_{L \otimes \VV} .
\]
Here, the braces $\<-,-\>_{L \otimes \VV}$ denotes the graded symmetric pairing between sections of $L \otimes \VV$ and $L^\vee \otimes \VV^\vee = (L \otimes \VV)^\vee$.
\end{dfn}

The graded vector space $\VV$ may not be concentrated in a single degree, as this example indicates. 

\begin{eg}
A typical example concerns the graded vector space $\VV = V[\varepsilon] = V[1] \oplus V$, where $V$ is an ordinary vector space and $\varepsilon$ is a formal parameter of degree $-1$. 
In this case, we can use the Berezin integral to identify
\[
\VV^\vee \cong V^\vee[\varepsilon][-1] .
\]
The pairing between $\VV$ and $\VV^\vee$ is 
\[
(v + \varepsilon v' , \phi + \varepsilon \phi') \mapsto \int_{\CC^{0|1}} \<v + \varepsilon v' , \phi + \varepsilon \phi'\>_V = \<v, \phi'\> + \<v', \phi\>
\]
where $\<-,-\>_V$ is the dual pairing between the ordinary vector spaces $V$ and $V^\vee$. 

Thus, for this particular $\VV = V[\varepsilon]$ we can think of the $\beta\gamma$ system twisted by $L$ as a theory on the graded manifold $X^{2|1} = T[-1] X$, where the fields are
\begin{align*}
\ul{\gamma} \in \; & \Omega^{0, \bu}(X^{2|1}, L) \otimes V \\
\ul{\beta} \in \; & \Omega^{2, \bu}(X^{2|1}, L^\vee) \otimes V^\vee 
\end{align*}
and the action is
\[
S(\ul{\beta},\ul{\gamma}) = \int_{X^{2|1}} \<\ul{\beta} \wedge \dbar \ul{\gamma}\> .
\]
\end{eg}

\begin{eg}
We can further simplify a special case of this theory when we have made an additional choice on the complex surface $X$. 
Suppose we choose a square root of the canonical bundle on $X$.
Then, the $\beta\gamma$ system, twisted by $L = K_X^{\frac12}$, with values in $\VV = V[\varepsilon]$ is equivalent to the theory with a single set of fields given by
\[
\ul{\varphi} \in \Omega^{0,\bu}(X, K_X^{\frac12}) \otimes T^*V \otimes \CC[\varepsilon] = \Omega^{0,\bu}(X^{2|1}, K_X^{\frac12}) \otimes T^*V
\]
where the action is
\[
S(\ul{\varphi}) = \int_{X^{2|1}} \<\ul{\varphi} \wedge \dbar \ul{\varphi}\> .
\]
\end{eg}

This example leads us to the following special case of a higher $\beta\gamma$ system.

\begin{dfn}\label{dfn: boson}
Let $(U, \omega)$ be a symplectic vector space and $K_X^{\frac12}$ a choice of a square root of the canonical bundle on the complex surface $X$.
The {\em holomorphic symplectic boson} system on $X$ with values in $Z$ is the BV theory whose fields are
\[
\ul{\varphi} \in \Omega^{0,\bu}(X^{2|1}, K_X^{\frac12}) \otimes U 
\]
which we write in components as $\ul{\varphi} = \varphi + \varepsilon \varphi' \in \Omega^{0,\bu}(X, K_X^{\frac12}) \otimes U [\varepsilon]$.
The action is
\[
S(\ul{\varphi}) = \frac{1}{2} \int_{X^{2|1}} \omega(\ul{\varphi} \wedge \dbar \ul{\varphi}) = \int_X \omega(\varphi \wedge \dbar \varphi') .
\]
\end{dfn}

\begin{rmk}
More generally, one can consider a $\sigma$-model of the form 
\[
X \to T[-1] U
\]
where $(U,\omega)$ is an arbitrary holomorphic symplectic {\em manifold}. 
After twisting by $K_X^{\frac12}$, the AKSZ construction endows the (derived) space of maps ${\rm Map} (X, T[-1] U)$ form with a $(-1)$-shifted symplectic structure. 
\end{rmk}

To write the theory in the notation of the $\beta\gamma$ system, we simply take the symplectic vector space $Z = T^*V$. 

\subsection{Holomorphic Noether currents} 
\label{sec:noether}

In the BV formalism the space of local functionals $\oloc$ is equipped with the BRST operator which can be written as $\{S,-\}$ with $S$ the classical BV action. 
Furthermore, the BV bracket $\{-,-\}$ encodes the cohomological shift by one $\oloc [-1]$ of the space of local functionals with the structure of a dg Lie algebra. 
The first cohomology $H^1$ of this dg Lie algebra encodes infinitesimal deformations of the theory.
The zeroth cohomology $H^0$ encodes infinitesimal automorphism. 
A symmetry by a Lie algebra $\fg$ is given by a map of Lie algebras $I \colon \fg \to \oloc [-1]$. 
For $X \in \fg$ the local functional $I_X$ should be thought of as a local version of the Noether current associated to the symmetry $X$. 
For more on this perspective we refer to \cite[Chapter 3]{CG2}. 

Consider holomorphic BF theory for the graded Lie algebra $\fg [\ep_1,\ldots, \ep_{\cN-1}]$. 
This theory has a symmetry by the Lie algebra $\fX_{\cN}$ of holomorphic vector fields on $X^{2|\cN-1}$ defined by the following Noether current 
\beqn\label{eqn:noethervf}
I (\ul{\xi}) = \int_X \<\ul{B} \wedge L_{\ul{\xi}} \ul{A}\> .
\eeqn
Where $L_{\ul{\xi}}$ denotes the Lie derivative by the graded vector field $\ul{\xi} \in \fX_{\cN}$. 

It is immediate to check that $\ul{\xi} \mapsto I(\ul{\xi})$ defines a map of Lie algebras from $\fX_\cN$ to the space of local functionals. 
This is equivalent to the following master equation
\[
\d_{CE} I + \{S, I\} + \frac12 \{I, I\} = 0 
\]
where $\d_{CE}$ is the Chevalley--Eilenberg differential for holomorphic vector fields, $S$ is the classical action of BF theory, and $\{-,-\}$ denotes the BV bracket. 

Abstractly, the existence of this symmetry is manifest. 
Consider the case $\cN=1$, for simplicity. 
Holomorphic BF theory describes the formal moduli space of holomorphic $G$-bundles on $X$ near the trivial $G$-bundle. 
Deformations of the $\dbar$ operator on the trivial $G$-bundle are of the form $\dbar + A$ where $A$ is some $\Omega^{0,1}(X)$ form satisfying the Maurer--Cartan equation $\dbar A + \frac12 [A,A] = 0$. 
We can also consider deformations of the complex structure which modifies the $\dbar$-operator to $\dbar + \mu$ with $\mu \in \Omega^{0,1} (X, T_X)$ satisfying the Maurer--Cartan equation $\dbar \mu + \frac12 [\mu, \mu] = 0$. 
The space of infinitesimal automorphisms of such a deformation are holomorphic vector fields on $X$. 

For the $\beta\gamma$ system valued in $\VV = V[\ep_1,\ldots, \ep_n]$ there is a completely similar formula for the symmetry by $\cN$-extended holomorphic vector fields.
Next, suppose that $V$ is a $\fg$-representation. 
Then, the $\cN$-extended holomorphic current algebra $\cG_\cN$ is also a symmetry with local Noether currents defined by 
\beqn\label{eqn:noetherg}
I (\ul{\alpha}) = \int \<\ul{\beta} \wedge (\ul{\alpha} \cdot \ul{\gamma})\> .
\eeqn
There is a completely similar formula for a current algebra symmetry on theory of the holomorphic symplectic boson. 

\section{Derived symmetry enhancement}
\label{sec:enhance}

In this section we reach the main conceptual leap of this work.
We show that upon performing the holomorphic twist of a four-dimensional supersymmetric gauge theory there is a significant enhancement of symmetries. 
We focus on two classes of symmetries present in the (untwisted) supersymmetric theory: global (flavor) symmetries and superconformal symmetries. 
Global symmetries are described by a finite dimensional Lie group.
Further, in \S \ref{sec: twist} we have recalled the finite dimensional algebras describing superconformal symmetries. 
In both cases we find an enhancement of symmetries to infinite dimensional Lie algebras which are of a similar spirit to affine and Virasoro algebras in chiral CFT.

\subsection{A catalog of results about twisting}
\label{ssec: twisted theories}

We summarize the characterization of the holomorphic twists of four-dimensional supersymmetric Yang--Mills theories, see \cite{CostelloHol, ESW} for a formulation of these results in a manner which is closest to our setup.

Recall, the (complexified) supertranslation algebra in four dimensions is the $\ZZ/2$-graded Lie algebra 
\[
\ft_{\cN} = \CC^4 \oplus \Pi(S_+ \otimes \CC^{\cN} \oplus S_- \otimes \CC^\cN)
\]
where $\CC^4$ is the complexified abelian Lie algebra of translations, and $S_\pm$ are the positive/negative spin representations of $\fs \fo(4)$. 
There is a nontrivial Lie bracket determined by Clifford multiplication
\[
\Gamma : S_+ \otimes S_- \to \CC^4 .
\]
For more details on supersymmetry algebras, we recommend~\cite{NV} or~\cite[\S3.1]{Chris}. 

By a holomorphic supercharge, we mean an odd square-zero element of the supertranslation algebra $Q \in \ft_{\cN}$ such that the image of $[Q,-]$ (which lies in $\CC^4$) is of rank two.  
To this data, one defines the holomorphically twisted theory as in~\cite[\S 15]{CostelloHol}
It was observed in~\cite{CostelloHol, Chris} that such a supercharge always exists in four dimensions, and any two choices of a holomorphic supercharge gives rise to equivalent theories up to conjugation.

We summarize the results of twisting with respect a holomorphic supercharge, starting with $\cN=1$ supersymmetry. 

%

\begin{prop}[Well-known; for various treatments, see~\cite{ESW, Johansen, Yangian, CostelloHol, ISBW}] \label{prop: 1twist}The holomorphic twist of $\cN=1$ super Yang--Mills with values in an ordinary Lie algebra $\fg$ coupled to the chiral multiplet with values in a representation $V$ is equivalent to the coupled holomorphic $BF$-$\beta\gamma$ system where $\fh = \fg$ and $\VV = V$. 
\end{prop}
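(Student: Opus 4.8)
The plan is to compute the twist directly at the level of the classical BV theory, handling the free multiplets first and then reinstating the interactions. First I would write down an off-shell BV description of $\cN=1$ super Yang--Mills coupled to a chiral multiplet valued in $V$: the vector multiplet contributes a connection, a gaugino, and an auxiliary field (together with the ghost $c$ and all antifields), while the chiral multiplet contributes a complex scalar, a Weyl fermion, and an auxiliary field. Keeping the auxiliary fields is important, since it makes the holomorphic supercharge act strictly square-zero on $\sE$ and avoids equation-of-motion subtleties in the fermionic sector.

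Next I would deform the free BV differential on $\sE$ by the action of the holomorphic supercharge $Q$ of Theorem~\ref{thm: twist} and compute cohomology. The mechanism that makes everything holomorphic is the defining property of $Q$: since the image of $[Q,-]$ is a two-dimensional subspace of the translations, two of the four translations---the antiholomorphic ones $\partial_{\bar z_1},\partial_{\bar z_2}$---become $Q$-exact, so the twisted free theory depends only holomorphically on $\CC^2$ and is presented by a Dolbeault complex. Decomposing each component field under the unbroken $U(2)$ and its $R$-charge, I would check that the vector-multiplet fields reorganize into $\ul A\in\Omega^{0,\bu}(\CC^2,\fg)[1]$ (with the ghost as the degree-zero piece) and $\ul B\in\Omega^{2,\bu}(\CC^2,\fg^*)$, i.e. the free holomorphic $BF$ complex, while the chiral-multiplet fields reorganize into $\ul\gamma\in\Omega^{0,\bu}(\CC^2)\otimes V$ together with its antifield $\ul\beta$, i.e. the free $\beta\gamma$ complex. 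This component computation can be cross-checked against the superspace picture of \S\ref{sec: twist}: the twist of $\cN=1$ chiral superspace $\CC^{4|2}$ is exactly $\CC^{2}$, so a chiral superfield restricts to a holomorphic function on $\CC^2$, the lowest component of $\ul\gamma$.

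Finally I would reinstate the interactions. Because $Q$ is a symmetry of the full supersymmetric action, the deformation is $Q$-compatible and descends to cohomology; transferring the cubic and quartic vertices along the quasi-isomorphism of the previous step, I would show that the Yang--Mills self-interaction yields the cubic $BF$ vertex $\tfrac12\int\langle\ul B,[\ul A,\ul A]\rangle$ and that minimal gauge--matter coupling yields the term $\int\langle\ul\beta,\ul A\cdot\ul\gamma\rangle$ covariantizing the $\beta\gamma$ action (where $\ul A\cdot\ul\gamma$ denotes the $\fg$-action on $V$). Assembling these reproduces the coupled holomorphic $BF$--$\beta\gamma$ action with $\fh=\fg$ and $\VV=V$; as a consistency check, the resulting theory is manifestly of cotangent type, in line with Proposition~\ref{prop: twist cotangent}.

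I expect the main obstacle to be this interaction-matching step rather than the free computation. One must verify both that no spurious vertices survive in $Q$-cohomology and that the surviving couplings assemble \emph{exactly} into the claimed action---equivalently, that the homotopy transfer of the interacting piece introduces no higher $L_\infty$ corrections beyond those already present in $BF$--$\beta\gamma$. The off-shell bookkeeping of the gaugino and auxiliary sector, ensuring that the $B$-field and the antifields of $\ul A$ land in the correct Dolbeault bidegrees, is the other place where care is needed.
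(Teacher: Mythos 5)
The first thing to note is that the paper contains no proof of Proposition~\ref{prop: 1twist} at all: it is stated as well-known, with the argument deferred to the cited references, so there is no internal argument to measure your proposal against. What you outline is essentially the standard computation carried out in those references: work in an off-shell BV presentation (auxiliary fields retained so that the chosen chiral supercharge squares to zero strictly), add $Q$ to the linearized BV differential, use the $Q$-exactness of the two antiholomorphic translations to identify the free twisted complex with the Dolbeault complexes underlying holomorphic $BF$ and $\beta\gamma$, and then carry the interactions across. Your diagnosis that the interaction-matching step is the real content is accurate; in the literature it is controlled in one of two ways. Either one first rewrites the four-dimensional action in a first-order form in which everything other than the prospective $BF$--$\beta\gamma$ terms is manifestly $Q$-exact, so that no transfer is needed, or one chooses the contraction so that the eliminated fields (the auxiliary fields, $A^{1,0}$, and the non-surviving fermion components) sit in contractible pairs that the surviving vertices never re-enter, which forces the transferred $L_\infty$ structure to be strict. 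Supplying either refinement would turn your outline into a complete proof.

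Two smaller points need attention. First, with the paper's convention $\ul{A}\in\Omega^{0,\bu}(\CC^2,\fg)[1]$, the ghost $c\in\Omega^{0,0}$ sits in cohomological degree $-1$, and the degree-zero piece of $\ul{A}$ is the partial connection $A^{0,1}$; your parenthetical reverses this. Second, be careful with the claim that ``the Yang--Mills self-interaction yields the cubic $BF$ vertex.'' In your second-order setup the only physical field that can supply the $(2,0)$-form component of $\ul{B}$ is a gaugino component, and the four-dimensional action has no vertex with one gaugino and two gauge bosons; so the vertex $\tfrac{1}{2}\int\langle\ul{B},[\ul{A},\ul{A}]\rangle$ does not descend from any single term of the physical action. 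It arises only after the transfer (or in the first-order reformulation), with several four-dimensional terms conspiring. This is exactly the bookkeeping you flag at the end, and it is where a naive term-by-term matching of vertices would fail.
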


Next, we move on to $\cN=2$ supersymmetry. 

\begin{prop}[\cite{CostelloHol,ESW}] \label{prop: 2twist}
The holomorphic twist of $\cN = 2$ super Yang--Mills with values in an ordinary Lie algebra $\fg$ coupled to the hypermultiplet with values in a symplectic representation $V$ is equivalent to holomorphic $BF$ theory with values in $\fh = \fg [\varepsilon]$ coupled to the holomorphic symplectic boson with values in $Z$. 
\end{prop}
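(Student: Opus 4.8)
The plan is to reduce the computation to the $\cN=1$ result of Proposition~\ref{prop: 1twist}. First I would decompose the $\cN=2$ field content into $\cN=1$ multiplets: the $\cN=2$ vector multiplet is an $\cN=1$ vector multiplet together with an adjoint-valued $\cN=1$ chiral multiplet, while the $\cN=2$ hypermultiplet valued in the symplectic representation $V$ is a pair of conjugate $\cN=1$ chiral multiplets valued in a Lagrangian splitting $V\cong W\oplus W^*$. Since a holomorphic supercharge lies in $S_+\otimes\C^2$ and may be taken to span a single line, one can choose $Q$ inside a fixed $\cN=1$ subalgebra $\ft_{\cN=1}\subset\ft_{\cN=2}$; the $\cN=2$ holomorphic twist of each multiplet then agrees with its $\cN=1$ twist, and the task becomes to show that the surviving second supersymmetry—equivalently, the residual $R$-symmetry that supplies the odd coordinate $\varepsilon$ of $\C^{2|1}$ in Theorem~\ref{thm: twist}—organizes these pieces into the claimed theory on $X^{2|1}$.

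For the gauge sector, Proposition~\ref{prop: 1twist} twists the $\cN=1$ vector to holomorphic $BF$ theory with $\fh=\fg$ and the adjoint chiral to the $\beta\gamma$ system valued in $\fg$. Identifying $\fg\cong\fg^*$ with the Killing form, I would assemble $(A,B)$ and $(\gamma,\beta)$ into a single $BF$ field valued in $\fg[\varepsilon]=\fg\otimes\O(\C[1])$: the degree-zero part of $\fg[\varepsilon]$ carries the gauge field and the $\varepsilon$-component carries the adjoint chiral. Expanding the $\fg[\varepsilon]$-valued action $\int\langle\underline B,\dbar\underline A+\frac{1}{2}[\underline A,\underline A]\rangle$ reproduces the $BF$ cubic vertex together with the minimal (adjoint) coupling of $\gamma,\beta$ to $A$, and the relation $\varepsilon^2=0$ correctly forbids any $[\gamma,\gamma]$ term—exactly matching the interactions fixed by $\cN=2$ supersymmetry in the vector multiplet.

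The matter sector is handled analogously: the two $\cN=1$ chirals of the hypermultiplet twist to two $\beta\gamma$ systems which, after choosing $K_X^{1/2}$ and using the symplectic pairing on $V$, repackage into the holomorphic symplectic boson valued in $Z=V$ over $X^{2|1}$, exactly as in the example worked out before Definition~\ref{dfn: boson} (with that example's base taken to be the Lagrangian $W$). I expect the main obstacle to be the interactions rather than the free-field identifications. At the free level the matching is the representation-theoretic computation of $Q$-cohomology of the $\cN=2$ multiplets, the only delicacy being careful bookkeeping of ghosts and antifields so that the answer lands in the stated BV content on $X^{2|1}$. The genuine work is to verify that the $\cN=2$-specific couplings—the cubic superpotential coupling the adjoint chiral to the hypermultiplet, together with the $D$- and $F$-term potentials—are either $Q$-exact or reorganize precisely into the moment-map coupling of the symplectic boson to $BF$ theory through the $\fg[\varepsilon]$-action on $Z$, with no residual terms. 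This is cleanest to establish at the level of Maurer--Cartan elements: present the interacting twist as a deformation of the free twist by a degree-one local functional, and check that under the field identifications above this functional is carried to the Maurer--Cartan element defining coupled $BF$--symplectic-boson theory.
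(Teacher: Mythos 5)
First, a point of comparison: the paper offers no proof of Proposition~\ref{prop: 2twist} at all---it is imported from \cite{Butson, ESW}, where it is established by direct computation in the BV formalism (write out the $\cN=2$ multiplets as a BV theory, take $Q$-cohomology of the free theory, and transport the interactions by homotopy transfer). Your route is genuinely different and viable: since the holomorphic supercharge can be chosen inside an $\cN=1$ subalgebra $\ft_{\cN=1}\subset\ft_{\cN=2}$, the twist of the $\cN=2$ theory coincides with the twist of its $\cN=1$ avatar (vector plus adjoint chiral plus hyper chirals valued in $W\oplus W^*$, with superpotential), and your reassembly is exactly consistent with the component expansion the paper itself writes later in \S\ref{sec: examples}, below equation (\ref{N=2super}): the pair $(A,B')$ is the $\cN=1$ BF sector, the pair $(A',B)$ is the adjoint $\beta\gamma$ sector, the cross term $\langle B,[A,A']\rangle$ is minimal coupling, and $\varepsilon^2=0$ forbids self-interactions of $A'$. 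What your modular approach buys is reuse of Proposition~\ref{prop: 1twist}; what the direct BV computation of the references buys is control over precisely the interaction terms, which is the part you defer.

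Two corrections are needed. First, drop the Killing form: the proposition is stated for an arbitrary ordinary Lie algebra $\fg$, not necessarily semisimple, and no identification $\fg\cong\fg^*$ is required. The adjoint chiral's twisted fields $(\gamma,\beta)$ are valued in $(\fg,\fg^*)$ and slot directly into the $\varepsilon$-components of $(\ul{A},\ul{B})$, which are likewise valued in $(\fg,\fg^*)$; the Killing form enters only in the alternative Chern--Simons description attached to Proposition~\ref{prop: 4twist}. Second, the step you flag as ``the genuine work'' is indeed the crux and is left as an assertion: Proposition~\ref{prop: 1twist} as stated covers gauge theory coupled to chiral matter \emph{without} superpotential, whereas the $\cN=1$ description of the $\cN=2$ theory carries the cubic superpotential $\langle\tilde{Q},\Phi Q\rangle$, so the multiplet-by-multiplet application of that proposition does not literally cover the interacting theory. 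The Maurer--Cartan matching you outline is the right mechanism---one must check that the twisted superpotential lands on the term $\langle\beta,[A',\gamma]\rangle$ hidden inside $\omega(\ul{\varphi}\wedge\dbar_{\ul{A}}\ul{\varphi})$ over $X^{2|1}$, and that the D- and F-term potentials are $Q$-exact or vanish upon eliminating auxiliary fields---but carrying this out is the substance of the cited proofs, not a formality. A minor final caveat: the cohomological degree $-1$ of $\varepsilon$ is not forced by the twist alone; it reflects the choice of twisting homomorphism $\rho_1$ discussed in the paper's remark following the proposition, so your bookkeeping should be pinned to that convention.
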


\begin{rmk}
On affine space $X = \CC^2$ the canonical bundle is trivial so the theory of the holomorphic symplectic boson with values in the symplectic vector space $V$ is equivalent to the $\beta\gamma$ system with values in $W$ where $W$ is any subspace satisfying $V = T^*W$. 
Thus, the holomorphic twist of the $\cN=2$ hypermultiplet on $\CC^2$ is equivalent to a particular $\beta\gamma$ system on $\CC^2$.
\end{rmk}

\begin{rmk}
This is a general remark about a convention that we are taking for the holomorphic twist of $\cN=2$. 
As explained in~\cite{CostelloHol,ESW}, part of the data one needs to twist a field theory is that of a {\em twisting homomorphism}.
This is a group homomorphism
\[
\rho : \CC^\times \to G_{R}
\]
where $G_R$ is the $R$-symmetry group, with the property that the weight of the twisting supercharge $Q$ under $\rho$ is $+1$. 
For $\cN=2$, the $R$-symmetry group is $\GL_2(\CC)$, so there are different choices for $\rho$ one can make given a fixed supercharge. 
Recall, the odd part of the supertranslation algebra for $\cN=2$ is of the form
\[
S_+ \otimes \CC^2 \oplus S_- \otimes \CC^2
\]
where $S_{\pm}$ are the positive and negative irreducible spin representations of $\fs \fo(4, \CC)$. 
The holomorphic twist corresponds to choosing a $Q$ of the form
\[
Q = q \otimes \begin{bmatrix} 1 \\ 0 \end{bmatrix} \in S_+ \otimes \CC^2 .
\]
Up to conjugation, there are two choices for $\rho$ for which such a $Q$ has weight $+1$. 
They are
\[
\rho_1(t) = \begin{bmatrix} t & 0 \\ 0 & t \end{bmatrix} \; \;\; {\rm and} \; \;\; \rho_2(t) = \begin{bmatrix} t & 0 \\ 0 & t^{-1} \end{bmatrix}  
\]
Both $\rho_1$ and $\rho_2$ lead to holomorphic theories, but they differ in their respective presentations as a BV theory. 

One can show that $\rho_1$ leads to the description of twisted $\cN=2$ supersymmetry that we present here. 
The choice of $\rho_2$ leads to a very similar holomorphic theory, with the only difference that the cohomological degree of $\ep$ is $+1$, instead of the $-1$ that we use.
\end{rmk}

Finally, we finish with the result of the holomorphic twist of maximal supersymmetry. 

\begin{prop}[\cite{BaulieuCS, CostelloHol, CY4, ESW}] \label{prop: 4twist}
The holomorphic twist of $\cN=4$ super Yang--Mills with values in an ordinary Lie algebra $\fg$ is equivalent to holomorphic BF theory with values in $\fg [\ep_1,\ep_2]$. 
When $\fg$ is equipped with a non-degenerate symmetric invariant pairing (for instance, if $\fg$ is semi-simple) it admits an equivalent description as holomorphic Chern--Simons theory with values in~$\fg[\ep_1,\ep_2,\ep_3]$.
\end{prop}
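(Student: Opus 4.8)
The plan is to compute the holomorphic twist of $\cN=4$ super Yang--Mills by reducing to the lower-supersymmetry twists already recorded in Propositions~\ref{prop: 1twist} and~\ref{prop: 2twist}, and then to recognize the resulting theory in the two advertised presentations. First I would decompose the $\cN=4$ vector multiplet under an $\cN=1$ subalgebra: it is an $\cN=1$ vector multiplet together with three adjoint-valued chiral multiplets $\Phi_1,\Phi_2,\Phi_3$, subject to the cubic superpotential $W \propto \tr(\Phi_1[\Phi_2,\Phi_3])$. By Proposition~\ref{prop: 1twist}, the twist of the $\cN=1$ vector multiplet minimally coupled to the three adjoint chirals is the coupled holomorphic $BF$--$\beta\gamma$ system with $\fh=\fg$ and $\VV=\fg^{\oplus 3}$; the cubic superpotential descends to one additional holomorphic interaction. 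The remaining task is to reassemble these pieces, with their induced gradings and interactions, into a single theory.

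For the reassembly I would first match field content. Writing $\fh = \fg[\varepsilon_1,\varepsilon_2]$, whose underlying graded vector space is $\fg$ in degree $0$, two copies $\fg\varepsilon_1,\fg\varepsilon_2$ in degree $-1$, and $\fg\varepsilon_1\varepsilon_2$ in degree $-2$, holomorphic $BF$ theory valued in $\fh$ on $\CC^2$ has an $A$-field in $\Omega^{0,\bu}(\CC^2,\fh)[1]$ and a $B$-field in $\Omega^{2,\bu}(\CC^2,\fh^*)$. Using the trivialization $\Omega^{2,\bu}\cong\Omega^{0,\bu}$ afforded by the holomorphic volume form on $\CC^2$, these comprise four copies of $\Omega^{0,\bu}(\CC^2,\fg)$ and four of $\Omega^{0,\bu}(\CC^2,\fg^*)$. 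These match the $\cN=1$ tally exactly: the vector contributes one $A$-type and one $B$-type field, and the three chiral multiplets contribute three $\gamma$'s and three $\beta$'s. The cohomological degrees are fixed by the twisting homomorphism (the analogue of the choice $\rho_1$ in the $\cN=2$ remark above), which acts on the three chirals through the surviving $R$-symmetry and distributes them among the graded pieces of $\fh$. I would then check that the two kinds of cubic vertices present after the twist---the coupled $BF$--$\beta\gamma$ vertices $\<B\wedge[A,A]\>$ and the twisted superpotential---combine into the single $BF$ interaction for $\fh$, whose bracket is the $\fg$-bracket tensored with the commutative product on $\CC[\varepsilon_1,\varepsilon_2]$; in particular, the superpotential supplies precisely the brackets $[\,\cdot\,\varepsilon_1,\,\cdot\,\varepsilon_2] \sim [\,\cdot\,,\cdot\,]\,\varepsilon_1\varepsilon_2$ that are invisible in the free theory.

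For the Chern--Simons reformulation I would pass to the symmetric picture suggested by Theorem~\ref{thm: twist}: the surviving $R$-symmetry is $\lie{sl}(3)$, which is manifest on the superspace $\CC^{2|3}$ carrying three odd coordinates $\epsilon_1,\epsilon_2,\epsilon_3$. Holomorphic Chern--Simons theory valued in $\fg$ on the Calabi--Yau supermanifold $\CC^{2|3}$ has a single field $\mathcal{A}\in\Omega^{0,\bu}(\CC^2,\fg)[\epsilon_1,\epsilon_2,\epsilon_3]$ and action $\int_{\CC^{2|3}}\Omega\wedge\mathrm{CS}(\mathcal{A})$, with $\Omega$ the (Berezinian) holomorphic volume form; the eight graded components of $\Lambda[\epsilon_1,\epsilon_2,\epsilon_3]\tensor\fg$ again reproduce the twisted field content. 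Singling out the third odd direction and writing $\mathcal{A}=A+\epsilon_3 B$ realizes $\Lambda[\epsilon_1,\epsilon_2,\epsilon_3]=\Lambda[\epsilon_1,\epsilon_2]\oplus\epsilon_3\Lambda[\epsilon_1,\epsilon_2]$ as $\fh\oplus\epsilon_3\fh$, and Berezin integration over $\epsilon_3$ converts $\int_{\CC^{2|3}}\Omega\wedge\mathrm{CS}(\mathcal{A})$ into the holomorphic $BF$ action for $\fh=\fg[\varepsilon_1,\varepsilon_2]$. The semisimplicity hypothesis enters exactly here: the Chern--Simons cubic term requires an invariant nondegenerate pairing on $\fg$ (the Killing form) to make sense of $\<\mathcal{A},[\mathcal{A},\mathcal{A}]\>$ and to identify $\fg\cong\fg^*$, whereas the $BF$ description needs no such pairing.

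The main obstacle is the bookkeeping in the reassembly step: tracking the cohomological degrees imposed by the twisting homomorphism and verifying that the twisted superpotential combines with the $BF$ vertex to give \emph{exactly} the graded Lie bracket on $\fg[\varepsilon_1,\varepsilon_2]$, with no surviving $Q$-exact or anomalous contributions and with the correct signs. The cleanest way to discharge this is to corroborate the decomposition against a direct computation of the $Q$-cohomology of the free $\cN=4$ vector multiplet---where the linear twisted complex can be read off explicitly---and then to transport the interactions through that identification. The $\lie{sl}(3)$-symmetric Chern--Simons presentation, in which the three chirals enter on an equal footing, provides a useful consistency check on the asymmetric-looking $BF$ repackaging.
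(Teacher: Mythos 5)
The paper does not actually prove the first sentence of this proposition: the identification of the holomorphic twist of $\cN=4$ super Yang--Mills with holomorphic BF theory for $\fg[\ep_1,\ep_2]$ is imported wholesale from the cited references, and the paper's own contribution is the remark that follows the proposition, giving the dictionary between the BF and Chern--Simons presentations---the BF fields are assembled into $\ul{A}+\delta\ul{B}\cong\Omega^{0,\bu}(\CC^2,\fg[\ep_1,\ep_2])[1][\delta]$ and matched with the Chern--Simons field via $\epsilon_1\leftrightarrow\ep_1$, $\epsilon_2\leftrightarrow\ep_2$, $\epsilon_3\leftrightarrow\delta$, the Killing form supplying $\fg\cong\fg^*$ (which is exactly where semisimplicity enters). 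Your treatment of the Chern--Simons half is this same argument; the Berezin integration over $\epsilon_3$ you describe is the mechanism the paper leaves implicit, and that part of your proposal is correct. Your derivation of the BF description itself, via the $\cN=1$ decomposition into a vector multiplet and three adjoint chirals with cubic superpotential, is something the paper does not attempt at all, so there is nothing in the paper to compare it against; it is roughly the route taken in the cited literature.

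That said, two points in your reassembly would fail as written. First, Proposition~\ref{prop: 1twist} covers only minimally coupled matter; the statement that a superpotential descends, under the holomorphic twist, to a holomorphic interaction built from $W(\ul{\gamma})$ is an additional input (available in the literature, e.g.~\cite{ESW}, but not in this paper), and your argument leans on it at the crucial step. Second, and more concretely, your field matching---all three $\gamma$'s into $\ul{A}$ and all three $\beta$'s into $\ul{B}$---is not the identification that your own Chern--Simons dictionary produces. The $\lie{sl}(3)$-symmetric matching is $\gamma_i\leftrightarrow\fg\epsilon_i$ and $\beta_i\leftrightarrow\fg\epsilon_j\epsilon_k$ (for $\{i,j,k\}=\{1,2,3\}$); splitting off $\epsilon_3$ then places $\gamma_1$, $\gamma_2$, and $\beta_3$ (the $\fg\ep_1\ep_2$-component) inside $\ul{A}$, while $\ul{B}$ contains $\gamma_3$, $\beta_1$, $\beta_2$, and the vector's $B$. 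In other words, the BF polarization does \emph{not} coincide with the $\cN=1$ fields/antifields polarization. Consequently the superpotential vertex appears in the BF action as the component $\kappa(\gamma_3\wedge[\gamma_1,\gamma_2])$ of $\langle\ul{B},[\ul{A},\ul{A}]\rangle$: it does realize the bracket $[\fg\ep_1,\fg\ep_2]\subset\fg\ep_1\ep_2$, as you say, but with $\gamma_3$ playing the role of a $B$-type field, rather than $\beta_3$ pairing against two $A$-type fields. Your proposed consistency checks (the free-theory computation and the $\lie{sl}(3)$-symmetric presentation) would catch this, but as it stands the bookkeeping you identify as the main obstacle is not merely deferred---the version of it you sketch is incorrect, and straightening it out is precisely what makes the reassembly step nontrivial.
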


The isomorphism between holomorphic Chern-Simons for the graded Lie algebra $\fg[\ep_1,\ep_2,\ep_3]$ and holomorphic BF theory for the graded Lie algebra $\fg[\ep_1,\ep_2]$ can be seen as follows. 
Holomorphic Chern-Simons on $\CC^{2|3}$ has fields
\[
\ul{A} \in \Omega^{0,\bu}(\CC^2, \fg[\ep_1,\ep_2,\ep_3]) .
\]
Recall, in order to construct holomorphic Chern--Simons theory we need a graded skew-symmetric invariant pairing on the Lie algebra and a holomorphic volume form on the complex surface.
For the graded Lie algebra $\fg[\ep_1,\ep_2,\ep_3]$ the invariant pairing is defined by
\[
(\ul{X}_1 , \ul{X}_2) \mapsto \int_{\CC^{0|3}} \d^3 \ep  \, \<\ul{X}_1 , \ul{X}_2\> 
\]
for $\ul{X}_1,\ul{X}_2 \in \fg[\ep_1,\ep_2,\ep_3]$, where $\<-\>$ is the {\em symmetric} pairing on $\fg$. 
On the complex surface $\CC^2$ we choose the holomorphic volume form $\d^2 z = \d z_1 \d z_2$. 
With these choices, the action of holomorphic Chern--Simons theory on the odd Calabi--Yau space $\CC^{2|3}$ can be written as
\[
\int_{\C^{2|3}} \d^2 z \, \d^3 \ep \, \left(\<\ul{A} \wedge \dbar \ul{A}\> + \frac{1}{6} \<(\ul{A} \wedge [\ul{A} , \ul{A}]) \>\right) .
\]

The invariant pairing identifies $\fg$ with $\fg^*$. 
Thus, we can write the fields of holomorphic BF theory as
\begin{align*}
(\ul{A}, \ul{B}) & \in \Omega^{0,\bu}(\CC^2, \fg[\ep_1,\ep_2])[1] \oplus \Omega^{2,\bu} (\CC^2, \fg [\ep_1, \ep_2]) \\ \ul{A} + \delta \ul{B}  & \cong \Omega^{0,\bu}(\CC^2, \fg[\ep_1,\ep_2])[1] [\delta]
\end{align*}
where $\delta$ is a parameter of degree one. 
The correspondence between fields of holomorphic Chern-Simons and BF theory can be realized by $\ep_1 \leftrightarrow \ep_1$, $\ep_2 \leftrightarrow \ep_2$, and $\ep_3 \leftrightarrow \delta$. 

\subsection{Holomorphic symmetry enhancement}

There are two type of symmetries of supersymmetric theories that we focus on.
The first is a global (or flavor) symmetry by a Lie algebra $\fg$. 
For instance, any supersymmetric theory of matter in some $\fg$-representation has such a symmetry. 
The other is superconformal symmetry, which makes sense in $\cN=1,2$ or $4$ supersymmetry. 
In this section we see how the twists of the supersymmetric theories we have just cataloged have enhanced symmetries by enlargements of the (twists) of a global $\fg$ symmetries and a superconformal symmetry.

For instance, if a classical supersymmetric theory has a classical global symmetry by a Lie algebra $\fg$, then the holomorphically twisted theory has a symmetry by the local Lie algebra $\cG_{\cN}$.
Likewise, the superconformal algebra gets enchanced to a symmetry by the Lie algebra of (graded) holomorphic vector fields $\fX_{\cN}$. 

The precise statement for $\cN=1$ is the following. 

\begin{prop}\label{prop: symenhance1}
Suppose we start with a theory on $\RR^4$ with $\cN=1$ supersymmetry and a classical global symmetry by a Lie algebra $\fg$ which commutes with the supersymmetry algebra. 
Then, for any holomorphic supercharge $Q$, the twisted theory has a classical symmetry by the following local Lie algebras:
\begin{itemize}
\item holomorphic $\fg$-currents: $\cG_{1} = \Omega^{0,\bu}(\CC^2, \fg)$;
\item holomorphic vector fields: $\fX_{1} = \Omega^{0,\bu}(\CC^2, T \CC^2)$.
\end{itemize}
\end{prop}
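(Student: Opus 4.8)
```latex
The plan is to exhibit both local Lie algebras explicitly as symmetries of the twisted theory described in Proposition~\ref{prop: 1twist}. By that proposition, the $Q$-twist of $\cN=1$ super Yang--Mills coupled to a chiral multiplet in a representation $V$ is the coupled holomorphic $BF$-$\beta\gamma$ system on $\CC^2$, with fields $\ul{A} \in \Omega^{0,\bu}(\CC^2,\fg)[1]$, $\ul{B} \in \Omega^{2,\bu}(\CC^2,\fg^*)$, and matter $\ul\gamma \in \Omega^{0,\bu}(\CC^2,V)$, $\ul\beta \in \Omega^{0,\bu}(\CC^2,V^*)[1]$. In the BV framework, exhibiting a symmetry by a local Lie algebra $\sL$ means producing a map of local Lie algebras $\sL \to \fX(\sE)$ into the (local) Lie algebra of symmetries of the classical BV theory, or equivalently a Maurer--Cartan element of $\sL$-valued functionals compatible with the classical action; concretely, it suffices to write down how each of the two algebras acts on the space of fields by symmetries preserving the action functional and the $(-1)$-symplectic pairing, and to check that these actions are compatible with $\dbar$ and with the local brackets.

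First I would treat the current symmetry $\sG_{\cN=1} = \Omega^{0,\bu}(\CC^2,\fg)$. The field $\ul{A}$ is precisely a (shifted) $\fg$-connection in the Dolbeault model, so the gauge symmetry of holomorphic $BF$ theory is manifestly an action of $\Omega^{0,\bu}(\CC^2,\fg)$: an element $c \in \Omega^{0,\bu}(\CC^2,\fg)$ acts by $\ul{A} \mapsto \dbar c + [\ul{A},c]$ on the connection, by the coadjoint action on $\ul{B}$, and on the matter fields $\ul\gamma,\ul\beta$ through the representation $V$ and its dual. I would verify that this action preserves the action functional $S$ and that the assignment $c \mapsto (\text{this derivation})$ is a homomorphism of dg Lie algebras, intertwining $\dbar$ with the internal differential and the pointwise bracket on $\Omega^{0,\bu}(\CC^2,\fg)$ with the commutator of symmetries. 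This is essentially the statement that the flavor symmetry $\fg$, which before twisting acted globally, enhances to the full Dolbeault resolution because only $\dbar$ appears in the twisted action --- exactly the heuristic described in the introduction.

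Next I would treat the holomorphic vector field symmetry $\fX_{\cN=1} = \Omega^{0,\bu}(\CC^2, T\CC^2)$. Here a Dolbeault-valued vector field $\xi \in \Omega^{0,\bu}(\CC^2, T\CC^2)$ acts on all fields by (a suitably BV-corrected) Lie derivative $\sL_\xi$, respecting the tensor weights: $\ul{A}$ and $\ul\gamma$ transform as their form-type dictates, and $\ul{B} \in \Omega^{2,\bu}$, $\ul\beta$ transform with the induced action on the canonical and representation bundles. I would check that $\sL_\xi$ preserves $S$ --- the key point being that $\dbar$ is natural under holomorphic diffeomorphisms, so $\sL_\xi$ commutes with $\dbar$ up to the expected Cartan-type identities --- and that $\xi \mapsto \sL_\xi$ is a homomorphism of local dg Lie algebras, sending the $\dbar$ differential and the extended Lie bracket of $\fX_{\cN=1}$ (Definition~\ref{dfn: vf}, with $\N=1$ so no fermionic directions appear) to the corresponding structures on symmetries. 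The matching with the twisted superconformal algebra $\sca^\hol(1) = \lie{sl}(3)$ of Theorem~\ref{thm: twist} is then the observation that the finite-dimensional holomorphic vector fields $p_i, m_{ij}, k_i$ on $\CC^2$ sit inside $\fX_{\cN=1}$ as a closed subalgebra.

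The main obstacle I expect is not the construction of the two actions, which are reasonably forced, but the careful bookkeeping needed to promote ``acts on fields preserving $S$'' to a genuine map of \emph{local} dg Lie algebras at the cochain level --- in particular verifying compatibility with the BV structure (that each action is generated by a local Hamiltonian functional of the appropriate degree with respect to $\{-,-\}$, so that the symmetry is inner in the BV sense) and that the brackets close strictly rather than merely up to homotopy. For the vector fields one must also be attentive to how $\sL_\xi$ acts on the density-valued field $\ul{B}$ and on the half-density data implicit in the matter pairing, so that integration by parts genuinely shows $\sL_\xi S = 0$. These are routine but delicate, and I would organize them by first checking invariance of $S$ and the symplectic pairing, then the differential-compatibility, then the bracket-compatibility, appealing throughout to the naturality of $\dbar$ and of the local pairings on $\CC^2$.
```
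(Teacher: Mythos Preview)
Your approach is essentially the same as the paper's: invoke Proposition~\ref{prop: 1twist} to identify the twisted theory as the coupled holomorphic $BF$--$\beta\gamma$ system, then exhibit the action of $\sG_{\cN=1}$ by the $\fg$-module structure tensored with the wedge product of Dolbeault forms, and the action of $\fX_{\cN=1}$ by Lie derivative. The paper is somewhat terser than you are about the BV bookkeeping, simply writing down the background-field coupling terms $\int\langle B,\alpha\cdot A\rangle_\fg + \int\langle\beta,\alpha\cdot\gamma\rangle_V$ and $\int\langle B,\xi\cdot A\rangle_\fg + \int\langle\beta,\xi\cdot\gamma\rangle_V$ as the local Hamiltonians generating each symmetry.

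There is, however, one genuine confusion in your treatment of the current algebra. The Lie algebra $\fg$ in the statement is a \emph{global} (flavor) symmetry of the untwisted theory, not the gauge Lie algebra of the $BF$ sector; these are in general different objects. Your formula $\ul{A}\mapsto\dbar c + [\ul{A},c]$ is the infinitesimal \emph{gauge} transformation, and the $\dbar c$ term has no place in the action of a flavor symmetry. The correct action of $c\in\Omega^{0,\bu}(\CC^2,\fg)$ on a field $\phi$ is simply $c\cdot\phi$, combining the wedge product on Dolbeault forms with whatever linear $\fg$-action the underlying field carried in the untwisted theory (possibly trivial on the gauge multiplet, nontrivial on matter). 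The paper's argument is precisely that this pointwise action commutes with $\dbar$, hence extends from constant $\fg$ to the full Dolbeault resolution $\Omega^{0,\bu}(\CC^2,\fg)$. Once you correct this, your outline matches the paper's proof.
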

\begin{proof}
By Proposition \ref{prop: 1twist}, the twist of a general $\cN=1$ theory is equal to holomorphic BF theory coupled to a holomorphic $\beta\gamma$ system. 
Since the global $\fg$ symmetry commutes with $Q$, it follows that $\fg$ is a symmetry of the twisted theory. 
In particular, the action of $\fg$ commutes with $\dbar$ and hence extends to an action by the local Lie algebra $\Omega^{0,\bu}(\CC^2, \fg)$ in such a way that the original global symmetry by the Lie algebra $\fg$ is compatible with the embedding $\fg \hookrightarrow \Omega^{0,\bu}(\CC^2, \fg)$ by the constant functions. 
We wrote the explicit local Noether current for $\cG_1$ in Equation \eqref{eqn:noetherg}, in the case $\cN=1$. 

For the second part, we recall that local Lie algebra of holomorphic vector fields $\fX_{1}$ acts on the fields of the $BF - \beta\gamma$ system by Lie derivative. 
The local Noether current for $\fX_1$ was constructed in Equation \eqref{eqn:noethervf}, in the case $\cN=1$.
\end{proof}

There is an anomaly to quantizing the $\cG_1$ symmetry, see \cite{GWkm}, and an anomaly to quantizing the $\fX_1$ symmetry, see \cite{Wthesis}. 

The statement for $\cN=2$ is similar. 

\begin{prop} \label{prop: symenhance2}
Suppose is a theory on $\RR^4$ with $\cN=2$ supersymmetry and a classical global symmetry by a Lie algebra $\fg$ which commutes with the supersymmetry algebra. 
Then, for any holomorphic supercharge $Q$, the twisted theory has a classical symmetry by the following local Lie algebras:
\begin{itemize}
\item holomorphic $\fg$-currents on $\CC^{2|1}$: $\cG_{2} = \Omega^{0,\bu}(\CC^2, \fg[\ep])$;
\item holomorphic vector fields on $\CC^{2|1}$: $\fX_{2}$. 
\end{itemize}
\end{prop}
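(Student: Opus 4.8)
The plan is to follow the structure of the proof of Proposition~\ref{prop: symenhance1}, replacing Proposition~\ref{prop: 1twist} by Proposition~\ref{prop: 2twist} at the key step. By the latter, $\sT^Q_{\cN=2}$ is equivalent to holomorphic $BF$ theory valued in $\fh = \fg[\ep]$ coupled to the holomorphic symplectic boson valued in $Z$. The one genuinely new feature, compared with $\cN=1$, is that after the twist every field is naturally a section over the graded manifold $\CC^{2|1}$: the odd coordinate $\ep$ of cohomological degree $-1$ is precisely the parameter appearing both in $\fg[\ep]$ and in the description of the symplectic boson on $\CC^{2|1}$ from~\S\ref{ssec: theories}. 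Both asserted enhancements then reduce to the statement that the currents and vector fields belonging to this graded geometry act by symmetries, since by Definition~\ref{dfn: vf} the algebras $\sG_{\cN=2}$ and $\fX_{\cN=2}$ are exactly those of $\fg$-valued functions and holomorphic (super)vector fields on $\CC^{2|1}$.

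For the current algebra I would argue as in the $\cN=1$ case: since $\fg$ commutes with $Q$ it descends to a symmetry of the twisted theory, and because the twisted action involves only $\dbar$ this extends to an action by $\fg$-valued Dolbeault forms. The additional input is that the twisted fields depend on the odd coordinate $\ep$, so the symmetry extends further to $\fg$-valued functions on $\CC^{2|1}$, namely $\Omega^{0,\bu}(\CC^2, \fg[\ep]) = \sG_{\cN=2}$. Concretely I would write the background coupling
\deq{
  \int_{\CC^{2|1}} \langle \ul{B}, \alpha\cdot\ul{A}\rangle_\fg + \int_{\CC^{2|1}} \omega(\alpha\cdot\ul{\varphi} \wedge \ul{\varphi}),
}
where $\alpha\cdot(-)$ combines the infinitesimal $\fg$-action with the graded-commutative product of $\Omega^{0,\bu}(\CC^2)\tensor\CC[\ep]$, and then verify that this defines a cochain map for the $\dbar$-differential reproducing the bracket on $\sG_{\cN=2}$ --- a routine computation once the multiplication in $\CC[\ep]$ is accounted for.

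For the vector fields, the claim is that $\fX_{\cN=2}$ acts by (super) Lie derivative on the twisted fields regarded as sections on $\CC^{2|1}$. Using the $\bowtie$-decomposition of Definition~\ref{dfn: vf} into the $T\CC^2$-valued piece $\Omega^{0,\bu}(\CC^2, T\CC^2)\tensor\CC[\ep]$ and the piece $\Omega^{0,\bu}(\CC^2)\tensor\Der(\CC[\ep])$ acting along the odd direction, I would define the action of each summand by the natural Lie derivative on Dolbeault forms and on $\ep$, and check invariance of the action functional. The heart of the matter is that the twisted fields are the correct geometric objects --- $Z$-valued half-densities for the symplectic boson (via the $K_X^{1/2}$ twist) and a $(0,\bu)$-form paired against a $(2,\bu)$-form for $BF$ theory --- so that each Lagrangian density is a Berezinian density on $\CC^{2|1}$ whose Lie derivative along any holomorphic supervector field is a total derivative and therefore integrates to zero.

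The main obstacle I anticipate is exactly this invariance computation for the full $\fX_{\cN=2}$, and in particular for the odd and mixed generators lying in the $\Der(\CC[\ep])$ summand. For a supervector field with nonvanishing (super)divergence, invariance is not automatic under naive reparametrization; one must use that the fields are genuine half-densities, so that their transformation law compensates that of the Berezinian volume element $\Omega\,\d\ep$ under the Lie derivative. Reconciling this with the Berezin-integration convention that defines the symplectic-boson pairing --- concretely, verifying that $\partial/\partial\ep$ and $\ep\,\partial/\partial\ep$ act as symmetries --- is the delicate step. Once it is in place, closure under the bracket and the compatibility relations of the $\bowtie$ in Definition~\ref{dfn: vf} follow formally, since we are merely implementing the geometric action of holomorphic supervector fields on sections over $\CC^{2|1}$.
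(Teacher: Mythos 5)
Your proposal is correct and takes essentially the same approach as the paper: invoke Proposition~\ref{prop: 2twist} to identify $\sT^Q_{\cN=2}$ with holomorphic $BF$ theory for $\fg[\ep]$ coupled to the symplectic boson, write the background-field couplings over $\CC^{2|1}$, and let $\sG_{\cN=2}$ and $\fX_{\cN=2}$ act by the graded product and the graded Lie derivative respectively. The half-density/Berezinian subtlety you single out is genuine but is handled implicitly in the paper by taking the fields to be sections of the natural bundles on $\CC^{2|1}$ (the $K_X^{1/2}$-twist for the boson, the $(2,\bu)$-form pairing for $\ul{B}$), so that the graded Lie derivative already carries the compensating divergence terms.
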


\begin{proof}
By Proposition \ref{prop: 2twist} the holomorphic twist is equivalent to holomorphic BF theory for the Lie algebra $\fg[\ep]$ coupled to the holomorphic symplectic boson valued in a symplectic vector space $(U,\omega_U)$. 
We wrote the explicit local Noether current for $\cG_2$ in Equation \eqref{eqn:noetherg}, in the case $\cN=2$ (this used a description in terms of the $\beta\gamma$ system).  

For the second part, we observe that holomorphic vector fields $\fX_2$ act on the fields of the $BF$ theory and the symplectic boson by Lie derivative. 
The local Noether current for $\fX_1$ was constructed in Equation \eqref{eqn:noethervf}, in the case $\cN=2$.
\end{proof}

There is no anomaly to quantizing the symmetry by $\cG_2$. 
We will see that there is an anomaly to quantizing the symmetry by $\fX_2$ in Section \ref{sec: qft}. 
We leave a full characterization of the anomaly to future work.

The case $\cN=4$ is slightly more subtle due to the two equivalent descriptions we just discussed.
Firstly, we ignore any global (flavor) symmetries and focus just on symmetries by graded holomorphic vector fields. 
On one hand, the holomorphic twist of $\cN=4$ can be described by holomorphic BF theory for the Lie algebra $\fg[\ep_1,\ep_2]$. 
On the other hand, after choosing a holomorphic volume form on $\CC^2$ it admits a description as holomorphic Chern--Simons theory for the Lie algebra $\fg[\ep_1,\ep_2,\ep_3]$. 

The presentation in terms of holomorphic BF theory endows the theory with a similar symmetry as in the $\cN=1,2$ cases above. 
The holomorphic twist of $\cN=1,2,4$ pure gauge theory is BF theory valued in a Lie algebra with one, two, or three parameters which admits a symmetry by $\fX_1,\fX_2,\fX_3$ respectively. 
On the other hand, holomorphic Chern--Simons theory has a {\em larger} symmetry algebra. 

Consider $\CC^{2|3}$ equipped with its odd holomorphic volume form $\d^2 z \d^3 \ep$.
The action of holomorphic Chern--Simons theory is only invariant under graded vector fields which preserve this odd holomorphic volume form:
\[
\bigg\{\ul{\xi} \in {\rm Vect}^{\rm hol} (\CC^{2|3}) \; | \; L_{\ul{\xi}} (\d^2 z \d^3 \ep) = 0 \bigg\} .
\]
Just as with all sheaves of holomorphic vector fields, this is {\em not} a local Lie algebra.
To present it as a local Lie algebra we resolve the conditions that the vector be holomorphic and (super) divergence-free. 
We have constructed $\fX_4$ as the local Lie algebra which resolves the sheaf of holomorphic vector fields on $\CC^{2|3}$. 

\begin{dfn}
Let $\fX^{div}_4$ be the following local Lie algebra on $\CC^2$.
As a bundle of cochain complexes it is
\[
\begin{tikzcd}
\ul{0} & \ul{1} \\
\fX_4 \ar[r, "\ul{\partial}"] & \Omega^{0,\bu}(\CC^{2|3}) 
\end{tikzcd}
\]
where 
\[
\ul{\partial} = \sum_{i=1,2} \frac{\partial}{\partial z_i} \frac{\partial}{\partial (\partial_{z_i})} - \sum_{a=1,2,3} \frac{\partial}{\partial \ep_a} \frac{\partial}{\partial (\partial_{\ep_a})} 
\]
is the super divergence operator. 
The Lie bracket extends the bracket on $\fX_4$ by declaring that graded vector fields act on $\Omega^{0,\bu}(\CC^{2|3})$ by Lie derivative. 
\end{dfn}

\begin{prop}\label{prop: symenhance4}
Consider $\N=4$ superymmetric Yang--Mills theory on~$\R^4$. 
For any holomorphic supercharge $Q$, the twisted theory has a classical symmetry by the local Lie algebra $\fX_{4}^{div}$ of super divergence-free holomorphic vector fields on~$\C^{2|3}$. 
\end{prop}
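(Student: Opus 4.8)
The plan is to follow the template of Propositions~\ref{prop: symenhance1} and~\ref{prop: symenhance2}: first replace the twisted theory by the explicit holomorphic model of Proposition~\ref{prop: 4twist}, then exhibit the action of $\fX_{\N=4}$ by graded Lie derivative and check that it preserves the classical action. By Proposition~\ref{prop: 4twist}, the holomorphic twist of $\N=4$ super Yang--Mills is holomorphic $BF$ theory valued in $\fg[\ep_1,\ep_2]$, or---when $\fg$ is semisimple---holomorphic Chern--Simons theory valued in $\fg[\ep_1,\ep_2,\ep_3]$, whose fields $\ul\alpha\in\Omega^{0,\bu}(\C^{2|3},\fg)$ are precisely $\fg$-valued Dolbeault forms on the superspace $\C^{2|3}$. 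It is on this Chern--Simons description that $\fX_{\N=4}$, the holomorphic vector fields on $\C^{2|3}$, acts most transparently: a vector field $\ul\xi$ acts on $\ul\alpha$ by graded Lie derivative, and since $\dbar$ commutes with holomorphic Lie derivatives while the Killing form is invariant, the Chern--Simons Lagrangian transforms into its own Lie derivative.

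The one genuinely new feature, absent for $\N=1,2$, is the holomorphic volume form $\Omega=\d^2 z$ appearing in the Chern--Simons action; the $BF$, $\beta\gamma$, and symplectic-boson actions all integrate honest top-forms and so require no such choice. Acting naively on the fields alone while leaving $\Omega$ fixed produces a variation proportional to $\int(\cdots)\wedge\mathcal{L}_{\ul\xi}\Omega=\int(\cdots)(\operatorname{div}\ul\xi)\wedge\Omega$, which fails to vanish for vector fields of nonzero divergence. I would resolve this by passing to the equivalent $BF$ presentation, where the action $\int_{\C^2}\langle\ul B\wedge F_{\ul A}\rangle$ integrates a $(2,2)$-form and its Lie-derivative invariance is a consequence of Stokes' theorem: for the top-degree integrand $\omega$ one has $\mathcal{L}_{\ul\xi}\omega=\partial(\iota_{\ul\xi}\omega)$, the remaining pieces vanishing for degree reasons on a complex surface, so that $\int_{\C^2}\mathcal{L}_{\ul\xi}\omega=\int_{\C^2}d(\iota_{\ul\xi}\omega)=0$ for \emph{every} holomorphic vector field, not merely the divergence-free ones. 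The divergence term that obstructed invariance in the Chern--Simons picture is exactly the piece Stokes removes in the $BF$ picture---the two being reconciled precisely because the identification of $\ul B$ with a function uses $\Omega$.

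It then remains to realize all of $\fX_{\N=4}$. The first summand $\Omega^{0,\bu}(\C^2,T\C^2)\tensor\C[\ep_1,\ep_2,\ep_3]$, together with the derivations in the second summand involving only $\ep_1,\ep_2$, act by the evident graded Lie derivative on the $BF$ fields exactly as in Proposition~\ref{prop: symenhance2}, with background-field coupling $\int\langle\ul B,\ul\xi\cdot\ul A\rangle$. The derivations involving the third odd direction $\ep_3$ are governed by the Chern--Simons--$BF$ dictionary recorded after Proposition~\ref{prop: 4twist}: under $\ep_3\leftrightarrow\delta$ the combined field $\ul\alpha=\ul A+\delta\,\ul B^\vee$ repackages $\ul A$ and the volume/Killing-form dual $\ul B^\vee$ of $\ul B$, and a vector field such as $\partial_{\ep_3}$ or $\ep_a\partial_{\ep_3}$ acts by shuffling the $\ul A$- and $\ul B$-sectors through the symplectic pairing. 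The hard part will be checking that this mixing is compatible with the $(-1)$-shifted $BF$ pairing and preserves the action---this is the crux, and is where the equivalence of the two presentations must be used most carefully, since it is exactly the interplay between the volume-form dependence and the third odd direction. Once this compatibility is verified, the graded Lie derivative assembles these pieces into an action of the full local Lie algebra $\fX_{\N=4}$, establishing the claim.
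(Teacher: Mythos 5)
Your strategy is the same as the paper's---which, for the record, disposes of this proposition in two sentences: cite Proposition~\ref{prop: 4twist} and declare that the action of $\fX_{\N=4}$ is ``the obvious geometric one by graded Lie derivative.'' Your concern about the Calabi--Yau form is legitimate and goes beyond what the paper records, and your resolution is correct as far as it goes: for the summand $\Omega^{0,\bu}(\C^2, T\C^2)\tensor\C[\ep_1,\ep_2,\ep_3]$ and the derivations involving only $\ep_1,\ep_2$, invariance in the $BF$ presentation follows from Stokes' theorem exactly as in Propositions~\ref{prop: symenhance1} and~\ref{prop: symenhance2}; the structural reason is that $\ul{B}$ lives in $\Omega^{2,\bu}\tensor\fh^*$, the \emph{dual} module, so its graded Lie derivative carries precisely the density weight that absorbs the divergence your Chern--Simons computation produces.

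The problem is that the step you explicitly defer is not a routine verification: it is the point where the naive claim actually fails, and your own mechanism does not reach it. Under the dictionary $\ul{\alpha} = \ul{A} + \ep_3 \ul{B}$, the derivation $\ep_3\frac{\partial}{\partial\ep_3}\in\fX_{\N=4}$ acts naively by $\delta\ul{A}=0$ and $\delta\ul{B}=\ul{B}$; since the $BF$ action is \emph{linear} in $\ul{B}$, this variation equals the action itself, not a total derivative, so Stokes' theorem cannot kill it and the naive graded Lie derivative is simply not a symmetry there. (In the Chern--Simons picture this is the statement that $\ep_3\partial_{\ep_3}$ has nonzero super-divergence, and no uniform density weight on $\ul{\alpha}$ can repair it: invariance of the quadratic term forces weight $1/2$, of the cubic term weight $1/3$.) There is also a grading wrinkle your dictionary must confront: in the paper's CS--$BF$ correspondence the third odd parameter $\delta$ has cohomological degree $+1$, while the $\ep_3$ of Definition~\ref{dfn: vf} has degree $-1$, so the identification $\ep_3\leftrightarrow\delta$ is not degree-preserving. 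Supplying the correct, non-naive action of these derivations---or weakening the statement to a homotopy-coherent ($L_\infty$) action---is the actual mathematical content of the proposition. You are right that this is the crux; but a proof must contain it, and your proposal, like the paper's own two-line argument, stops just short of it.
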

\begin{proof}
By Proposition \ref{prop: 4twist} the twist of $\N=4$ super Yang--Mills theory with gauge algebra $\fg$ is given by the holomorphic Chern--Simons theory whose fields are
  \deq{
    \ul{\ul{A}} \in \left(\Omega^{0,\bu}(\C^2,\fg)[\epar_1,\epar_2,\epar_3]\right) \cong \Omega^{0,\bu}(\C^{2|3},\fg),
  }
The action of $\fX^{div}_{4}$ is the obvious geometric one by graded Lie derivative.
The local Noether current for $\fX_4^{div}$ is 
\[
\frac12 \int_{\CC^{2|3}} \d^2 z \d^3 \ep \, \<\ul{A} L_{\ul{\xi}} \ul{A}\> 
\]
where $\ul{\xi} \in \fX_4$. 
\end{proof}

From this symmetry by $\fX_4^{div}$ we can restrict to a symmetry by the smaller local Lie algebra $\fX_3$ which most obviously acts on the description in terms of holomorphic BF theory. 
Indeed, there is an embedding of local Lie algebras
\[
\fX_3 \hookrightarrow \fX_4^{div}
\]
defined by 
\[
\ul{\zeta} \mapsto \ul{\zeta} + (\ul{\partial} \zeta) \ep_3 \frac{\partial}{\partial \ep_3} . 
\]
In this expression $\zeta$ is a graded vector field on $\CC^{2|2}$ and $\ul{\partial}$ denotes the graded divergence operator on this space. 
The graded vector field on the right hand side is automatically graded divergence-free.

\begin{rmk}
We expect the algebra $\fX_{4}^{div}$ to play a role for other holomorphic twists of theories with $\N=4$ supersymmetry. However, these all contain gravitational multiplets. We restrict our considerations in this work to theories with rigid supersymmetry, deferring consideration of holomorphically twisted supergravity to future work.
\end{rmk}

\section{Factorization algebras and holomorphic symmetries}
\label{sec: fact}

One of the central ideas of~\cite{CG1,CG2} is that the observables of a quantum field theory form a factorization algebra.
In essence, the structure maps of a factorization algebra generalize the notion of the `operator product expansion.' 
For instance, in complex dimension one, holomorphic factorization algebras recover the notion of a vertex algebra~\cite[\S 5]{CG1}.
Likewise, there is a precise sense in which the symmetry algebra of a theory also forms a factorization algebra.
So far, we have described symmetries using (sheaves of) Lie algebras.
The associated factorization algebra is called the {\em enveloping} factorization algebra; see below for a recollection.

There is a sense in which one can `twist,' or deform, an enveloping factorization algebra.\footnote{We are {\em not} referring to a twist by a supercharge.}
For quantum phenomena it is necessary to take these twists into account.
Like central extensions of Lie algebras,
{\em local} cocycles of a local Lie algebra parametrize such twists.
We will be most concerned with cocycles of degree $+1$, as these correspond to ordinary central extensions at the level of Lie algebras or vertex algebras that we get back to in the later sections.
We characterize certain local cocycles in the local Lie algebras $\cG_{\cN}$ and $\fX_{\cN}$ introduced in the last section.

\subsection{Enveloping factorization algebras}

The local cohomology of a local Lie algebra $L$ is version of Lie algebra cohomology where the cochains are required to satisfy a locality axiom.
This means that as a cochain defines on the sheaf of sections $\cL^{\otimes k} \to \CC$ must be given as the integral of a Lagrangian density involving differential operators applied to the sections of $\cL$. 
We denote by $\cloc^\bu(\cL)$ the local Chevalley--Eilenberg cochain complex which computes local Lie algebra cohomology. 
For a precise definition see \cite[\S 3.4]{CG1}.

The (twisted) enveloping factorization algebra is defined from the following two pieces of data:
\begin{itemize}
\item a local Lie algebra $L$, and
\item a local cocycle $\phi \in \cloc^\bu(\cL)$ of cohomological degree $+1$. 
\end{itemize}

The value of the enveloping  factorization algebra on an open set $U$ associated to this data is defined as a deformation of the Chevalley--Eilenberg cochain complex computing Lie algebra homology of $\cL_c(U)$:
\[
\clie_\bu (\cL_c(U)) = \left(\Sym \left(\cL_c(U)[1]\right), \d_{CE} + \phi \right) .
\]
We denote the enveloping factorization algebra by $\UU_\phi (\cL)$.
For more detailed definition we refer to \cite[\S 6.3]{CG1}.
This construction is simultaneously a generalization of the enveloping algebra of a Lie algebra and the {\em chiral} enveloping algebra of a Lie$^{\star}$ algebra as in~\cite{BD}.

There is a very natural reason for considering central extensions in the context of field theory.
Local Lie algebras, such as $\cG_{\cN}$ and $\fX_{\cN}$, exist as classical symmetries of a field theory, as we saw above in the twists of four-dimensional $\cN=1,2,4$. 
A natural question is whether or not these symmetries persist at the quantum level. 
In general, there are two possible scenarios. 
The first scenario occurs when there is an {\em internal} anomaly present in the symmetry. 
This can arise when the local Lie algebra acts on some interacting field theory (such as a gauge theory). 
In order for the symmetry to exist at the quantum level, it must be the case that all internal anomalies vanish. 
Second, even if internal anomalies vanish, it may be the case that the symmetry algebra only acts {\em projectively}.
This means that while the original algebra does not act at the quantum level, a central extension does. 

We remark that in field theory all anomalies and central extensions that we study are local.
So it is necessary to characterize the local cohomology of the local Lie algebras which act as symmetries. 

\subsection{Extensions in complex dimension one}
\label{ssec: 2d background}

As a warm-up, we review the types of central extension present for local Lie algebras on Riemann surfaces.
For any Riemann surface $\Sigma$, and Lie algebra $\fg$, we have the current algebra
\[
\cG_{2d} = \Omega^{0,\bu}(\Sigma, \fg) 
\]
as introduced in \S\ref{sec:hol}. 
Given an invariant pairing $\kappa \in \Sym^2(\fg^*)^\fg$ on obtains a local cocycle $\phi_1 (\kappa) \in \cloc^\bu (\cG_{2d})$ of degree $+1$ defined by
\[
\phi_{\rm KM}(\kappa) : (\alpha,\beta) \mapsto \frac{1}{2 \pi i} \int_{\Sigma} \kappa(\alpha \partial \beta) .
\]
It is shown in~\cite[\S 5.4, Theorem 5.4.2]{CG1} that the vertex algebra corresponding to the {\em twisted} factorization enveloping algebra $\UU_{\phi_{1}(\kappa)}(\cG_{2d})$ on $\Sigma = \CC$ is equivalent to the Kac--Moody vertex algebra at level $\kappa$. 
The global sections of the twisted factorization enveloping algebra over a general surface $\Sigma$ recovers the conformal blocks of the affine Kac--Moody algebra. 

For vector fields, one proceeds similarly. 
Look at the local Lie algebra
\[
\fX_{2d} = \Omega^{0,\bu}(\Sigma, T_\Sigma)
\]
where $T_\Sigma$ is the holomorphic tangent bundle. 
Up to scale, there is one nontrivial cocycle of degree $+1$:
\[
H^1_{\rm loc} (\fX_{2d}) \cong \CC
\]
generated by the cocycle $\psi_{\rm Vir} \in \cloc^\bu(\fX_{2d})$ defined by the formula
\deq[eq: 2d vir]{
\psi_{\rm Vir} : \left(\alpha(z) \frac{\partial}{\partial z}, \beta(z) \frac{\partial}{\partial z}\right) \mapsto \frac{1}{24} \frac{1}{2 \pi i} \int_{\Sigma} \partial_z \alpha(z) \partial (\partial_z \beta(z)) .
}
It is shown in~\cite{Wil-vir} that the vertex algebra corresponding to the twisted factorization enveloping algebra $\UU_{c \psi_{\rm Vir}} (\fX_{2d})$ is equivalent to the Virasoro vertex algebra of central charge $c$. 
The global sections of the twisted factorization enveloping algebra over a general surface $\Sigma$ recovers the conformal blocks of the Virasoro algebra.


\subsection{Extensions of Kac--Moody type}
\label{ssec: KMxtns}
We turn our attention to factorization algebras associated to the local dg Lie algebra $\cG_{\cN}$ on a complex surface $X$ introduced in \S \ref{sec:hol}. 

\subsubsection{The case $\cN=1$}
We have $\cG_1 = \Omega^{0,\bu}(X , \fg)$. 
Twisted enveloping algebras in this case have been studied in \cite{GWkm}.
It is shown that any invariant polynomial of degree three $\theta \in \Sym^{3} (\fg^*)^\fg$ gives rise the local cocycle 
\[
(\alpha_0, \alpha_1,\alpha_2) \mapsto \frac{1}{(2 \pi i)^2} \int_X \theta (\alpha_1 \partial \alpha_2 \partial \alpha_3) 
\]
on $\cG_1$.

\begin{dfn}\label{dfn: kmfact1}
The ($\cN=1$) {\em higher Kac--Moody factorization algebra} on $X$ is the twisted factorization enveloping algebra 
$\UU_{\theta} \left(\cG_{1}\right).$
\end{dfn}

In \cite{GWkm} it is shown that on $X = \CC^2 \setminus 0$ the twisted factorization enveloping algebra gives rise to the two-variable Kac--Moody algebra \cite{FHK}.

\subsubsection{The case $\cN=2$}

Let's move on to the case $\cN=2$, so we are looking at Dolbeault forms valued in the graded Lie algebra $\fg[\ep]$, where $\ep$ is of degree $-1$. 
As above we write $\alpha + \ep \alpha' \in \Omega^{0,\bu}(X, \fg[\ep])$ where $\alpha,\alpha'$ are Dolbeault forms with no $\ep$-dependence. 

\begin{lem}
Let $\omega \in \Omega^{1,hol}(X)$ be a $\partial$-closed holomorphic one-form. 
There are maps of cochain complexes
\beqn\label{eqn:kappa}
\begin{array}{ccccl}
\phi^{(2)}_{\omega} & : &  \Sym^2(\fg^*)^\fg [-1] & \to & \cloc^\bu\left(\cG_{2}\right) \\
& & \kappa & \mapsto & \displaystyle \left( (\alpha, \ep \alpha') \mapsto \frac{1}{(2 \pi i)^2} \int \kappa(\alpha \wedge \partial \alpha') \wedge \omega \right)
\end{array}
\eeqn
and
\beqn\label{eqn:theta}
\begin{array}{ccccl}
\phi^{(3)} & : &  \Sym^3(\fg^*)^\fg [-1] & \to & \cloc^\bu \left(\cG_{2}\right)  \\
& & \theta & \mapsto & \displaystyle \left( (\alpha_0, \alpha_1, \alpha_2) \mapsto \frac{1}{(2 \pi i)^2} \int \theta(\alpha_0 \wedge \partial \alpha_1 \wedge \partial \alpha_2) \right) .
\end{array}
\eeqn
\end{lem}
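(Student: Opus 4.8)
The plan is to verify directly that each of the two proposed maps lands in the kernel of the Chevalley–Eilenberg differential $d_{CE}$ acting on local cochains; that is, I would check that each functional is a genuine degree-$1$ local cocycle for $\sG_{\cN=2}(X) = \Omega^{0,\bu}(X,\fg[\ep])$. Since these are claimed to be maps of cochain complexes out of $\Sym^k(\fg^*)^\fg[-1]$, which carries the zero differential, the content is precisely the cocycle condition $d_{CE}\phi = 0$ together with the locality and grading claims; I would not attempt to characterize the full local cohomology (that is asserted only for $\phi_{\cN=1}$ in the cited theorem).

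First I would dispatch $\phi^{(3)}_{\cN=2}$, which is the easiest case: it is literally the pullback of $\phi_{\cN=1}(\theta)$ along the map $\sG_{\cN=2}\to\sG_{\cN=1}$ setting $\ep=0$, as noted in the text immediately above the lemma. Since pullback along a map of local Lie algebras sends local cocycles to local cocycles, and $\phi_{\cN=1}(\theta)$ was already shown to be one in the preceding theorem, the cocycle property for $\phi^{(3)}_{\cN=2}$ is automatic; I would only need to confirm that the formula $\alpha+\varepsilon\alpha' \mapsto \int \theta(\alpha\wedge\partial\alpha\wedge\partial\alpha)$ is indeed that pullback and that it is local (manifest, being built from $\partial$ and the wedge product applied to sections). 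The degree bookkeeping — that $\phi^{(3)}_{\cN=2}$ has cohomological degree $+1$, so that $\Sym^3(\fg^*)^\fg$ sits in degree $0$ after the shift $[-1]$ — follows from the fact that the three inputs $\alpha$ must contribute total antiholomorphic form degree $2$ to pair against $\int_X$ on a complex surface, exactly as in the $\cN=1$ computation.

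The real work is $\phi^{(2),\omega}_{\cN=2}$. I would plug $\alpha+\varepsilon\alpha'$ into $d_{CE}$ and use that $d_{CE}$ of a functional built as a symmetric pairing decomposes into a term from the internal differential $\dbar$ and a term from the Lie bracket on $\sG_{\cN=2}$. For the $\dbar$ part I would integrate by parts: the wedge of the holomorphic one-form $\omega$ with $\kappa(\alpha\wedge\partial\alpha')$ is closed up to $\dbar$-exact and $\partial$-of-$\omega$ terms, and the hypothesis $\partial\omega=0$ is exactly what kills the would-be obstruction when $\partial$ is moved onto $\omega$; the antisymmetry of $\kappa(\,\cdot\,\wedge\partial\,\cdot\,)$ under $\alpha\leftrightarrow\alpha'$ combined with the Berezin sign from the odd parameter $\ep$ is what makes the symmetric-pairing structure consistent. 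For the bracket part, invariance of $\kappa\in\Sym^2(\fg^*)^\fg$ ensures the cubic term assembled from the Lie bracket vanishes, just as in the classical Kac–Moody story. I would organize this as: (i) write out $d_{CE}\phi = \phi\circ\dbar \pm \phi\circ[-,-]$; (ii) show the quadratic-in-$\dbar$ piece is a total derivative using $\partial\omega=0$ and Stokes; (iii) show the cubic piece vanishes by $\fg$-invariance of $\kappa$.

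The main obstacle I anticipate is getting the signs and the $\ep$-grading in step (ii) exactly right, since the odd parameter $\ep$ of cohomological degree $-1$ interacts with the Koszul signs from the Dolbeault form degrees and with the Berezin integral over $\CC^{0|1}$ implicit in the pairing on $\fg[\ep]$. In particular I would need to track why only the cross term $\kappa(\alpha\wedge\partial\alpha')$ — rather than $\kappa(\alpha\wedge\partial\alpha)$ or $\kappa(\alpha'\wedge\partial\alpha')$ — survives, and confirm that this is forced by the requirement that the total $\ep$-degree match the one unit of $\ep$ available from the Berezin measure. Once the grading forces the unique surviving term, the cocycle condition should reduce cleanly to the two inputs $\partial\omega=0$ and $\fg$-invariance of $\kappa$, with locality being evident throughout since every expression is an integral of a polydifferential expression in the fields.
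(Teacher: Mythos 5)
Your plan follows the paper's proof essentially verbatim: the paper likewise dispatches $\phi^{(3)}_{\cN=2}$ by citing the $\cN=1$ result, and for $\phi^{(2),\omega}_{\cN=2}$ splits the local differential as $\dbar + \d_{CE}$, kills the bracket term by invariance of $\kappa$, and kills the $\dbar$ term by moving $\dbar$ onto the whole integrand (using holomorphy of $\omega$), promoting $\dbar$ to $\d_{dR}$ (using $\partial\omega = 0$), and applying Stokes. The sign and $\ep$-grading bookkeeping you flag as the main obstacle is genuine but routine, and the paper's own proof does not dwell on it.
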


\begin{proof}
The result for $\phi^{(3)}$ follows from the result for  $\cN=1$ in \cite{GWkm}. 
So, all we need to check is that for each $\kappa \in \Sym^2(\fg^*)^\fg$ that $\d \phi^{(2)}_{\omega}(\kappa) = 0$ where $\d$ is the differential on the local Chevalley--Eilenberg complex. 
This differential splits into two parts $\d = \dbar + \d_{CE}$ where $\dbar$ is the usual $\dbar$-operator on $X$ extended to functionals in the natural way, and $\d_{CE}$ encodes the Lie bracket on $\fg[\ep]$. 
The term $\d_{CE} \phi^{(2)}_{\omega}(\kappa)$ vanishes since $\kappa$ is invariant. 
The term $\dbar \phi^{(2)}_{\omega}(\kappa)$ vanishes by the following:
\begin{align*}
(\dbar \phi^{(2)}_{\omega}(\kappa))(\alpha ,\ep \alpha') & =  \frac{1}{(2 \pi i)^2} \int_X \dbar \left(\kappa( \alpha \wedge \partial \alpha')\right) \wedge \omega \\
& =  \frac{1}{(2 \pi i)^2} \int_X \dbar \left(\kappa( \alpha \wedge \partial \alpha') \wedge \omega \right) \\
& =  \frac{1}{(2 \pi i)^2} \int_X \d_{dR} \left(\kappa( \alpha \wedge \partial \alpha') \wedge \omega \right) \\
& = 0 .
\end{align*}
In the second line we used the fact that $\omega$ is holomorphic. 
In the third line we have used the fact that $\partial \omega = 0$.
\end{proof}

\begin{rmk}
One can write these local cocycles as an integrals over superspace $\CC^{2|1}$.
For instance $\phi^{(2)}_{\omega}(\kappa)$ can be written as
\[
\frac{1}{(2 \pi i)^2} \int_{\CC^{2|1}} \kappa(\ul{\alpha} \wedge\partial \ul{\beta}) \wedge \omega.
\]
where $\ul{\alpha} = \alpha + \ep\alpha ' \in \cG_{2} = \Omega^{0,\bu}(\CC^{2|1}, \fg)$. 
\end{rmk}

We arrive at the following definition.

\begin{dfn}\label{dfn: kmfact}
Fix a $\partial$-closed holomorphic one-form $\omega$ on $X$ and invariant polynomials $\kappa, \theta$ on $\fg$ of degree $2,3$ respectively. 
The $\cN=2$ {\em higher Kac--Moody factorization algebra} on $X$ is the twisted factorization enveloping algebra 
\[
\UU_{\omega, \kappa, \theta} \left(\cG_{2}\right) .
\]
\end{dfn}

In Theorem \ref{thm: intro} we parametrized this factorization algebra on $\CC^2$ with respect to a single `level' $\kappa$.
In this notation this corresponds to taking $\omega = \d z_2$.

\subsection{Extensions of Virasoro type}
\label{ssec: VFxtns}

We now describe some local cocycles of the local Lie algebra of (possibly graded) holomorphic vector fields on a complex manifold.

\subsubsection{Holomorphic vector fields}

Recall that $\fX(X)$, introduced in Section \ref{sssec: localVF}, is the local dg Lie algebra given by the Dolbeault resolution of holomorphic vector fields on $X$.
The local cohomology was computed in~\cite{Wthesis}. 
An analogous result to this one is valid for holomorphic vector fields on a complex manifold of any complex dimension. 

\begin{thm}\cite[\S 4.5]{Wthesis}\label{thm:wthesis}
There is an isomorphism of graded vector spaces
\[
H^\bu_{\rm loc}(\fX(X)) \cong H^\bu_{\rm dR} (X) \otimes H_{GF}^\bu(\fw_2) [4] .
\]
Here $\fw_2$ is the Lie algebra of formal vector fields on the formal $2$-disk, and $H_{GF}^\bu(\fw_2)$ is its (reduced) Gelfand--Fuks cohomology. 
\end{thm}

In particular, this result implies that on $X = \CC^2$, the local cohomology of $\fX(\CC^2)$ is isomorphic to a shift of the Gelfand--Fuks cohomology of the Lie algebra of formal vector fields $\fw_2$. 

\begin{rmk}
For any graded vector bundle $E$ there is an embedding of local functionals inside of all functionals $\oloc(\sE) \hookrightarrow \cO_{red}(\sE)$.
This translates to an embedding of sheaves of cochain complexes $\cloc^*(\cL) \hookrightarrow \cred^*(\cL_c)$ for any local Lie algebra $\cL$. 
In the case of vector fields, there is a related cochain complex that has been studied extensively in the context of characteristic classes of foliations \cite{Fuks, Guillemin, LosikDiag, Bernstein}, and more recently in~\cite{KH}. 
Suppose, for simplicity, that $X$ is a compact manifold.
The (reduced) {\em diagonal cochain complex} is the subcomplex 
\[
{\rm C}^\bu_{\Delta,\rm red}(\fX(X)) \subset \cred^\bu(\fX(X))
\]
consisting of cochains $\varphi : \fX(X)^{\otimes k} \to \CC$ satisfying $\varphi(X_1,\ldots,X_k) = 0$ if $\bigcap_{i=1}^k {\rm Supp}(X_i) = \emptyset$. 
That is, the cocycle vanishes unless all of the supports of the inputs overlap nontrivially. 
The inclusion of the local cochain complex $\cloc^*(\fX(X)) \subset \cred^*(\fX(X))$ factors through this subcomplex to give a sequence of inclusions
\[
\cloc^\bu(\fX(X)) \hookrightarrow {\rm C}^\bu_{\Delta,\rm red}(\fX(X)) \hookrightarrow \cred^\bu(\fX(X)) .
\]
\end{rmk}

The theorem implies that local cohomology classes on any complex manifold are characterized by a pair of a de Rham cohomology class on $X$ together with a Gelfand--Fuks class. 
On $\CC^2$ there is an explicit formula for generating local cocycles of this cohomology. 
If $\xi$ is a holomorphic vector field on $\CC^2$, its Jacobian is the function valued $2 \times 2$ matrix whose $ij$ entry is $\partial_{z_i} \xi_j (z)$, where $\xi_i(z)$ is the $i$th component of the vector field $\xi$.
Similarly, if 
\[
\xi = \xi_1(z,\zbar) \frac{\partial}{\partial z_1} + \xi_2(z,\zbar) \frac{\partial}{\partial z_2} \in \fX(\CC^2)
\]
is a Dolbeault valued vector field, then its Jacobian $J \xi$ is the $2\times 2$ Dolbeault valued matrix whose $ij$ entry is
\[
L_{\partial_{z_i}} \xi_j \in \Omega^{0,\bu}(\CC^2) .
\] 

In degree one, the local cohomology of holomorphic vector fields on $\CC^2$ is two-dimensional
\[
H^1_{\rm loc}(\fX(\CC^2)) \cong H^3_{GF} (\fw_2) \cong \CC\<[K_1], [K_2]\> ,
\]
spanned by the cocycles $K_1, K_2$ which have the following explicit descriptions:
\begin{align*}
K_1 (\xi) & = \int_{\CC^2} \Tr(J \xi) \wedge \Tr(\partial J \xi) \wedge \Tr(\partial J \xi) \\
K_2 (\xi) & = \int_{\CC^2} \Tr(J \xi) \wedge \Tr(\partial J \xi \wedge \partial J \xi) - \int_{\CC^2} \Tr(J \xi \wedge J \xi) \wedge \Tr(\partial J \xi) .
\end{align*}

These two cocycles are the holomorphic analogs of the so-called $a$ and $c$ cocycles which describe conformal anomalies for theories on $\RR^4$~\cite{Wthesis}.

\subsubsection{Graded vector fields}
Next, we turn to local cohomology classes for the local Lie algebra $\fX (X^{2|\cN-1}) $, which in \S \ref{sec:hol} we understood as the local dg Lie algebra of holomorphic vector fields on the graded manifold $X^{2|\cN-1}$. 
The graded generalization of Theorem \ref{thm:wthesis} is:

\begin{thm}
\label{thm:gradedvf}
Let $X$ be a complex manifold of dimension $2$ and suppose $\cN \geq 1$ is an integer. 
There is an isomorphism of graded vector spaces
\[
H^\bu_{\rm loc}(\fX(X^{2|k})) \cong H^\bu_{\rm dR} (X) \otimes H_{GF,red}^\bu(\fw_{2|\cN-1}) [4] .
\]
where $\fw_{2|\cN-1}$ is the graded Lie algebra of formal vector fields on the formal graded $2|\cN-1$-disk
\[
\fw_{2|\cN-1} = {\rm Der}\left(\CC[[z_1,z_2, \ep_1,\ldots, \ep_{\cN-1}]]\right) .
\]
Here $\ep_i$ have cohomological degree $-1$. 
\end{thm}

We will not prove, or explicitly use, this result. 
We postpone its proof to future work.

Just as in the non-graded cases, the local cohomology of graded holomorphic vector fields is whittled down to an understanding of the Gelfand--Fuks cohomology of the corresponding graded Lie algebra of formal vector fields. 
\begin{itemize} 
\item We do not know of a computation of the full cohomology of $\fw_{2|1}$. 
The existence of the cocycle we introduce below in Definition \ref{def: N=2 VF cocycle} implies $\dim H_{GF}^5 (\fw_{2|1}) \geq 1$. 
The claim that it represents a nontrivial class follows from the fact that it localizes to the ordinary Virasoro cocycle, see Section \ref{sec: defsym}. 
\item When $\cN = 3$ a result of \cite{AFuchs} implies $H^\bu_{GF} (\fw_{2|2}) = H^{\bu - 4}_{dR} (\U(2))$. 
\item For $\cN \geq 4$ one has $H^\bu_{GF} (\fw_{2|\cN-1}) \cong H^{\bu}_{dR} (S^5)$ by \cite{AFuchs}. 
\item There is a completely similar version of this theorem in complex dimension one. 
It is shown in \cite{AFuchs} that $H^\bu(\fw_{1|\cN-1}) \cong H_{dR}^\bu(S^3)$ for all $\cN \geq 1$. 
This agrees with the well-known fact \cite{Kac} that up to scale there is a unique central extension of vector fields on the graded punctured disk giving rise to the super Virasoro Lie algebras.
\end{itemize}

We will only be concerned with the case $\cN=2$ from hereon.
In fact, there is just one class of cocycles we need to pay attention to. A complete characterization like in the case of $\cN=1$ will be the subject of future work. 

The definition is the following. 

\begin{dfn}
  \label{def: N=2 VF cocycle}
For $i=1,2$ define the local cocycle $\psi^{i} \in \cloc^\bu(\fX_{2})$ by the formula
\[
\psi_i \left(\xi ,\ep \xi' \right) = \frac{1}{(2\pi i)^2} \int \tr(J \xi) \wedge \partial \tr(J \xi ') \wedge \d z_i
\]
where $\xi, \xi' \in \Omega^{0,\bu}(\CC^2, T^{1,0})$ are Dolbeault valued vector fields on $\CC^2$. 
The cochain $\psi_i$ is independent of odd vector fields of the form $f(z_1,z_2, \ep) \frac{\partial}{\partial \ep}$. 
\end{dfn}

The verification that $\psi_i$ is a cocycle is a direct calculation similar to the Kac--Moody case above, and the details are left to the reader. 
We remark that as $\psi^i$ is a cocycle we can then form the twisted factorization enveloping algebra. 

\begin{dfn}\label{dfn: virfact}
The $\cN=2$ {\em higher Virasoro factorization algebra} on $\CC^2$ is the twisted factorization enveloping algebra 
\[
\UU_{c_1 \psi_1 + c_2 \psi_2} \left(\fX_{2}\right) .
\]
where $c_1,c_2 \in \CC$. 
\end{dfn}

In Theorem \ref{thm: intro} we parametrized this algebra with respect to a single `central charge' $c \in \CC$.
In this notation this is $c = c_1$.


\section{Superconformal localization and holomorphic factorization algebras}
\label{sec: defsym}

In this section we study instances of deformations of the higher dimensional holomorphic factorization algebras introduced in~\S\ref{sec:enhance}. 
These deformation arise from a class of Maurer--Cartan elements of $\fX_\cN$ ($\cN=1,2,4$) which, in turn, define deformations of the Lie algebra of holomorphic currents, holomorphic vector fields, and holomorphic field theories such as those arising from twists of supersymmetry. 
The algebra $\fX_\cN$ is the Dolbeault resolution of some graded extension of holomorphic vector fields; for any $\cN$ there is a familiar class of Maurer--Cartan elements describing deformations of complex structure on the underlying complex manifold $X$. 
We focus mostly on the case $\cN=2$ and deformations that are not of this type.\footnote{Tautologically, any Maurer--Cartan element describes a deformation of the complex structure on the {\em graded} complex manifold $X^{2|1}$, however.}




One such deformation, which in the untwisted superconformal algebra arises from a special supersymmetry, is given by the holomorphic vector field 
\deq[eq:supercharge]{
z_2 \frac{\partial}{\partial \ep} \in \fX_{2} (\CC^2) .
}
This is the supercharge considered by Beem et al.~\cite{BeemEtAl}; we will show in this section that the chiral algebras they consider agree precisely with the corresponding truncations of our higher symmetry algebras.

In Definition \ref{dfn: kmfact} we have defined a factorization algebra on $X = \CC^2$ for any choice of a closed holomorphic one-form on $\CC^2$ and invariant polynomials $\kappa, \theta$ of degree $2,3$ on $\fg$. 
We fix such polynomials once and for all and also
consider the holomorphic one-form $\d z_2$.
For any $k_{4d} \in \CC$ we have an associated twisted enveloping factorization algebra $\UU_{\d z_2,k_{4d}\kappa,\theta} (\cG_2)$ on~$\CC^2$. 
Here, $k_{4d}$ is the four-dimensional avatar of the level, it simply scales the fixed symmetric bilinear form $\kappa$. 

In \S \ref{sec: defcur}, we will see how the Maurer--Cartan element (\ref{eq:supercharge}) determines a deformation of this enveloping factorization algebra to a factorization algebra that we will denote by $\cF_{k_{4d}} (\fg)$. 

\begin{thm}
\label{thm:kmlocal}
The factorization algebra $\cF_{k_{4d}} (\fg)$ is trivial away from $\{z_2 = 0\} \hookrightarrow \CC^2$. 
The localized factorization algebra $\cF_{k_{4d}} (\fg) |_{\CC_{z_1}}$ is holomorphic and its associated vertex algebra is isomorphic to the Kac--Moody vertex algebra of level $-\frac{k_{4d}}{2}$. 
\end{thm}

We point out that the dependence on the cubic polynomial $\theta$ has completely disappeared upon deforming by \eqref{eq:supercharge}.
The point is that the local cocycle \eqref{eqn:theta} determined by $\theta$ is rendered homotopically trivial upon performing this deformation.

The result for the Virasoro algebra is similar in spirit. 
We consider the deformation $\cF_{c_{4d}}$ of the $\cN=2$ Virasoro factorization algebra on $\CC^2$ associated to the cocycle $c_{4d} \psi_2$ (see notation from Section \ref{ssec: VFxtns} by the Maurer--Cartan element $z_2 \partial_\ep$. 

\begin{thm}
\label{thm:virlocal}
The factorization algebra $\cF_{c_{4d}}$ is trivial away from $\{z_2 = 0\} \hookrightarrow \CC^2$. 
The localized factorization algebra $\cF_{c_{4d}} |_{\CC_{z_1}}$ is holomorphic and its associated vertex algebra is isomorphic to the Virasoro vertex algebra of central charge $-12 c_{4d}$.
\end{thm}

This localization phenomena can also be used to see the conformal blocks of the Kac--Moody and Virasoro vertex algebras. 
In terms of factorization algebras this amounts to studying the factorization algebras on the not on $\CC^2$ but on the complex surface $\Sigma_{z_1} \times \CC_{z_2}$ where $\Sigma_{z_1}$ is some arbitrary genus Riemann surface. 
The results we have stated go through without much more difficulty to show that these factorization algebras localize to the Kac--Moody and Virasoro factorization algebras on $\Sigma_{z_1}$. 
The global sections, or factorization homology, along $\Sigma_1$ recover the conformal blocks of the respective vertex algebras.

We remark that there are new classes of deformations available from the point of view of factorization algebras that did not exist in the symmetry algebras of untwisted $\cN=2$ supersymmetric field theories. 
For instance, as a generalization of the above example, for any holomorphic polynomial $f \in \CC[z_1,z_2]$, one can consider the graded vector field
\beqn
f(z_1,z_2)  \frac{\partial}{\partial \ep} \in \fX_{2} (\CC^2) .
\eeqn
This is a Maurer--Cartan element in the dg Lie algebra $\fX_{2}(\CC^2)$, and hence determines a deformation (at least at the classical level) of any holomorphic twist of a four-dimensional $\cN=2$ theory. While we do not  consider these new deformations explicitly in great detail here, we will offer some remarks on them in~\S\ref{ssec: moredefs} below.

\subsection{Localization for factorization algebras} 

Suppose $X \hookrightarrow Y$ is a closed embedding. 
A factorization algebra on a manifold $Y$ can be restricted to any open subset. 
We say a factorization algebra on $Y$ is {\em trivial} away from $X$ if $\cF|_{Y \setminus X}$ is equivalent to the trivial factorization algebra which takes value $\CC$ on every open subset.
One can think of the factorization algebra $\cF$ as being ``localized" along the submanifold $X$. 

We will define an explicit model for such a factorization algebra on $Y$ as a factorization algebra on the closed submanifold $X$. 
Of course, factorization algebras do not ``restrict" to closed submanifolds in a naive sense. 
To define the localized factorization algebra on the closed submanifold $X$ we will utilize a tubular neighborhood. 
For concreteness, we restrict to the case $Y = \RR^n$. 
Let $\pi : NX \to \RR^n$ be the normal bundle, so that there is a neighborhood of the zero section in $NX$ which is diffeomorphic to a tubular neighborhood ${\rm Tub}(X)$ of $X$.

\begin{dfn}
Suppose $X \hookrightarrow \RR^n$ is a closed submanifold with tubular neighborhood $\pi : {\rm Tub}(X) \to X$. 
The {\em restriction} of a (pre)factorization algebra $\cF$ on $\RR^n$ to $X$ is the (pre)factorization algebra 
\[
\cF|_X = \pi_* \left(\left. \cF \right|_{{\rm Tub}(X)} \right)
\]
on $X$. 
\end{dfn}

In the case that $\cF$ is trivial away from $X \subset \RR^n$, we say that $\cF$ {\em localizes} to the factorization algebra $\cF|_X$.

\begin{rmk}
For the class of factorization algebras we consider all constructions will be independent of the choice of a tubular neighborhood.
\end{rmk}

\subsection{A deformation of the current algebra} 
\label{sec: defcur}

We focus on the case $\cN=2$, but similar constructions work for any $\cN \geq 2$. 
We deform the holomorphic current algebra $\cG = \cG_{2} = \Omega^{0,\bu}(\CC^2, \CC[\ep])$ by modifying the differential using the the Maurer--Cartan element $z_2 \frac{\partial}{\partial \ep}$.
Define the local Lie algebra on $\CC^2$:
\deq{
\cG' = \left(\Omega^{0,\bu}(\CC^2, \fg[\varepsilon])\; , \; \dbar + z_2 \frac{\partial}{\partial \varepsilon} \right) .
}
Leaving the internal $\dbar$ differential implicit, we can view this deformation as a two-term complex
\beqn\label{twotermdol}
\begin{tikzcd}
\ul{-1} & \ul{0} \\
\varepsilon \; \Omega^{0,\bu}(\CC^2, \fg) \ar[r, "z_2 \frac{\partial}{\partial \varepsilon}"] & \Omega^{0,\bu}(\CC^2, \fg) 
\end{tikzcd}
\eeqn
The Lie bracket remains unmodified, identical to that on~$\cG_{\fg[\ep]}$. 
This deformation is clearly given by a differential operator on~$\CC^2$, and hence this deformation remains a local Lie algebra on~$\CC^2$.

At the level of sheaves, the two-term complex~\eqref{twotermdol} is a Dolbeault resolved version of the usual Koszul resolution of the pushforward of the structure sheaf $\cO_{\CC_{z_1}}$ along the map $i : \CC_{z_1} \hookrightarrow \CC^2$ which is the embedding of $\CC_{z_1}$ at $z_2 = 0$:
\beqn\label{twoterm}
\left(
\begin{array}{cccc}
\ul{-1} & & \ul{0}  \\
\varepsilon \; \cO (\CC^2) \otimes \fg & \xto{z_2 \frac{\partial}{\partial \ep}} & \cO (\CC^2) \otimes \fg
\end{array}
\right) \;\;
\xto{\simeq} \;\;  i_* \cO(\CC_{z_1}) \otimes \fg 
\eeqn
The quasi-isomorphism is the restriction map that takes a holomorphic function on~$\C^2$ to its restriction along~$\C_{z_1}$; an explicit quasi-inverse is given, for example, by pulling back a holomorphic function on~$\C_{z_1}$ along the obvious projection map $\pi: \C^2 \rightarrow \C_{z_1}$ and placing the result in degree zero.
 
As with any local Lie algebra, we can consider both its sheaf of sections $\cG'$ and its cosheaf of compactly supported sections $\cG'_{c}$. 
Just as in the case of the sheaf of sections, in cohomology there is an isomorphism of graded cosheaves on $\CC^2$:
\beqn\label{homology}
H^*\left(\cG'_{c}\right) \cong i_* H^*\left(\Omega_c^{0,\bu}(\CC_{z_1}, \fg)\right) .
\eeqn
This statement for cosheaves follows formally from the result about sheaves, but only at the level of cohomology. 
We are interested in a cochain level of this localization result---not only at the level of cosheaves of Lie algebras, but at the level of the corresponding factorization algebras. 

In the notation of Theorem \ref{thm:kmlocal} the factorization algebra of study is 
the enveloping factorization algebra $\cF_{0} (\fg) = \UU(\cG')$.
This factorization algebra is trivial away from the $z_1$-plane.


%

\begin{lem} \label{lem: localkm} 
The factorization enveloping algebra $\UU (\cG')$ is trivial away from $\CC_{z_1} \times \{0\} \hookrightarrow \CC^2$. 
\end{lem}
\begin{proof}
This follows from a statement just about cosheaves of Lie algebras.
Indeed, the cosheaf $\cG_{c}'$, when restricted to the large open stratum, is equivalent to the trivial cosheaf:
\[
\left. \cG_{c}' \right|_{\CC^2 \setminus \CC_{z_1}} \simeq 0
\]
To see this, it suffices to notice that restricting to $z_2 \neq 0$ amounts to inverting $z_2$ in the ring of holomorphic functions on~$\C^2$, over which $\Omega^{0,\bu}(\C^2)$ is a module. Multiplication by an invertible element acts by an isomorphism on the module, so that the complex~\eqref{twoterm} is obviously acyclic after localization at~$z_2$.
\end{proof}

Next, we would like to characterize the factorization algebra $\UU(\cG')$ on the stratum $\CC_{z_1}$. 
As explained in the previous section, the general idea is to use an open tubular neighborhood of the small stratum, and then to push forward the restriction of the factorization algebra~$\UU(\cG')$ to~$\C_{z_1}$ along the projection map. In the case at hand, there is already an obvious projection map~$\pi:\C^2 \rightarrow \C_{z_1}$, but we want to emphasize that our considerations likely generalize to arbitrary curves in~$\C^2$, as considered below in~\S\ref{ssec: moredefs}.

We consider the restricted factorization algebra $\UU(\cG')|_{\CC_{z_1}}$ defined by
\[
\pi_* \left(\left. \UU(\cG') \right|_{{\rm Tub}(\CC_{z_1})}  \right)
\]
In this example, it suffices to take the tubular neighborhood to the entire affine space~$\CC^2$. 

Our goal is to find the explicit relationship between this restricted
factorization algebra and the factorization algebra $\clie_\bu \left(\Omega^{0,\bu}_c(\CC_{z_1}, \fg)\right)$. 
In order to do this, we must fix some additional data. 
Let $\rho : \CC^2 \to \CC$ be a smooth function on $\CC^2$ and $U_1 \subset \Bar{U_1} \subset U_2 \subset U$ be open tubular neighborhoods of $\CC_{z_1} \times \{0\}$ satisfying the following two conditions:
\begin{itemize}
\item $\rho|_{U_1} \equiv 1$, and
\item $\rho|_{\CC^2 \setminus U_2} \equiv 0$.
\end{itemize}
We will refer to $\rho$ as a {\em bump function along $z_2 = 0$}; it can be taken to have image in~$[0,1]\subset \C$, but this does not play a role.

Using $\rho$, define the following map of cosheaves of cochain complexes on $\CC_{z_1}$
\[
\begin{array}{ccccc}
s_{\rho} & : & \Omega^{0,\bu}_c(\CC_{z_1}, \fg) & \to & \pi_* \cG_{c}' \\
&&&& \\
& & \alpha & \mapsto & \displaystyle \rho \, \pi^* \alpha - \ep \frac{\dbar(\rho)}{z_2} \wedge \pi^* \alpha .
\end{array}
\]
Note that, by assumption $\dbar(\rho) \equiv 0$ along $z_2 = 0$, so the expression above is well-defined. 

\begin{prop}\label{prop: compact}
For every choice of $\rho$ as above, the map 
\[
s_{\rho} : \Omega^{0,\bu}_c(\CC_{z_1}, \fg) \xto{\simeq} \pi_* \cG_{c}'
\]
is a quasi-isomorphism of cosheaves of cochain complexes on $\CC_{z_1}$.
\end{prop}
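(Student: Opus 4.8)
The plan is to realize $s_\rho$ as a genuine one-sided inverse to the restriction map and then to conclude from the sheaf-level Koszul quasi-isomorphism already recorded in~\eqref{twoterm}--\eqref{twotermdol}. Write $i : \CC_{z_1} \hookrightarrow \CC^2$ for the closed embedding at $z_2 = 0$, so that $\pi \circ i = \mathrm{id}$, and let
\[
r : \sG'_{\cN=2} \longrightarrow i_* \Omega^{0,\bu}(\CC_{z_1}, \fg), \qquad \alpha + \ep\alpha' \longmapsto i^*\alpha,
\]
be the restriction map; this is exactly the Dolbeault-resolved Koszul quasi-isomorphism of~\eqref{twoterm}--\eqref{twotermdol}, hence a quasi-isomorphism of bounded complexes of sheaves on $\CC^2$. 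The strategy has three steps: (i) check that $s_\rho$ is a chain map of cosheaves landing in compactly supported sections; (ii) transfer $r$ to compactly supported sections to obtain a cosheaf quasi-isomorphism $r_c$; and (iii) verify $r_c \circ s_\rho = \mathrm{id}$ and conclude.

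First I would confirm that $s_\rho$ is a well-defined map of cosheaves of cochain complexes. The correction term makes sense because $\dbar\rho \equiv 0$ on the neighbourhood $U_1$ of $\{z_2 = 0\}$, so $\dbar\rho/z_2$ is a smooth $(0,1)$-form; and $s_\rho(\alpha)$ has compact support since $\rho$ is supported in the tube $U_2$ while $\alpha$ is compactly supported on $\CC_{z_1}$. The chain-map property is the short computation
\[
\left(\dbar + z_2 \frac{\partial}{\partial\ep}\right) s_\rho(\alpha) = \left(\dbar\rho\wedge\pi^*\alpha + \rho\,\pi^*\dbar\alpha - \dbar\rho\wedge\pi^*\alpha\right) - \ep\,\frac{\dbar\rho}{z_2}\wedge\pi^*\dbar\alpha = s_\rho(\dbar\alpha),
\]
in which the two copies of $\dbar\rho\wedge\pi^*\alpha$ cancel and the $\ep$-component survives because $\dbar(\dbar\rho/z_2)=0$ (as $\dbar\dbar\rho=0$ and $1/z_2$ is holomorphic wherever $\dbar\rho$ is nonzero). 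Compatibility with the extension-by-zero corestrictions is immediate, since $s_\rho$ is given by a fixed local formula.

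Next I would push $r$ to compactly supported sections. In each cohomological degree, both $\sG'_{\cN=2}$ and $i_*\Omega^{0,\bu}(\CC_{z_1},\fg)$ are finite sums of (pushforwards along the closed embedding $i$ of) Dolbeault sheaves, hence bounded complexes of $c$-soft sheaves; as $c$-soft sheaves are $\Gamma_c$-acyclic, $\Gamma_c(W,-)$ computes compactly supported hypercohomology on any open $W \subseteq \CC^2$, so the quasi-isomorphism $r$ induces a quasi-isomorphism on $\Gamma_c(W,-)$. Taking $W = \pi^{-1}(V)$ for open $V \subseteq \CC_{z_1}$ and using $\Gamma_c(\pi^{-1}(V), i_*(-)) = \Gamma_c(V,-)$ yields a cosheaf quasi-isomorphism $r_c : \pi_*\sG'_{\cN=2,c} \xto{\simeq} \Omega^{0,\bu}_c(\CC_{z_1},\fg)$, which is precisely the cochain-level map realizing the isomorphism~\eqref{homology}. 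Finally, for $\alpha \in \Omega^{0,\bu}_c(V,\fg)$ the $\ep$-term of $s_\rho(\alpha)$ is annihilated by $r$, so
\[
r_c\bigl(s_\rho(\alpha)\bigr) = i^*\bigl(\rho\,\pi^*\alpha\bigr) = (i^*\rho)\cdot(\pi\circ i)^*\alpha = \alpha,
\]
using $i^*\rho \equiv 1$ and $\pi\circ i = \mathrm{id}$; thus $r_c \circ s_\rho = \mathrm{id}$. Passing to cohomology, $H(r_c)\,H(s_\rho) = \mathrm{id}$, and since $H(r_c)$ is an isomorphism this forces $H(s_\rho) = H(r_c)^{-1}$ to be an isomorphism, so $s_\rho$ is a quasi-isomorphism of cosheaves.

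The genuine content—and the reason the bump function $\rho$ enters at all—lies in step (ii): the evident quasi-inverse $\pi^*$ used at the sheaf level in~\eqref{twoterm} does \emph{not} preserve compact support, so there is no pointwise quasi-inverse to $r_c$ and one cannot naively transport the sheaf-level homotopy. The $c$-softness argument sidesteps this by computing both sides as compactly supported hypercohomology, with $s_\rho$ then supplying an honest cochain-level section. The alternative, more hands-on route would construct an explicit contracting homotopy witnessing $s_\rho \circ r_c \simeq \mathrm{id}$ via fibrewise integration in the $z_2$-direction (a compactly supported Dolbeault Poincar\'e lemma along the fibre); this is the analytically delicate step that I expect to be the main obstacle, and which the soft-sheaf argument is designed to avoid.
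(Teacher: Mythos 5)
Your proof is correct, and while the opening cochain-map verification coincides with the paper's, the core of your argument takes a genuinely different route. The paper's proof is analytic: on each open $U \subset \CC_{z_1}$ it embeds $(\pi_* \sG_{\cN=2,c}')(U)$ into the topological dual $\Omega^{2,\bu}(U \times \CC_{z_2})^\vee[\ep][-2]$ via Serre duality, invokes the Atiyah--Bott lemma (using ellipticity of $\dbar + z_2 \frac{\partial}{\partial \ep}$) to see that this embedding is a quasi-isomorphism, computes the dual complex by the $\dbar$-Poincar\'e lemma---recognizing the linear dual of the Koszul resolution~\eqref{twoterm}---and then applies one-dimensional Serre duality and Atiyah--Bott once more to match the answer with the cohomology of $\Omega^{0,\bu}_c(U,\fg)$. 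You instead argue sheaf-theoretically: the Dolbeault sheaves on both sides are c-soft, hence $\Gamma_c$-acyclic, so the sheaf-level restriction quasi-isomorphism $r$ of~\eqref{twoterm}--\eqref{twotermdol} descends to a quasi-isomorphism $r_c$ on compactly supported sections, and the on-the-nose identity $r_c \circ s_\rho = \mathrm{id}$ then forces $H(s_\rho) = H(r_c)^{-1}$. What each buys: the paper's duality argument explicitly identifies the cohomology of the pushforward cosheaf (as duals of spaces of holomorphic sections), staying within the elliptic-analytic toolkit used throughout the factorization-algebra literature, whereas your argument avoids elliptic regularity and topological duals entirely and transports verbatim to other closed embeddings. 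Your route also closes a small logical gap: as written, the paper's computation shows only that source and target have abstractly isomorphic (infinite-dimensional) cohomology, which by itself does not imply that the particular chain map $s_\rho$ induces the isomorphism; the needed observation that restriction provides a one-sided inverse is made explicit only in the Virasoro analogue in \S\ref{ssec: defVF}. By proving that $r_c$ itself is a quasi-isomorphism, rather than merely computing both cohomologies, you establish the statement for $s_\rho$ directly.
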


\begin{rmk}
One can view $s_{\rho}$ as an approximation to the map which ``pulls back'' a compactly supported Dolbeault form along the map $\pi : \CC^2 \to \CC$.
The first problem is that since $\pi$ is not proper, pulling back does not preserve compact support.
So, in order to make sense of the pulled back map we must weight it with the function $\rho$. 
The second problem arises due to the fact that $\rho$ is not holomorphic, and so the assignment $\alpha \mapsto \rho \;  \pi^* \alpha$ is not compatible with the $\dbar$-operator. 
It is, however, compatible up to a term proportional to $z_2$.
Hence we can add the $\ep$-dependent term to correct this naive assignment to a cochain map.
\end{rmk}

\begin{rmk}
The map $s_{\rho}$ is independent of the bump function $\rho$ up to homotopy. 
Indeed, a different choice of a bump function $\rho'$ will result in homotopy equivalent maps $s_{\rho} \sim s_{\rho'}$. 
\end{rmk}

\begin{proof}[{Proof of Proposition~\ref{prop: compact}}]
First, we check that $s_{\rho}$ is a cochain map. 
Since the statement is independent of the Lie algebra $\fg$, we will assume $\fg = \CC$ is the trivial Lie algebra for this proof.

For $\alpha \in \Omega_c^{0,\bu}(\CC_{z_1})$, note
\[
\begin{array}{ccl}
\displaystyle \left(\dbar + z_2 \frac{\partial}{\partial \ep}\right) (s_\rho(\alpha)) & = &\displaystyle \dbar(\rho) \wedge \pi^* \alpha + \rho \pi^* \dbar(\alpha) - \ep \frac{\dbar(\rho)}{z_2} \wedge \pi^* \dbar(\alpha) - \dbar(\rho) \wedge \pi^* \alpha \\ & & \\ & = & \displaystyle \rho \pi^* \dbar(\alpha) - \ep \frac{\dbar(\rho)}{z_2} \wedge \pi^* \dbar(\alpha)
\end{array}
\]
This is precisely $s_\rho (\dbar \alpha)$, as desired. 

We now compute the cohomology of the cosheaf $\pi_*\cG_{c}'$. 
On an open set $U \subset \CC_{z_1}$, the value of this cosheaf is $\Omega^{0,\bu}_c (U \times \CC_{z_2})$. 
Using Serre duality, we can identify 
\[
\Omega^{2,\bu}(U \times \CC_{z_2})^\vee \cong \Bar{\Omega}^{0,\bu}_c (U \times \CC_{z_2}) [2] .
\]
This leads to an embedding
\[
(\pi_*\cG'_{c})(U) \hookrightarrow \left(\Omega^{2,\bu}(U \times \CC_{z_2})^\vee [\ep] [-2] , \dbar + z_2 \frac{\partial}{\partial \ep} \right) .
\]
Since the operator $\dbar + z_2 \frac{\partial}{\partial \ep}$ is elliptic, we can apply the Atiyah--Bott Lemma \cite{AB} to see that this embedding is a quasi-isomorphism. 

Thus, it suffices to compute the cohomology of
\[
\left(\Omega^{2,\bu}(U \times \CC_{z_2})^\vee [\ep] [-2] , \dbar + z_2 \frac{\partial}{\partial \ep} \right) .
\]
By the $\dbar$-Poincar\'{e} lemma, this is a equivalent to two-term cochain complex
\[
  \left(\Omega^{2,\text{hol}}(U \times \CC_{z_2})^\vee [\ep] [-2] , z_2 \frac{\partial}{\partial \ep} \right) .
\]
where $\Omega^{2,\text{hol}}$ denotes the sheaf of holomorphic sections of the canonical bundle on $\CC^2$.
We recognize this cochain complex as being linear dual to the ordinary Koszul resolution (\ref{twoterm}) of $\Omega^{1,\text{hol}}(U)[-1]$. 
Thus, we can identify the cohomology of $\pi_*\cG_{c}'(U)$ with 
\[
  \Omega^{1,\text{hol}}(U)^\vee [-1] 
\] 
where $\Omega^{1,\text{hol}}(U)$ is the holomorphic sections of the canonical bundle on $U$. 
Finally, by one-dimensional Serre duality on $\CC_{z_1}$ and by applying Atiyah-Bott Lemma again, this is precisely the $\dbar$-cohomology of $\Omega^{0,\bu}_c(U)$, as desired. 
\end{proof}

A simple observation reveals that $s_{\rho}$ is certainly {\em not} compatible with the Lie brackets, hence is not a map of precosheaves of dg Lie algebras. 
However, the failure for $s_{\rho}$ to be compatible with the Lie brackets is exact for the differential. 
In other words, $s_{\rho}$ can be corrected to an {\em $L_\infty$ map} of precosheaves of dg Lie algebras. 
This $L_\infty$-map will be enough to deduce the statement about factorization algebras, as any $L_\infty$ map induces a map on the Chevalley-Eilenberg cochain complexes. 

In what follows, we set $s^{(1)}_\rho = s_{\rho}$.
Define the $2$-ary map of degree $(-1)$:
\[
\begin{array}{ccccl}
s_{\rho}^{(2)} & : &  i_* \Omega^{0,\bu}_c(\CC_{z_1}, \fg) \times  i_* \Omega^{0,\bu}_c(\CC_{z_1}, \fg) & \to & \cG'_{c} [-1] \\
&&&& \\
&& (\alpha, \beta) & \mapsto & \displaystyle \ep \frac{\rho(\rho-1)}{z_2} [\pi^* \alpha , \pi^* \beta ] .
\end{array}
\]
Note that the expression is well-defined since $1 - \rho \equiv 0$ along $z_2 = 0$. 

\begin{prop}\label{prop: Linf1}
The pair of maps $(s^{(1)}_\rho, s^{(2)}_\rho)$ determine an $L_\infty$ quasi-isomorphism of precosheaves of dg Lie algebras on $\CC_{z_1}$:
\[
(s^{(1)}_\rho, s^{(2)}_\rho) : \Omega^{0,\bu}_c(\CC_{z_1}, \fg) \rightsquigarrow \pi_* \cG_{ c}' .
\]
\end{prop}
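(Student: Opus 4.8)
The plan is to verify that $(s^{(1)}_\rho, s^{(2)}_\rho)$ satisfy the two relevant $L_\infty$-morphism relations and then invoke Proposition~\ref{prop: compact} to upgrade the result to a quasi-isomorphism. Recall that an $L_\infty$ map between dg Lie algebras (viewed as $L_\infty$-algebras with vanishing higher brackets) is a collection of maps $s^{(n)}$ of degree $1-n$ satisfying, for each arity, a compatibility relation with the differentials and brackets. Since the source $\Omega^{0,\bu}_c(\CC_{z_1}, \fg)$ is an ordinary dg Lie algebra and the target $\pi_* \sG_{\cN=2,c}'$ likewise has only a binary bracket, the only relations that can be nontrivial are the linear relation (that $s^{(1)}_\rho$ is a cochain map, already established in the proof of Proposition~\ref{prop: compact}), the quadratic relation (measuring the failure of $s^{(1)}_\rho$ to be a strict Lie map, which must be trivialized by $d s^{(2)}_\rho$), and the cubic relation. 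First I would recall the cochain-map property from Proposition~\ref{prop: compact}, so that only the higher relations remain.

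The heart of the computation is the quadratic relation, which in this setting reads
\[
  s^{(1)}_\rho [\alpha, \beta] - [s^{(1)}_\rho \alpha, s^{(1)}_\rho \beta] = \left(\dbar + z_2 \tfrac{\partial}{\partial \ep}\right) s^{(2)}_\rho(\alpha, \beta) + s^{(2)}_\rho(\dbar \alpha, \beta) \pm s^{(2)}_\rho(\alpha, \dbar \beta).
\]
I would compute the left-hand side directly. The term $s^{(1)}_\rho[\alpha,\beta]$ contributes $\rho\,\pi^*[\alpha,\beta]$ together with an $\ep$-term proportional to $\dbar(\rho)/z_2$; the bracket $[s^{(1)}_\rho\alpha, s^{(1)}_\rho\beta]$ expands, using the componentwise form $s^{(1)}_\rho\alpha = \rho\,\pi^*\alpha - \ep\,\tfrac{\dbar(\rho)}{z_2}\wedge\pi^*\alpha$, into a $\rho^2$ term plus cross-terms; since $\ep^2 = 0$ the surviving mismatch is governed by the factor $\rho - \rho^2 = -\rho(\rho-1)$, which is exactly the coefficient appearing in $s^{(2)}_\rho$. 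Applying the deformed differential to $s^{(2)}_\rho(\alpha,\beta) = \ep\,\tfrac{\rho(\rho-1)}{z_2}[\pi^*\alpha,\pi^*\beta]$ produces, from the $z_2\,\tfrac{\partial}{\partial\ep}$ piece, the term $\rho(\rho-1)[\pi^*\alpha,\pi^*\beta]$ that cancels the mismatch, while the $\dbar$ piece is arranged to match against the remaining $\ep$-terms and the $s^{(2)}_\rho(\dbar(-),-)$ contributions. The key points making this consistent are that $\rho \equiv 1$ and $\rho - 1 \equiv 0$ along $z_2 = 0$, which guarantee that all factors with $z_2$ in the denominator are well-defined and that the correction term $\ep\,\dbar(\rho)/z_2$ is regular.

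Next I would check the cubic relation, which asserts that $s^{(2)}_\rho$ is compatible with the Jacobiator; concretely, one verifies that the appropriate graded-antisymmetrized combination of $[s^{(1)}_\rho(-), s^{(2)}_\rho(-,-)]$ and $s^{(2)}_\rho([-,-],-)$ vanishes. Because $s^{(2)}_\rho$ is built from a single scalar function times the bracket $[\pi^*(-),\pi^*(-)]$, and because $\ep^2 = 0$ forces any composite containing two factors of $s^{(2)}_\rho$ or the $\ep$-part of $s^{(1)}_\rho$ paired with $s^{(2)}_\rho$ to vanish, this relation reduces to the Jacobi identity for $\fg$ weighted by the scalar $\rho(\rho-1)/z_2$; no higher maps $s^{(n)}$ for $n \geq 3$ are needed, and the $L_\infty$ morphism truncates. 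Finally, since the linear component $s^{(1)}_\rho$ is already a quasi-isomorphism by Proposition~\ref{prop: compact}, the pair $(s^{(1)}_\rho, s^{(2)}_\rho)$ is an $L_\infty$ \emph{quasi}-isomorphism, as claimed. The main obstacle I anticipate is purely bookkeeping: tracking the signs and the $\ep$-grading through the quadratic relation, and confirming that every occurrence of $1/z_2$ is cancelled by a factor vanishing along $z_2 = 0$, so that all expressions remain genuine (compactly supported) sections rather than formal fractions.
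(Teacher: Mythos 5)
Your proposal is correct and takes essentially the same approach as the paper: the cochain-map and quasi-isomorphism properties are imported from Proposition~\ref{prop: compact}, and the quadratic $L_\infty$ relation is verified by the same direct computation, with the mismatch $\rho(\rho-1)[\alpha,\beta]$ cancelled by the $z_2\frac{\partial}{\partial\ep}$ part of the deformed differential applied to $s^{(2)}_\rho$ and the $\dbar$-terms matching the $s^{(2)}_\rho(\dbar\,\cdot\,,\cdot)$ contributions. The only difference is that you also spell out the cubic (and implicitly higher) relations---which hold because $\ep^2=0$ kills all cross terms and the remaining shuffle sums vanish by the graded Jacobi identity---a point the paper's proof leaves tacit.
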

\begin{proof}
By Proposition~\ref{prop: compact}, all we need to check is that the pair define a $L_\infty$-morphism.
The $L_\infty$ relation we need to check is of the form
\beqn\label{linfrel}
[s_{\rho}^{(1)} (\alpha), s_{\rho}^{(1)} (\beta)] - s_{\rho}^{(1)} ([\alpha, \beta]) = \left(\dbar + z_2 \frac{\partial}{\partial \ep}\right) s_{\rho}^{(2)}(\alpha, \beta) - s^{(2)}_\rho(\dbar \alpha, \beta) - (-1)^{|\alpha|} s^{(2)}_{\rho} (\alpha, \dbar \beta) 
\eeqn
for $\alpha, \beta \in \Omega^{0,\bu}_c(\CC_{z_1}, \fg)$. 
We prove this relation directly. 
For sake of clutter, we omit the pullback along $\pi$ notation: $\alpha \leftrightarrow \pi^* \alpha \in \Omega^{0,\bu}(\CC^2, \fg)$. 

On one hand, the left hand side of (\ref{linfrel}) is
\[
\left[\rho \alpha - \ep \frac{\dbar(\rho)}{z_2} \wedge \alpha, \rho \beta  - \ep \frac{\dbar(\rho)}{z_2} \wedge \beta\right] - \left(\rho [\alpha, \beta] -  \ep \frac{\dbar(\rho)}{z_2} \wedge [\alpha, \beta]\right) .
\]
Combining terms, we see this is equal to
\[
\rho (\rho - 1) [\alpha, \beta] - \ep \frac{\dbar(\rho)(2 \rho-1)}{z_2} \wedge [\alpha, \beta] .
\]

Now, the right hand side of (\ref{linfrel}) is
\[
\left(\dbar + z_2 \frac{\partial}{\partial \ep}\right) \left(\displaystyle \ep \frac{\rho(\rho-1)}{z_2} [\alpha , \beta ] \right) - \ep \frac{\rho(\rho-1)}{z_2} [\dbar \alpha ,  \beta ] - (-1)^{|\alpha|} \ep \frac{\rho(\rho-1)}{z_2} [\alpha ,\dbar \beta ] 
\]
which matches with the left-hand side by inspection.
\end{proof}

\begin{cor}
Let $\pi : \CC^2 \to \CC_{z_1}$ and $\rho$ be as above.
The $L_\infty$ map $(s^{(1)}_\rho, s^{(2)}_\rho)$ of Proposition \ref{prop: Linf1} defines a quasi-isomorphism of factorization algebras on $\CC_{z_1}$:
\[
\clie_\bu (s_{\rho}) :  \clie_\bu \left(\Omega^{0,\bu}_c(\CC_{z_1}, \fg)\right)  \xto{\simeq} \UU(\cG')|_{\CC_{z_1}} .
\]
\end{cor}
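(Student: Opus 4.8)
The plan is to deduce this corollary formally from Proposition~\ref{prop: Linf1} by applying the Chevalley--Eilenberg chains functor, using only two general facts: that $\clie_\bu(-)$ is functorial for $L_\infty$ morphisms and carries $L_\infty$ quasi-isomorphisms of (pre)cosheaves of dg Lie algebras to weak equivalences of prefactorization algebras, and that a map of factorization algebras which is a weak equivalence on each element of a factorizing basis is itself a weak equivalence. First I would recall that, by construction, the factorization enveloping algebra is obtained by applying $\clie_\bu$ to the cosheaf of compactly supported sections; since $\clie_\bu$ sends $L_\infty$ maps of Lie algebras to maps of cocommutative coalgebras, hence of cochain complexes, the pair $(s^{(1)}_\rho, s^{(2)}_\rho)$ of Proposition~\ref{prop: Linf1} induces a map of prefactorization algebras $\clie_\bu(s_\rho)$ on $\CC_{z_1}$, with precisely the asserted source and target.

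The core of the argument is to check that this map is a quasi-isomorphism on each open $U \subset \CC_{z_1}$. Here $\clie_\bu(s_\rho)|_U$ is the Chevalley--Eilenberg chain complex of the $L_\infty$ quasi-isomorphism $\Omega^{0,\bu}_c(U,\fg) \rightsquigarrow (\pi_*\sG'_{\cN=2,c})(U)$ furnished by Proposition~\ref{prop: Linf1}, whose linear part $s^{(1)}_\rho$ is a quasi-isomorphism by Proposition~\ref{prop: compact}. I would filter both sides by symmetric (``polynomial'') degree, $F_p = \bigoplus_{n \leq p}\Sym^n(-[1])$; the internal Dolbeault differential preserves this degree, while the Lie bracket and the higher structure map $s^{(2)}_\rho$ strictly lower it, so on the associated graded the induced map is simply $\bigoplus_n \Sym^n(s^{(1)}_\rho[1])$ equipped only with the internal $\dbar$ differential. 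Working over $\CC$, each symmetric power of a quasi-isomorphism of cochain complexes is again a quasi-isomorphism, so the map on associated graded is a quasi-isomorphism; since the filtration is exhaustive and bounded below, the associated spectral sequence converges and a standard comparison shows that $\clie_\bu(s_\rho)|_U$ is a quasi-isomorphism.

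Finally I would pass from opens to factorization algebras: the opens $U \subset \CC_{z_1}$, together with their disjoint unions, form a factorizing basis, and a map of factorization algebras that is a weak equivalence on every element of such a basis is a weak equivalence (see~\cite{CG1}). Combined with the previous step, this yields the corollary. I expect the only genuine subtlety to lie in the filtration argument of the middle paragraph---specifically, in guaranteeing convergence of the spectral sequence and the stability of quasi-isomorphisms under symmetric powers; both are standard in characteristic zero and in the bounded-below, exhaustive setting at hand, and are exactly the formal properties of $\clie_\bu$ established in the factorization-algebra literature, so this step can largely be cited rather than reproved.
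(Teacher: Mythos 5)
Your proposal is correct and follows essentially the same route as the paper: the paper likewise deduces the corollary formally from Proposition~\ref{prop: Linf1}, invoking the natural isomorphism $\clie_\bu(\pi_*\sL_c) \cong \pi_*\clie_\bu(\sL_c)$, which you use implicitly when you evaluate both sides on opens $U \subset \CC_{z_1}$. Your middle paragraph---the symmetric-degree filtration showing that an $L_\infty$ quasi-isomorphism of dg Lie algebras induces a quasi-isomorphism on Chevalley--Eilenberg chains---is precisely the standard fact the paper compresses into the phrase ``formal consequence,'' so your argument is a legitimate expansion rather than a different proof.
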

\begin{proof}
This is a formal consequence of Proposition \ref{prop: Linf1} and the fact that pushing forward commutes with taking Chevalley--Eilenberg chains.
Indeed, if $f : X \to Y$ is any map and $\cL$ is a local Lie algebra on $X$, then there is a natural isomorphism of cosheaves
\[
\clie_\bu (f_* \cL_c) \xto{\cong} f_* \clie_\bu (\cL_c) .
\]
\end{proof}

\subsubsection{Central extensions}

We now consider the case where we turn on some non-trivial central extension of the deformed local Lie algebra $\cG'$ and consider the full deformed factorization algebra $\cF_{k_{4d}} = \UU_{\d z_1, k_{4d} \kappa, \theta} (\cG')$. 

Recall that $\cG'$ is a deformation of the current algebra $\cG = \cG_2$. 
In~\S\ref{ssec: KMxtns} we introduced classes in the local cohomology of the undeformed algebra $\cG$.
We are interested in the classes 
\[
\phi^{(2)}_{\d z_1} (k_{4d} \kappa) , \phi^{(3)} (\theta) .
\]
Explicitly, these local cocycles are defined by
\begin{align*}
\phi^{(2)}_{\d z_2} (k_{4d} \kappa) : (\alpha, \ep \alpha') & \mapsto \frac{k_{4d}}{(2\pi i)^2} \int_{\CC^2} \kappa( \alpha \partial \alpha') \d z_2  \\
\phi^{(3)} (\theta) : (\alpha_0,\alpha_1,\alpha_2) & \mapsto \frac{1}{(2 \pi i)^2} \int \theta(\alpha_0 \wedge \partial \alpha_1 \wedge \partial \alpha_2) 
\end{align*}
where $\alpha, \alpha', \alpha_i \in \Omega^{0,\bu}(\CC^2 , \fg)$. 

Upon deforming $\cG \rightsquigarrow \cG'$ each of these remain cocycles in the local cohomology of the deformed algebra. 
However, only the first cocycle remains to be nontrivial.

\begin{lem}
\label{lem: thetatrivial}
The local cohomology class of $\phi^{(3)} (\theta)$ is trivial in $\cloc^\bu(\cG')$ for any $\theta$ as above.
\end{lem}
Recall from~\S\ref{ssec: 2d background} that an invariant pairing $\kappa$ also defines a local cocycle on for the $\fg$-valued holomorphic currents on $\CC_{z_1}$. 
This is the local cocycle that gives rise to the ordinary affine algebra. 

We have already checked the level zero version of localization. 
Theorem \ref{thm:kmlocal} follows from the following computation.
 
\begin{prop} 
  \label{prop: KM cocycle}
Under the pull-back along $s_{\rho} = (s_{\rho}^{(1)}, s_{\rho}^{(2)})$ we have
\deq{
s_{\rho}^* \phi_{\d z_1}^{(2)} (k_{4d} \kappa) = - \frac{k_{4d}}{2} \phi_{2d} (\kappa)  = \phi_{2d} (- k_{4d}\kappa/2) .
}
\end{prop}

\begin{rmk}
Recalling that the scale of~$\kappa$ plays the role of the level, this matches with the result in~\cite{BeemEtAl} that $k_{2d} = - k_{4d}/2$.
\end{rmk}

\begin{proof}
For type reasons, only the pullback along the component $s^{(1)}_\rho$ will contribute a nontrivial class in the cohomology of $\Omega^{0,\bu}_c(\CC_{z_1}, \fg)$. 
Let $\alpha,\beta \in \Omega^{0,\bu}_c(\CC_{z_1}, \fg)$, then
\begin{align*}
(s_\rho^{(1)})^* \phi^{(2)}_{\d z_1} (\alpha,\beta) & = \phi_{\kappa}^{4d} (s_\rho^{(1)}(\alpha), s_\rho^{(1)}(\beta)) \\ 
& =  \phi_{\kappa}^{4d} \left(\rho \pi^* \alpha - \ep \frac{\dbar(\rho)}{z_2} \wedge \pi^* \alpha, \rho \pi^* \beta - \ep \frac{\dbar(\rho)}{z_2} \wedge \pi^* \beta\right)  \\
& = - \frac{1}{(2\pi i)^2} \int_{\CC^2} \d z_2 \wedge \kappa \left(\frac{\dbar(\rho)}{z_2} \wedge \pi^* \alpha \wedge \partial \left(\rho \pi^* \beta \right)\right) \\
& = - \frac{1}{(2\pi i)^2} \int_{\CC^2} \d z_2 \frac{\dbar(\rho^2)}{2z_2}  \wedge \kappa \left(\pi^* \alpha \wedge \partial(\pi^* \beta)\right) - \frac{1}{(2\pi i)^2} \int_{\CC^2} \d z_2 \frac{\partial (\rho) \dbar(\rho)}{z_2}  \wedge \kappa \left(\pi^* \alpha \wedge \pi^* \beta \right) \\
& = - \frac{1}{4 \pi i} \int_{\CC_{z_1}} \kappa (\alpha \partial \beta) \\
& = - \frac{1}{2} \phi_{\kappa}^{2d}(\alpha,\beta) .
\end{align*}
In the fifth line, we have applied Stokes' theorem on an annulus, followed by the residue theorem, in the $z_2$-direction.
But the integral over~$\C_{z_2}$ is also simple to compute by elementary methods, and this is perhaps more illuminating. 
We imagine that our bump function depends only on the radial direction in~$\C_{z_2}$;  that is, $\rho = f(r^2) = f(z_2\zbar_2)$ for some  appropriate function $f$. (The result remains true even if $\rho$ is a more generic bump function.)
It is then easy to see that
\begin{align}
\int_{\C_{z_2}} dz_2\wedge d\bar{z}_2\, \frac{\rho}{z_2}\pdv{\rho}{\bar{z}_2} &= 
\int dz_2 \wedge d\bar{z}_2\, f f'  \\ 
&= - 2i \int_0^\infty \pi d(r^2) \cdot \frac{1}{2} \dv{f^2}{(r^2)}\nonumber \\
&= - \pi i \left. (f^2)\right|^\infty_0 \nonumber \\  
  &= +  \pi i, \nonumber
\end{align}
independent of the choice of~$f$.
\end{proof}

\begin{rmk}
On $\cG$ there are the local cocycles $\phi^{(2)}_{\d z_i}(k_{4d} \kappa)$ for $i=1,2$. 
Upon making our deformation by $z_2 \partial_{\ep}$, only the cocycle $i=1$ remains nontrivial since the deformation has the effect of localizing to the plane $\{z_2=0\}$. 
Without much more difficulty one can generalize this result to a general hyperplane.
Consider the localization of the factorization algebra to the complex plane
\[
P \; : \; \left\{c_1 z_1 + c_2 z_2 = 0\right\} 
\]
which is implemented by the Maurer--Cartan element $c_1 z_1 \partial_\ep + c_2 z_2 \partial_\ep$.  
Denote by ${\rm proj}_P \colon \CC^2 \to P$ the orthogonal projection along $P$.
We can then consider the twisted enveloping factorization algebra of $\cG'$ (where the prime now indicated deformation by this new Maurer--Cartan element) in the presence of the local cocycle
\[
(a_1\phi^{(2)}_{\d z_1} + a_2 \phi^{(2)}_{\d z_2}) (k_{4d} \kappa) .
\]
Upon localizing to the plane $P$, in the same sense as above, one finds the ordinary Kac--Moody factorization algebra supported on $P$ of level
\[
- |{\rm proj}_{P} ({\bf a})| \frac{k_{4d}}{2} \kappa
\]
where ${\bf a} = \begin{pmatrix} a_1 & a_2 \end{pmatrix}$. 
\end{rmk}

\subsection{A deformation of the higher Virasoro algebra}
\label{ssec: defVF}

As above, we deform the local Lie algebra of $\cN=2$ holomorphic vector fields $\fX_2$ by the element $z_2 \frac{\partial}{\partial \ep}$. 
So, consider the local Lie algebra
\deq{
\fX' = \left(\Omega^{0,\bu}(\CC^{2|1}, T\C^{2|1})  , \ \dbar + \left[z_2 \frac{\partial}{\partial \varepsilon}, - \right]\right) .
}
Here $z_2 \partial_\ep$ is acting via the adjoint, or commutator, action.

Consider the map of sheaves
\[
r : \fX \to i_* \left(\Omega^{0,\bu} (\CC_{z_1}, T \CC_{z_1})\right) 
\]
which sends a graded vector field to the restriction of the $z_1$-component to the plane $z_2 = 0$.
That is, if we write a graded vector field as
\[
\ul{\xi} = \xi_1 (z_1,z_2, \varepsilon) \frac{\partial}{\partial z_1} + \xi_2 (z_1,z_2, \varepsilon) \frac{\partial}{\partial z_2} + \xi_\varepsilon (z_1,z_2, \varepsilon) \frac{\partial}{\partial \varepsilon}
\]
then $r(\ul{\xi}) = \xi_1(z_1, z_2 = 0, \varepsilon = 0) \frac{\partial}{\partial z_1}$.
The map $r$ commutes with the Lie bracket with the graded vector field $z_2 \frac{\partial}{\partial \varepsilon}$, so $r$ also defines a map from the deformed $\cN=2$ holomorphic vector fields
\[
r : \fX' \to i_* \left(\Omega^{0,\bu} (\CC_{z_1}, T \CC_{z_1})\right) 
\]
that we denote by the same letter. 

\begin{prop}
  \label{prop: big complex}
Applied to the deformed $\cN=2$ holomorphic vector fields, the map 
\[
r : \fX' \xto{\simeq} i_* \left(\Omega^{0,\bu} (\CC_{z_1}, T \CC_{z_1})\right) 
\]
defines a quasi-isomorphism of sheaves on $\CC^2$.
\end{prop}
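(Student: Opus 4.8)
The plan is to reduce Proposition~\ref{prop: big complex} to a purely algebraic computation of the adjoint action of $z_2\partial_{\ep}$ on holomorphic vector fields, and then to propagate that computation through the $\dbar$-resolution in exactly the manner used in the proof of Proposition~\ref{prop: compact}. Since the assertion is a quasi-isomorphism of sheaves on $\CC^2$, it suffices to check it on a basis of $\dbar$-acyclic (polydisc) opens, together with the observation that both sides are acyclic away from the stratum $\CC_{z_1}$. On an open $U$ with $z_2$ everywhere nonzero the source is acyclic by the localization argument of Lemma~\ref{lem: localkm} (the deformation differential becomes multiplication by an invertible element), and $i_*\Omega^{0,\bu}(\CC_{z_1},T\CC_{z_1})$ is identically zero there, so only the neighborhoods of $\CC_{z_1}$ require work.

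First I would make the deformation differential $\delta := [z_2\partial_{\ep},-]$ explicit on holomorphic vector fields on $\CC^{2|1}$, working over $R = \CC[z_1,z_2]$. Writing a holomorphic graded vector field as
\[
\ul{\xi} = (f_1 + \ep g_1)\,\partial_{z_1} + (f_2 + \ep g_2)\,\partial_{z_2} + (h_0 + \ep h_1)\,\partial_{\ep}, \qquad f_i,g_i,h_j \in R,
\]
a direct computation of the graded commutators (recording that $[z_2\partial_{\ep}, f\partial_{z_2}] = -f\,\partial_{\ep}$ and $[z_2\partial_{\ep}, g\ep\partial_{z_2}] = z_2 g\,\partial_{z_2} + g\ep\,\partial_{\ep}$, while $\partial_{z_1}$-terms only pick up a factor of $z_2$ and $\partial_{\ep}$-terms are annihilated) shows that $\delta$ is $R$-linear. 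Organizing by cohomological degree ($\ep$ in degree $-1$, $\partial_{\ep}$ in degree $+1$) yields a three-term complex of free $R$-modules concentrated in degrees $-1,0,+1$,
\[
R^{2} \xrightarrow{\ \delta_{-1}\ } R^{3} \xrightarrow{\ \delta_{0}\ } R,
\qquad
\delta_{-1}(g_1,g_2) = (z_2 g_1,\, z_2 g_2,\, g_2),
\qquad
\delta_{0}(f_1,f_2,h_1) = z_2 h_1 - f_2,
\]
where the middle term has coordinates $(f_1,f_2,h_1)$. Since $R$ is a domain and $z_2$ is not a zero divisor, $H^{-1} = \ker\delta_{-1} = 0$; since $\delta_0$ is surjective through the $f_2$-coordinate, $H^{+1} = 0$; and a short diagram chase gives $H^{0} \cong R/z_2 R \cong \CC[z_1]$, generated by the class of $f_1$ modulo $z_2 R$. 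This is precisely the space $\CC[z_1]\,\partial_{z_1}$ of holomorphic vector fields on $\CC_{z_1}$, and the induced projection $f_1 \mapsto f_1|_{z_2=0}$ is exactly the restriction map $r$. Thus $r$ is a quasi-isomorphism at the level of holomorphic sections.

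Finally I would upgrade this to the Dolbeault-resolved statement. Because $\delta$ is given by multiplication by the holomorphic function $z_2$ together with the constant degree shifts above, it commutes with $\dbar$ and the pair $(\dbar,\delta)$ forms a double complex. On a polydisc $U$ I would run the spectral sequence that takes $\dbar$-cohomology first: by the Dolbeault--Grothendieck lemma (equivalently, ellipticity and the Atiyah--Bott lemma \cite{AB}, as in Proposition~\ref{prop: compact}) the $\dbar$-cohomology of $\Omega^{0,\bu}(U, E)$ for the relevant finite-rank holomorphic bundle $E$ is concentrated in degree zero and equals its holomorphic sections, so $E_1$ is the complex of free modules of the previous paragraph with $d_1 = \delta$. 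That complex has cohomology in a single bidegree, so the sequence degenerates at $E_2$, giving $\CC[z_1]\,\partial_{z_1}$ in total degree $0$ and matching $r$ with the degeneration; the target sheaf computes the same thing on $U \cap \CC_{z_1}$. Combining the two strata, $r$ is a stalkwise quasi-isomorphism, hence a quasi-isomorphism of sheaves. The main obstacle I anticipate is not conceptual but bookkeeping: correctly tracking the signs produced by the odd coordinate $\ep$ in the graded commutators, and verifying that $\delta$ genuinely descends to the $\dbar$-cohomology so that taking $\dbar$-cohomology first is justified; both are routine once set up carefully.
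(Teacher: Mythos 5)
Your computation is correct, and the algebra at its heart---the adjoint action of $z_2\,\partial/\partial\ep$ sending $\ep g_1\partial_{z_1}\mapsto z_2g_1\partial_{z_1}$, $\ep g_2\partial_{z_2}\mapsto z_2g_2\partial_{z_2}+g_2\ep\partial_\ep$, $f_2\partial_{z_2}\mapsto -f_2\partial_\ep$, $\ep h_1\partial_\ep\mapsto z_2h_1\partial_\ep$---is exactly the computation underlying the paper's proof. But the two arguments package it differently. The paper stays entirely at the cochain level of Dolbeault-valued sheaves: it draws the same six-term diagram you derived (Equation~\eqref{eq: big complex}) and observes that the four terms involving $\partial_{z_2}$ and $\partial_\ep$ form an acyclic subcomplex, because the two diagonal maps $-\iota_{\d z_2}\,\partial/\partial\ep$ are isomorphisms of sheaves of dg vector spaces (the quadrilateral is the cone of an isomorphism). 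This reduces $\fX'_{\cN=2}$ at one stroke to the two-term complex $\ep\,\Omega^{0,\bu}(\CC^2)\,\partial_{z_1}\xto{z_2\partial_\ep}\Omega^{0,\bu}(\CC^2)\,\partial_{z_1}$, which is literally the Dolbeault--Koszul resolution already treated in the current-algebra section (the discussion around~\eqref{twoterm} and Proposition~\ref{prop: compact}), so no spectral sequence and no restriction to a basis of opens is needed. You instead take $\dbar$-cohomology first (Dolbeault--Grothendieck on polydiscs, organized as a double-complex spectral sequence) and then run the cancellation as an explicit kernel/cokernel computation of $R^2\to R^3\to R$, handling the open stratum $z_2\neq 0$ separately. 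Both routes are valid; the paper's buys a global cochain-level identification and maximal reuse of earlier results, while yours makes the answer $H^0\cong R/z_2R$ completely explicit and self-contained. Two small points to tighten: on a polydisc $U$ the relevant ring is $\sO^{hol}(U)$, not $\CC[z_1,z_2]$ (your argument goes through verbatim, since $z_2$ is a nonzerodivisor and $\sO^{hol}(U)/z_2\cong\sO^{hol}(U\cap\CC_{z_1})$ for product polydiscs containing $z_2=0$); and degeneration happens because the $E_1$-page is concentrated in a single row, which kills all higher differentials regardless of where the $E_2$-classes sit.
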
 
\begin{proof}
  The proof is a simple calculation. We can represent the complex $\fX'_{\N=2}$ by the following diagram:
  \begin{equation}
    \label{eq: big complex}
  \begin{tikzcd}[row sep = 4 em, column sep = 2 em]
\ul{-1} & & \varepsilon \; \Omega^{0,\bu}(\CC^2) \; \frac{\partial}{\partial z_1} \ar[dl, "z_2 \frac{\partial}{\partial \varepsilon}"] &  \varepsilon \; \Omega^{0,\bu}(\CC^2) \; \frac{\partial}{\partial z_2} \ar[dl, "z_2 \frac{\partial}{\partial \varepsilon}"] \ar[dr, "- \iota_{\d z_2} \frac{\partial}{\partial \varepsilon}"] & \\ 
\ul{0} & \Omega^{0,\bu}(\CC^2) \; \frac{\partial}{\partial z_1} & \Omega^{0,\bu}(\CC^2) \; \frac{\partial}{\partial z_2} \ar[dr, "- \iota_{\d z_2} \frac{\partial}{\partial \varepsilon}"] & &\varepsilon \; \Omega^{0,\bu}(\CC^2) \; \frac{\partial}{\partial \varepsilon} \ar[dl, "z_2 \frac{\partial}{\partial \varepsilon}"] \\
\ul{1} & & &  \Omega^{0,\bu} (\CC^2) \; \frac{\partial}{\partial \varepsilon} & 
\end{tikzcd}
\end{equation}
The key observation is that the right quadrilateral forms an acyclic sheaf.
Indeed, both the top right and bottom left diagonal maps are isomorphisms of sheaves of dg vector spaces.
We thus conclude that the deformed sheaf $\fX'$ is quasi-isomorphic to the sheaf
\beqn\label{twoterm2}
\begin{tikzcd}
\varepsilon \; \Omega^{0,\bu}(\CC^2) \; \frac{\partial}{\partial z_1}  \ar[r, "z_2 \frac{\partial}{\partial \ep}"] & \Omega^{0,\bu}(\CC^2) \; \frac{\partial}{\partial z_1} 
\end{tikzcd}
\eeqn
appearing at the top left of~\eqref{eq: big complex} in degrees $-1$ and~$0$. From here, the argument is identical to that in the previous section, since we are once more dealing with the Dolbeault resolution of the Koszul complex representing~$\C_{z_1}$; only the Lie structure is different.
\end{proof}

For the cosheaf version of the deformation, we proceed as we did with the current algebra in the previous section. 
Let $\rho : \CC^2 \to \CC$ be a bump function along $z_2 = 0$ as in~\S\ref{sec: defcur}.
Define the map of cosheaves
\begin{equation}
  \begin{aligned}[c]
  s_{\rho} : \quad
  \Omega^{0,\bu}_c(\CC_{z}, T \CC_{z}) &\to \pi^* \left. \fX_{ c}' \right|_{U} \\[1ex]
 \xi \frac{\partial}{\partial z} &\mapsto \displaystyle \left( \rho \, \pi^* \xi\right) \frac{\partial}{\partial z_1} - \ep \left(\frac{\dbar(\rho)}{z_2} \wedge \pi^* \xi\right) \frac{\partial}{\partial z_1} .
 \end{aligned}
\end{equation}

\begin{prop}
  The map $s_{\rho}$ is a quasi-isomorphism of cosheaves of cochain complexes. It can be corrected to an $L_\infty$ morphism of precosheaves of dg Lie algebras.
\end{prop}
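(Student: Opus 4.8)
The plan is to mirror the treatment of the current algebra in Propositions~\ref{prop: compact} and~\ref{prop: Linf1}, using the sheaf-level reduction already established in Proposition~\ref{prop: big complex}. The essential point is that, after that reduction, the deformed vector-field complex looks formally identical to the deformed current complex with abelian $\fg = \CC$, the only bookkeeping difference being an extra trivializing factor of $\partial/\partial z_1$; the one genuinely new input is the behavior of the \emph{vector-field} bracket under multiplication by the bump function $\rho$.

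First, for the quasi-isomorphism claim: by the proof of Proposition~\ref{prop: big complex}, the deformed sheaf $\fX'_{\cN=2}$ is quasi-isomorphic to the two-term complex~\eqref{twoterm2}, which is precisely the Dolbeault-resolved Koszul complex~\eqref{twoterm} representing $\CC_{z_1}$, tensored with the trivial line bundle $T\CC_{z_1}$, trivialized by the constant field $\partial/\partial z_1$. Passing to compactly supported sections and pushing forward along $\pi$, I would then run the cohomology computation of Proposition~\ref{prop: compact} verbatim: Serre duality on $U \times \CC_{z_2}$ identifies the pushforward cosheaf $\pi_* \fX'_{\cN=2,c}$ with the dual of a Dolbeault complex, the Atiyah--Bott lemma makes the relevant embedding a quasi-isomorphism, the $\dbar$-Poincar\'e lemma reduces matters to the holomorphic Koszul complex, and a final application of one-dimensional Serre duality and Atiyah--Bott on $\CC_{z_1}$ identifies the cohomology with the $\dbar$-cohomology of $\Omega^{0,\bu}_c(\CC_{z_1}, T\CC_{z_1})$. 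It remains to check directly that the explicit map $s_\rho$ is a cochain map realizing this identification; this is the same computation as in Proposition~\ref{prop: compact}, the $\ep$-dependent term $-\ep\frac{\dbar(\rho)}{z_2}\wedge\pi^*\xi$ being precisely engineered to cancel the $\dbar$-failure of the naive weighted pullback $\rho\,\pi^*\xi$, now carried along the constant (hence $\dbar$-closed) direction $\partial/\partial z_1$.

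Second, for the $L_\infty$ correction, I would set $s^{(1)}_\rho = s_\rho$ and define the $2$-ary map of degree $-1$
\[
  s^{(2)}_\rho\left(\xi\frac{\partial}{\partial z}, \eta\frac{\partial}{\partial z}\right) = \ep\,\frac{\rho(\rho-1)}{z_2}\,\left[\pi^*\xi\,\frac{\partial}{\partial z_1}, \pi^*\eta\,\frac{\partial}{\partial z_1}\right],
\]
in direct analogy with Proposition~\ref{prop: Linf1}; the prefactor is well-defined since $\rho-1$ vanishes along $z_2 = 0$. The required $L_\infty$ relation has exactly the shape of~\eqref{linfrel}, now with the adjoint differential $\dbar + [z_2\frac{\partial}{\partial \ep}, -]$ in place of $\dbar + z_2\frac{\partial}{\partial \ep}$, and I would verify it by the same term-by-term comparison.

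The main obstacle, and the only place the vector-field case differs substantively from the current-algebra case, is the appearance of derivatives in the Lie bracket. Expanding $[\rho\,\pi^*\xi\,\partial_{z_1}, \rho\,\pi^*\eta\,\partial_{z_1}]$ produces cross-terms proportional to $\partial_{z_1}\rho$; the crucial observation is that these cancel pairwise, because the coefficient functions commute, leaving $[\rho\,\pi^*\xi\,\partial_{z_1}, \rho\,\pi^*\eta\,\partial_{z_1}] = \rho^2\,\pi^*[\xi\,\partial_z, \eta\,\partial_z]$, exactly parallel to the pointwise identity $[\rho\alpha, \rho\beta] = \rho^2[\alpha,\beta]$ used in the abelian case. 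One also checks that the bracket of two $\partial_{z_1}$-valued fields (even with $\ep$-dependent coefficients) acquires no $\partial/\partial\ep$-component, so that $s^{(2)}_\rho$ genuinely lands in the resolved subcomplex. Granting these two points, the entire computation of Proposition~\ref{prop: Linf1} transcribes verbatim, and applying $\clie_\bu$ to the resulting $L_\infty$ morphism yields the desired equivalence of factorization algebras, just as in the corollary following Proposition~\ref{prop: Linf1}.
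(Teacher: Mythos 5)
Your proposal is correct and follows essentially the same route as the paper's proof: reduce via Proposition~\ref{prop: big complex} to the two-term complex~\eqref{twoterm2}, run the computation of Proposition~\ref{prop: compact} for the quasi-isomorphism claim, and correct $s_\rho$ by the same quadratic term $\ep\,\frac{\rho(\rho-1)}{z_2}\,[-,-]$ as in Proposition~\ref{prop: Linf1}. Your explicit check that the $\partial_{z_1}\rho$ cross-terms in $[\rho\,\pi^*\xi\,\partial_{z_1},\rho\,\pi^*\eta\,\partial_{z_1}]$ cancel by graded antisymmetry is a correct point the paper leaves implicit (the paper instead highlights the complementary fact that $\dbar$ obeys a Leibniz rule for the vector-field bracket); both facts are needed, and otherwise the two arguments coincide.
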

\begin{proof}
  We first check that $s_\rho$ is a cochain map. For simplicity of notation, we omit the pullback symbol $\pi^*$. Observe that
  \deq{
  \begin{aligned}[c]
    \left[ \left( \dbar + z_2 \pdv{ }{\epar} \right) , s_\rho(\xi \partial_z) \right]
  &= \left[ \left( \dbar + z_2 \pdv{ }{\epar} \right) , \left( \rho\xi - \epar \frac{\dbar\rho}{z_2} \wedge \xi \right) \pdv{ }{z_1} \right] \\
  &= \left( \left( \dbar + z_2 \pdv{ }{\epar} \right) \left( \rho\xi - \epar \frac{\dbar\rho}{z_2} \wedge \xi \right) \right) \pdv{ }{z_1},
  \end{aligned}
}
so that the computation reduces to that done for Dolbeault forms in the proof of Proposition~\ref{prop: compact}.

We proceed further by showing that the cohomologies on each side agree. This is sufficient, since $s_\rho$ has an obvious one-sided inverse given by the restriction map. But the argument of Proposition~\ref{prop: big complex} is then enough to reduce the computation of the cohomology in this case to that done for Dolbeault forms in the proof of Proposition~\ref{prop: compact}.

The $L_\infty$ correction term takes a familiar form:
\deq{
  \begin{aligned}[c]
    s_\rho^{(2)} : \quad \Omega^{0,\bu}_c(\C_{z_1},T\C_{z_1}) \otimes \Omega^{0,\bu}_c(\C_{z_1},T\C_{z_1}) &\rightarrow \fX_{c}'[-1] \\[1ex]
    \left( \xi \pdv{ }{z_1}, \lambda \pdv{ }{z_1} \right) &\mapsto \epar \frac{\rho(\rho-1)}{z_2} \left[ \xi \pdv{ }{z_1}, \lambda \pdv{ }{z_1} \right] .
  \end{aligned}
}
The proof that $(s_\rho^{(1)}, s_\rho^{(2)})$ together define an $L_\infty$ map proceeds by a straightforward calculation identical to that given above in the Kac--Moody case; the key fact is that $\dbar$ also obeys a Leibniz rule with respect to the Lie bracket of Dolbeault-valued vector fields.
\end{proof}

\subsubsection{Central extensions}

We now consider the case where we turn on some non-trivial central extension of the deformed local Lie algebra $\fX'$ and consider the full deformed factorization algebra $\cF_{c_{4d}} = \UU_{c_{4d} \psi_2} (\fX')$. 

Given the result for $c_{4d} = 0$ in the last section, this will follow from the analogue of Proposition~\ref{prop: KM cocycle} in the Virasoro case.
We again will find agreement with the result of~\cite{BeemEtAl}.
Recall from Definition~\ref{def: N=2 VF cocycle} above that the relevant cocycle takes the form
\deq{
  \psi_i(\xi, \ep \xi') = \frac{1}{(2\pi i)^2} \int \tr( J\xi) \wedge \partial \tr ( J\xi' ) \wedge dz_i.
}

We have already checked the central charge zero version of localization in the previous subsection. 
Theorem \ref{thm:virlocal} follows from the following computation.

\begin{prop}
  \label{prop: VF cocycle}
  Pulling back along the $L_\infty$ map $s_\rho$, we have that
  \deq{
    s_\rho^* \psi_2 = - \frac{1}{2} \psi_{2d}, \qquad s_\rho^* \psi_1 = 0.
  }
  Accounting for a factor of 24 related to the normalization of~$\psi_{2d}$ and discussed in~\S\ref{ssec: 2d background}, this matches the claim in~\cite{BeemEtAl} that $c_{2d} = - 12 c_{4d}$.
\end{prop}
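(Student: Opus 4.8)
The plan is to reduce the whole computation to the Kac--Moody case already settled in Proposition~\ref{prop: KM cocycle}, using the observation that $\psi^i$ sees a vector field only through the scalar $\tr(J-)$. First, exactly as in the proof of Proposition~\ref{prop: KM cocycle}, I would note that for degree reasons only the linear term $s_\rho^{(1)}$ of the $L_\infty$ map can contribute to the pullback of the (quadratic) cocycle, so it suffices to substitute
\[
  s_\rho^{(1)}(v\,\partial_z) = \Xi + \ep\,\Xi', \qquad \Xi = (\rho\,\pi^*v)\,\frac{\partial}{\partial z_1}, \qquad \Xi' = -\left(\frac{\dbar\rho}{z_2}\wedge\pi^*v\right)\frac{\partial}{\partial z_1},
\]
into $\psi^i$ and read off the coefficient of $\ep$.

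The key simplification is that every vector field in the image of $s_\rho$ points purely in the $z_1$-direction, so its Jacobian has a single nonzero diagonal entry and $\tr\bigl(J[w\,\partial_{z_1}]\bigr) = \partial_{z_1}w$. Since $\rho$ depends only on $(z_2,\bar z_2)$, this yields $\tr(J\Xi) = \rho\,\pi^*(\partial_z v)$ and $\tr(J\Xi') = -\frac{\dbar\rho}{z_2}\wedge\pi^*(\partial_z v)$. Setting $\alpha := \pi^*(\partial_z v) = \partial_{z_1}\pi^*v$, these are precisely the two expressions fed into $\phi^{(2),i}_{\cN=2}$ in the proof of Proposition~\ref{prop: KM cocycle}, with the invariant pairing $\kappa$ replaced by ordinary multiplication of scalars. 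Consequently $s_\rho^*\psi^i$ is computed by the identical integral as $s_\rho^*\phi^{(2),i}_{\cN=2}$ evaluated on the abelian current $\alpha$, and I can borrow that calculation verbatim. For $i=2$ the explicit $dz_2$ pairs with the $d\bar z_2$ in $\dbar\rho$ to form the volume element in the $z_2$-plane, the integral factorizes, and the remaining $z_2$-integral is the radial one producing the factor $+\pi i$ independently of the bump function.

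Collecting the factors, I expect to land on $-\frac{1}{4\pi i}\int_{\CC_{z_1}}\partial_z v\,\partial(\partial_z v) = -\frac12\,\psi_{2d}(v\,\partial_z)$, giving $s_\rho^*\psi^2 = -\frac12\psi_{2d}$. For $i=1$ the role of $dz_2$ is instead played by $dz_1$; the surviving terms then all vanish, either because they contain $dz_1\wedge dz_1 = 0$ or because the associated $z_2$-integral becomes angular and integrates to zero, exactly paralleling the triviality of the class $\phi^{(2),1}_{\cN=2}$ recorded in Lemma~\ref{lem: thetatrivial}. This gives $s_\rho^*\psi^1 = 0$. The relation $c_{2d} = -12\,c_{4d}$ then follows after restoring the normalization factor of $24$ relating $\psi_{2d}$ to the physicists' Virasoro central charge, as in~\S\ref{ssec: 2d background}. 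The one genuinely new step compared with the Kac--Moody case, and the point I would be most careful about, is the bookkeeping of the Jacobian traces: one must check that the matrix structure of $J$ generates no stray terms (in particular none from $\partial_{z_2}$ acting on the vanishing second component of $\Xi,\Xi'$), so that $\psi^i$ collapses cleanly to the scalar Kac--Moody integrand in $\alpha$, and that $s_\rho^{(2)}$ genuinely drops out; once this is verified, the residue computation is inherited directly.
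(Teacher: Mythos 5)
Your proposal is correct and takes essentially the same route as the paper's proof: the paper likewise retains only the $s_\rho^{(1)}$ component, computes the Jacobians of the image vector fields (which point purely in the $z_1$-direction, so that the traces reduce to $\rho\,\partial_{z_1}\xi$ and $-z_2^{-1}\dbar\rho\wedge\partial_{z_1}\xi$), and thereby collapses $s_\rho^*\psi^i$ to the abelian Kac--Moody integrand of Proposition~\ref{prop: KM cocycle} with $\alpha = \partial_{z_1}\xi$, finishing with the same radial $z_2$-integral for $i=2$ and a form-degree argument for the vanishing when $i=1$. The bookkeeping point you flag (no stray Jacobian terms) is exactly what the paper's explicit matrix computation verifies, since $\psi^i$ only sees the diagonal entries.
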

\begin{proof}
Just as in the previous case, the calculation amounts to computing the pullback of this cohomology class along~$s_\rho$, which can be done as follows: Let $\xi_a(z_1) \partial_1$ be  a Dolbeault-valued vector fields on~$\C_{z_1}$ for $a=1,2$. Then
\begin{align}
  s_\rho^* \psi_i (\xi_1(z_1), \xi_2(z_1) \partial_1) &= \psi_2(s_\rho \xi_1 (z_1) \partial_1, s_\rho \xi_2(z_1)) \nonumber \\ 
&= \psi_i \left( \rho \pi^* \xi_1 \pdv{ }{z_1} - \epar \left( \frac{1}{z_2} \bar\partial \rho \wedge \pi^* \xi_1 \right)\pdv{ }{z_1}, \rho \pi^* \xi_2 \pdv{ }{z_1} - \epar \left( \frac{1}{z_2} \bar\partial \rho \wedge \pi^* \xi_2 \right)\pdv{ }{z_1} \right).
  \end{align}
Setting the arguments equal to $\lambda_a + \epar \lambda_a'$, and omitting the pullback symbol $\pi^*$ for simplicity of notation, we can now directly compute that
  \deq{
    J\lambda_a = \begin{bmatrix}
      L_{\partial_1} (\rho \xi_a)  & 0 \\
    L_{\partial_2} (\rho \xi_a)  & 0 \end{bmatrix}
    = \begin{bmatrix}
    \rho L_{\partial_1} \xi_a & 0 \\
  \dot\rho\xi_a & 0 \end{bmatrix}
  }
  and
  \deq{
    J \lambda'_a = -\begin{bmatrix} 
      L_{\partial_1} \left( \frac{1}{z_2} \bar\partial \rho \wedge \pi^* \xi_a \right) & 0 \\
    L_{\partial_2}  \left( \frac{1}{z_2} \bar\partial \rho \wedge \pi^* \xi_a \right) & 0 \end{bmatrix}
    =
    \begin{bmatrix}
      {z_2}^{-1} \dbar\rho \wedge L_{\partial_1} \xi_a & 0 \\
    \left( {z_2}^{-2} \dbar\rho - {z_2}^{-1} \pdv{ }{z_2} (\dbar\rho) \right)  \wedge \xi_a & 0 
  \end{bmatrix}.
    } 
    Taking traces, applying the $\partial$ operator, and multiplying, we obtain 
    \deq{
      \tr(J \lambda_1) \wedge \partial \tr(J \lambda_2') = 
        \rho L_{\partial_1} \xi_1 \wedge  {z_2}^{-1} \dbar\rho \wedge \partial \left( L_{\partial_1} \xi_2 \right) .
    }
    We now note that $L_{\partial_1} \xi_a = \partial\xi_a/\partial z_1$, so that the cocycle reduces to
    \deq{
      \begin{aligned}[c]
        s_\rho^* \psi_i (\xi_1\partial_z,\xi_2\partial_z) &= \frac{1}{(2\pi i)^2} \int_{\C^2} \frac{1}{2 z_2} \pdv{\xi_1}{z_1} \wedge \dbar(\rho^2) \wedge \partial \left( \pdv{\xi_2}{z_1} \right) \wedge \d{z}_i \\
        &= - \frac{1}{2} \frac{1}{(2\pi i)^2} \int_{\C^2} \left( dz_2 \wedge \frac{\dbar(\rho^2)}{z_2} \right) \left( \pdv{\xi_1}{z_1} \wedge \partial \pdv{\xi_2}{z_1} \right) .
      \end{aligned}
    }
    Since $\xi_a$ is a Dolbeault form on~$\C_{z_1}$, it is clear just by reasons of form degree that $\psi_1$ pulls back to the trivial cocycle.
    Performing the integral over~$\C_{z_2}$ as in Proposition~\ref{prop: KM cocycle} above, we obtain
    \deq{
      s_\rho^* \psi_2 (\xi_1\partial_z,\xi_2\partial_z) = - \frac{1}{2} \frac{1}{2\pi i} \int_{\C_{z_1}} \pdv{\xi_1}{z_1} \wedge \partial \pdv{\xi_2}{z_1} = - 12 \psi_{2d}(\xi_1\partial_z,\xi_2\partial_z) ,
    }
    reproducing precisely the description of the familiar Virasoro cocycle in one complex dimension given in~\cite{Wil-vir} and recalled above in~\S\ref{ssec: 2d background}.
  \end{proof}
  
  \subsection{Exotic deformations of higher symmetry algebras}
\label{ssec: moredefs}

In the preceding subsections, we have shown that the deformation considered by Beem and collaborators (which originates in the global superconformal algebra) appears naturally in our context, taking the form of a Koszul differential, and that their chiral algebras arise from the corresponding deformation of our higher symmetry algebras. However, we wish to emphasize that there are \emph{additional} possible deformations of our algebras, which are not visible at the level  of global superconformal symmetry. While we reserve detailed study of such exotic deformations for future work, we will offer a few remarks below to demonstrate their interest, and will argue in particular that there exist deformations of~$\mathfrak{X}_{2}$ that localize to the holomorphic vector fields on \emph{any} affine algebraic curve in~$\C^2$, and not just to planes. Our remarks are schematic; in particular, we do not here discuss the correct statements at the level of cosheaves. 

Consider the following general setup: Let $A$ denote a commutative differential graded algebra, or more generally a sheaf of such objects. We will ask that $A$ be nonnegatively graded with cohomological differential, and will denote a basis of $\Der(A)$, the degree-zero derivations of~$A$, as a left $A$-module with the symbols $\partial_i$. For example, if $A = \C[z_1,z_2]$, then $\partial_i = \partial/\partial z_1$ or $\partial/\partial z_2$.

We then form the tensor product $A \otimes \C[\epar]$, with $\epar$ an odd variable of degree $-1$. A priori, this is a bigraded cdga,  when  equipped only with the internal differential on~$A$. We are interested in the dg-Lie algebra of its (super) derivations, which was described above in the example of holomorphic vector fields on superspace. As a left $A$-module, graded by $\epar$-degree, we can describe its content with the following table:
\begin{equation}
  \begin{tikzcd}[row sep = 1 ex]
    \ul{-1} & \ul{0} & \ul{1} \\
    & A\cdot \epar \frac{\partial}{\partial \ep} & \\
    A\cdot \epar \partial_i & & A\cdot \frac{\partial}{\partial \ep} \\
    & A\cdot \partial_i &
  \end{tikzcd}
\end{equation}
Now, we ask for deformations of the differential, of total cohomological degree $+1$, that arise from the adjoint action of an element of this dg-Lie algebra on itself. 
Any Maurer--Cartan element gives rise to such a deformation of the differential. 
The simplest class of such elements consist of odd derivations that have vanishing self-bracket and also anticommute with the internal differential on~$A$, so that both terms of the Maurer--Cartan equation are independently zero. In this case, for degree reasons, there are two possible choices:
\begin{itemize}
  \item an element of the form $f \, \frac{\partial}{\partial \ep}$, where $f$ is a closed element of degree zero in~$A$; or
  \item an element of the form $f_i \, \partial_i$, where $f_i$ are degree-one elements of~$A$, such that the result commutes with the internal differential.
\end{itemize}
Both types of deformation are interesting; for example, if $A$ is the Dolbeault complex, we can generate the deformation of the $\bar\partial$ differential to the de~Rham differential by an operator of the second type. However, such deformations have essentially only to do with $A$ itself, and so we will be  interested in the first class of deformations here; these include the deformations made possible by extended superconformal symmetry.

The adjoint action of such an element generates the following differentials (which are maps of left $A$-modules) on our diagram from above:
\begin{equation}
  \begin{tikzcd}
    & A\cdot \epar \frac{\partial}{\partial \ep} \arrow{dr}{f}  & \\
    A\cdot \epar \partial_i \arrow{dr}[swap]{f} \arrow{ur}{\partial_i f} & & A\cdot \frac{\partial}{\partial \ep}\\
    & A\cdot \partial_i \arrow{ur}[swap]{-\partial_i f} &
  \end{tikzcd}
  \label{eq:coho}
\end{equation}

\begin{obs}
  The cohomology of~\eqref{eq:coho}, in $\epar$-degree $-1$, is the left $A$-module
  \deq{
     \bigoplus_i \{ x \in A : f\cdot x = (\partial_i f)\cdot x = 0 \} \cdot \epar \partial_i.
  }
  In particular, when $f$ is not a zerodivisor in~$A$, there is no cohomology in this degree.
  Furthermore, the cohomology in $\epar$-degree $+1$ is the left $A$-module
  \deq{
    A / \langle f, \partial_i f \rangle \cdot \frac{\partial}{\partial \ep} .
  }
Thus, when $A$ is (for example) a polynomial ring in degree zero, the cohomology is precisely the coordinate ring of the singular locus of the affine hypersurface $f=0$, and vanishes when $f$ is a smooth and reduced hypersurface. When, on the other hand, $f = p^n$ for some irreducible (smooth) polynomial $p$, the cohomology will be the quotient of the polynomial ring by the ideal $p^{n-1}$; this is a typical example of behavior in the unreduced case.
\end{obs}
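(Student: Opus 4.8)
The plan is to compute the cohomology of the complex~\eqref{eq:coho} directly, as a complex of left $A$-modules, with respect to the differential $\delta = [f\frac{\partial}{\partial\ep},-]$ induced by the adjoint action; in the principal case where $A$ sits in degree zero the internal differential vanishes, so $\delta$ is the entire differential. First I would confirm, by computing the graded brackets of derivations of $A\otimes\C[\epar]$ and tracking the Koszul signs, that $\delta$ acts by the maps drawn in~\eqref{eq:coho}. Concretely, on $\epar$-degree $-1$ one finds
\[
  \delta\Big(\sum_i g_i\,\epar\partial_i\Big) = \Big(\sum_i (\partial_i f)\,g_i\Big)\,\epar\tfrac{\partial}{\partial\ep} + \sum_i (f g_i)\,\partial_i,
\]
while on $\epar$-degree $0$,
\[
  \delta\Big(h\,\epar\tfrac{\partial}{\partial\ep} + \sum_i k_i\,\partial_i\Big) = \Big(f h - \sum_i (\partial_i f)\,k_i\Big)\,\tfrac{\partial}{\partial\ep}.
\]

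Since $\delta$ is manifestly $A$-linear, the cohomology may be read off module by module. In $\epar$-degree $+1$ nothing maps out, so the cohomology is the cokernel of the second map above; as $h$ and the $k_i$ range over $A$ the coefficient $fh-\sum_i(\partial_i f)k_i$ sweeps out exactly the ideal $\langle f,\partial_i f\rangle$, giving $A/\langle f,\partial_i f\rangle\cdot\frac{\partial}{\partial\ep}$. In $\epar$-degree $-1$ nothing maps in, so the cohomology is the kernel of the first map: the $\partial_i$-components force $f g_i=0$ for every $i$, and the $\epar\frac{\partial}{\partial\ep}$-component imposes $\sum_i(\partial_i f)g_i=0$. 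In particular, when $f$ is a non-zerodivisor, $f g_i = 0$ already forces every $g_i = 0$, and this cohomology vanishes.

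It then remains to interpret $H^{+1}$ geometrically in the model case $A=\C[z_1,\dots,z_n]$, $\partial_i = \partial/\partial z_i$. Here $\langle f,\partial_i f\rangle$ is the ideal of the locus where $f$ and all its first partials vanish---the singular locus of $\{f=0\}$---so $H^{+1}$ is its scheme-theoretic coordinate ring. If $\{f=0\}$ is smooth and reduced, the Jacobian criterion makes this locus empty, whence $\langle f,\partial_i f\rangle=A$ by the Nullstellensatz and $H^{+1}=0$. For $f=p^n$ with $p$ irreducible and smooth, writing $\partial_i f = n\,p^{n-1}\partial_i p$ gives $\langle f,\partial_i f\rangle = p^{n-1}\langle p,\partial_i p\rangle = \langle p^{n-1}\rangle$, since $\langle p,\partial_i p\rangle=A$ by smoothness of $p$; hence $H^{+1}=A/\langle p^{n-1}\rangle$.

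The computation is elementary linear algebra over $A$ once the differentials are fixed, so there is no serious obstacle. The only points requiring genuine care are the sign and bookkeeping that produce~\eqref{eq:coho} and the verification that $\delta$ is $A$-linear (so that the $\epar$-degree decomposition is respected); and, for a general cdga $A$, the observation that~\eqref{eq:coho} displays only the $\delta$-part of a double complex also involving $d_A$, so that the clean identifications above should be understood as holding on the nose precisely in the degree-zero situation of primary interest.
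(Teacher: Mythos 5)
Your proposal is correct and takes essentially the same route as the paper, which treats this as an observation read off directly from the diagram~\eqref{eq:coho}: one checks that the adjoint differential is $A$-linear and acts by the indicated maps, computes the kernel in $\epar$-degree $-1$ and the cokernel in $\epar$-degree $+1$, and then interprets $\langle f, \partial_i f\rangle$ as the Jacobian ideal, using the Nullstellensatz for the smooth case and $\partial_i(p^n) = n p^{n-1}\partial_i p$ for the nonreduced case. The one point where you improve on the text: your degree $-1$ answer, cut out by $f g_i = 0$ for all $i$ together with the single relation $\sum_i (\partial_i f) g_i = 0$, is the precise kernel, of which the paper's componentwise formula $\bigoplus_i \{ x \in A : f\cdot x = (\partial_i f)\cdot x = 0 \}$ is in general only a submodule---though the two coincide in the case the paper emphasizes, since $f$ being a non-zerodivisor forces all $g_i = 0$ either way.
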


Let us now consider the cohomology in degree zero. We can describe it as the set of elements of the form
\deq{
g_i \partial_i + g_\epar \epar \frac{\partial}{\partial \ep},
}
where the $g$'s are elements of~$A$ and a summation over~$i$ is understood.
These elements are subject to the single relation
\deq{
  g_\epar f = g_i \partial_i f,
}
and are considered modulo the ideal consisting of elements of the form
\deq{
  g_i = f h_i, \quad g_\epar = h_i \partial_i f,
}
which are the image of the differential on elements $h_i \epar \partial_i $ of degree $-1$.

Let us simplify now to the case where $A = \C[z_1,\ldots,z_d]$ is the coordinate ring of affine $d$-space. 
We can analyze the cohomology of stratum by stratum, as we did previously, according to whether we are on the zero locus of~$f$ or in its complement. 
If we assume that $f$ is invertible, it is clear that the cohomology  is trivial. As a sheaf, the cohomology is therefore supported only along the stratum $f=0$. However, if we restrict to this locus (under the  assumption that $f$ is smooth and reduced), it is easy to see that 
the $g_i$ are subject to  the single linear relation $g_i \partial_i f = 0$, so that the vectors appearing in cohomology resolve the tangent sheaf to $f=0$. $g_\epar$ is subject to no relation, but the image of the differential is generated by $\partial_i f$, so that---by the Jacobian criterion for smoothness---it contributes nothing in cohomology. In general, the $g_i$ contribute a copy of the naive tangent space to the hypersurface, and $g_\epar$ contributes a copy of functions on the singular locus, accompanied by $\frac{\partial}{\partial \ep}$.

\subsection{More supersymmetry} 

Instead deforming the $\cN=2$ version of the current and graded algebras of holomorphic vector fields, it is natural to consider similar deformations for the $\cN=4$ versions. 
We do not include the detailed calculation here, but we survey the approach. 

Consider the version of holomorphic vector fields with {\em two} odd directions, which we called $\fX_{3}$.
This is a resolution of the sheaf of holomorphic vector fields on $\CC^{2|2}$. 
For the odd directions introduce the degree $-1$ variables $\ep_1, \ep_2$. 
We consider the deformation
\[
\fX_3' = \left(\Omega^{0,\bu}(\CC^{2|2}, T\C^{2|2})  , \ \dbar + \left[z_2 \frac{\partial}{\partial \varepsilon_2}, - \right]\right) .
\]
From Theorem \ref{thm:gradedvf} and the discussion afterwards, we know that classes in $H^5 (\U(2)) \cong \CC$ give rise to local cocycles of degree $+1$ on $\fX_3$. 
So, up to scale and homotopy there is a unique such local cocycle that we will denote by $\Tilde{\psi}$.
For the present depth of the discussion the explicit form of $\Tilde{\psi}$ will not be used. 

Consider the associated enveloping factorization algebra $\UU_{\Tilde{\psi}} (\fX_3')$ on $\CC^2$ (or $\Sigma \times \CC$). 
As above, this localizes to the $z_1$-plane and defines a holomorphic factorization algebra on $\CC_{z_1}$.
The following result follows from a similar analysis as above. 

\begin{thm}
\label{thm:topvirlocal}
The factorization algebra $\UU_{\Tilde{\psi}} (\fX_3')$ is trivial away from $\{z_2 = 0\} \hookrightarrow \CC^2$. 
The localized factorization algebra is holomorphic and its associated vertex algebra is isomorphic to the $\cN=2$ topological Virasoro vertex algebra.
\end{thm}

Recall that the ordinary Virasoro vertex algebra arises from the factorization enveloping algebra of the sheaf of holomorphic vector fields on $\CC$. 
The $\cN=2$ topological Virasoro vertex algebra arises from the sheaf of holomorphic vector fields on $\CC^{2|1}$. 

The Lie algebra $\fX_3$ is a symmetry of the holomorphic twist of $\cN=4$ supersymmetric Yang--Mills theory. 
Recall that this symmetry enlarges to a symmetry by the local Lie algebra that we denoted $\fX_4^{div}$; this is a resolution for the sheaf of {\em divergence-free} holomorphic vector fields on $\CC^{2|3}$. 
Call the three odd directions $\ep_1,\ep_2,\ep_3$. 
Notice that the graded vector field $z_2 \partial_{\ep_1}$ is divergence-free so it makes sense to deform $\fX_4^{div}$ by this element.
With the proper choice of a twisting cocycle we expect this deformation to localize to a twisted version of the {\em small} $\cN=4$ superconformal vertex algebra.
We leave a proof of this for later work.

\section{Deformations of $\cN=2$ theories}
\label{sec: qft}

In the previous section we have focused on deformations of symmetry algebras present in twists of four dimensional supersymmetric theories. 
We now turn to deformations of four-dimensional quantum field theories themselves from the point of view of the holomorphic twist. 
We choose to focus on the holomorphic twist of theories with $\cN=2$ supersymmetry, and a specific deformation which arises from the $\cN=2$ superconformal algebra. 


Classically, we start with a holomorphic gauge theory on $\CC^2$ which consists of a pure gauge sector and a holomorphic matter (or $\sigma$-model) sector. 
This is the holomorphic twist of $\cN=2$ QCD with hypermultiplets valued in a (complex) symplectic representation, see Proposition~\ref{prop: 2twist}.
In other words, the theory we consider is the holomorphic twist of $\cN=2$ supersymmetric QCD with Lie algebra $\fg$ and matter valued in a representation $V$. 

The free limit of the equations of motion of the theory require that all fields be $\dbar$-closed.
Like in the previous section, we consider deforming this $\dbar$ operator via $\dbar \rightsquigarrow \dbar + z_2 \frac{\partial}{\partial \ep}$.
For a more explicit description of the deformation see Equation (\ref{holdef}) below.
The existence of this deformation is manifest from our results of \S \ref{sec:enhance}.
Indeed, we know by Proposition \ref{prop: symenhance2} that the $\cN=2$ symmetry algebras $\cG_{2}$, $\fX_{2}$ act on the holomorphic twist of any $\cN=2$ theory.
Furthermore, the {\em deformed} symmetry algebras $\cG'_{2}$, $\fX'_{2}$ we considered in \S \ref{sec: defsym} act on this deformation of the holomorphic twist of any $\cN=2$ theory.

We stress that at the classical level, the theory we start with makes sense for any such $\cN=2$ gauge theory, but at the quantum level we find an anomaly in the deformed theory which agrees with the condition that the theory we started with be superconformal.
Specifically, we will show the following.

\begin{prop} 
The holomorphic twist of $\cN=2$ supersymmetric QCD on $\CC^2$ with Lie algebra $\fg$ and matter valued in the symplectic representation $T^*V$, exists at the quantum level. 
There is an anomaly to quantization of $\cN=2$ QCD in the presence of the holomorphic deformation we will introduce in Equation (\ref{holdef}) below.
This anomaly vanishes if and only if
\beqn\label{anomaly}
\Tr_{\fg^{ad}} (X^2) - \Tr_{V}(X^2) = 0
\eeqn
for all $X \in \fg$.
\end{prop}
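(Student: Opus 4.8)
The plan is to identify the anomaly as a one-loop local cocycle for the gauge symmetry and then to compute it via the two Feynman diagrams that can contribute. First I would recall, using Proposition~\ref{prop: 2twist}, the explicit form of the deformed classical theory: holomorphic $BF$ theory valued in $\fg[\ep]$ coupled to the holomorphic symplectic boson built from the matter representation $V$, with the $\dbar$ operator deformed to $\dbar + z_2 \frac{\partial}{\partial \ep}$ as in Equation~\eqref{holdef}. Existence of a quantization of the \emph{undeformed} twist is immediate from Proposition~\ref{prop: twist cotangent}, which guarantees a one-loop exact and one-loop finite prequantization; the potential cubic gauge anomaly of type $\Tr(X^3)$ vanishes because the $\cN=2$ matter sits in a symplectic representation. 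The content of the proposition is therefore the behavior of the quantum master equation (QME) after turning on the superconformal deformation.

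Next I would argue that the obstruction to the QME is a degree-one element of the local cohomology of the deformed gauge local Lie algebra $\sG'_{\cN=2}$ acting on the theory. Because the prequantization is one-loop exact (Proposition~\ref{prop: twist cotangent}), this obstruction is a one-loop local functional, and hence must be a cocycle of the type classified in~\S\ref{sec: xtns}. By Lemma~\ref{lem: thetatrivial}, the cubic class $\phi^{(3)}_{\cN=2}(\theta)$ and the class $\phi^{(2),1}_{\cN=2}(\kappa)$ become cohomologically trivial after the deformation, so the anomaly is necessarily a multiple of $\phi^{4d}_\kappa = \phi^{(2),2}_{\cN=2}(\kappa)$ for some invariant bilinear form $\kappa \in \Sym^2(\fg^*)^\fg$. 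This reduces the entire problem to computing $\kappa$.

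The main step is then a one-loop Feynman diagram evaluation. Since the anomaly is quadratic in the background gauge field, the relevant graphs are the one-loop bubbles with two external $\sG'_{\cN=2}$ legs; the propagator is constructed from the gauge-fixing operator $\dbar^*$ and the deformed kinetic operator as in~\S\ref{sec: renorm}. Two diagrams contribute: one in which the $\fg[\ep]$-valued $BF$ fields run in the loop, producing a term proportional to $\Tr_{\fg^{ad}}(X^2)$, and one in which the symplectic boson runs in the loop, producing a term proportional to $\Tr_V(X^2)$, where the symplectic structure on the matter and the opposite statistics of the boson account for the relative normalization and the relative sign. Extracting the holomorphic part of the loop integral and comparing with the explicit form of $\phi^{4d}_\kappa$ recorded in Definition~\ref{dfn: kmfact} fixes the normalization; the upshot is that the total anomaly is proportional to $\phi^{4d}$ with $\kappa(X,X) = \Tr_{\fg^{ad}}(X^2) - \Tr_V(X^2)$, so that it vanishes precisely when~\eqref{anomaly} holds.

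The hard part will be the careful bookkeeping in the diagrammatic computation: fixing the relative coefficient and sign between the adjoint and matter contributions requires tracking the $\ep$-grading of the $BF$ multiplet, the Berezin integration over the odd direction, the statistics of the symplectic boson, and the combinatorial symmetry factors of the bubble graphs, and then matching the resulting analytic integral (an integral of a Bochner--Martinelli-type kernel against $\partial$ of the external fields) to the chosen normalization of $\phi^{4d}_\kappa$. One must also verify that no additional one-loop or higher-loop graphs can modify the result; this follows from one-loop exactness together with the degree and form-type constraints analogous to those exploited in Proposition~\ref{prop: VF cocycle}, where only a single component of the cocycle survives for reasons of form degree.
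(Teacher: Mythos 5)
Your overall strategy---identify the obstruction to the quantum master equation as a one-loop local cocycle quadratic in the background gauge field, split the computation into a gauge loop and a matter loop, and read off $\kappa(X,X)=\Tr_{\fg^{ad}}(X^2)-\Tr_V(X^2)$---is the right one, and your answer matches the paper's. But there is a genuine gap in how you set up the diagrammatics. You absorb $z_2\tfrac{\partial}{\partial\ep}$ into the kinetic term and compute with a deformed propagator, so that the anomaly comes from two-vertex bubbles. This is exactly the approach the paper flags as problematic: the deformed kinetic operator $\dbar + z_2\tfrac{\partial}{\partial\ep}$ does \emph{not} commute with the gauge-fixing operator $Q^{GF}=\dbar^*$, so the holomorphic renormalization results of~\cite{BWhol} (one-loop finiteness, and crucially the localization of the anomaly on wheels of a fixed size) do not apply as stated to that scheme. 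The paper instead keeps $\dbar$ as the kinetic operator and treats $I_S$ as a nilpotent \emph{bivalent interaction vertex}, proving a separate lemma (a variant of~\cite[Lemma 3.12]{BWhol}) that the weight expansion with such bivalent vertices still converges. In that scheme, by~\cite[Proposition 4.4]{BWhol} the anomaly on $\C^2$ concentrates on \emph{three}-vertex wheels: the quadratic dependence on $\ul{A}$ arises because one of the three vertices is the bivalent $I_S$ vertex, not because the wheel has two vertices. Your inference ``quadratic in the background field, hence bubbles'' conflates the number of external legs with the number of vertices. Relatedly, in the correct scheme you must also dispose of the pure-$I$ three-vertex wheels, which would contribute cubic $\Tr(X^3)$-type terms to the anomaly of the \emph{deformed} theory; the paper kills these by the $\ep$-grading cancellation $\Tr_{W[\ep]}(X^3)=\Tr_W(X^3)-\Tr_W(X^3)=0$ for $W=V$ and $W=\fg$, a step your proposal only addresses (via symplectic self-duality) for the undeformed theory.

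A second, smaller gap is the classification shortcut. You argue from Lemma~\ref{lem: thetatrivial} that since $\phi^{(3)}_{\cN=2}$ and $\phi^{(2),1}_{\cN=2}$ die after the deformation, the anomaly ``is necessarily a multiple of $\phi^{(2),2}_{\cN=2}(\kappa)$.'' That conclusion would require knowing that these classes span $H^1_{\rm loc}$ of the ($\cN=2$, deformed) current algebra, and no such completeness statement is available in the paper (the complete classification is only cited for $\cN=1$, from~\cite{GWkm}). The paper does not need it: it identifies the cocycle directly, by observing that the algebraic factor of the wheel weight is $\Tr_V(X^2)$ in the matter case and $-\Tr_{\fg^{ad}}(X^2)$ in the gauge case (the sign coming from the degree shift of $V$ relative to $\fg$ in the fields, as you correctly anticipate), while the analytic factor is computed from the explicit wheel formula of~\cite[Appendix B]{GWkm} to be $\int \alpha\,\partial\alpha\wedge \d z_2$, which is precisely the integrand of $\phi^{(2),2}_{\cN=2}$. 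If you repair the diagrammatic setup as above, the classification step becomes unnecessary and your argument closes.
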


\begin{rmk}
  This condition can be rewritten in terms of the quadratic Casimir invariant and the dimension of the given representations; it then takes the form
  \deq{
    c_2(\fg)\dim(\fg) = c_2(V)\dim(V).
  }
  For semisimple gauge algebras of type $A$ and matter in fundamental hypermultiplets, this can be rewritten simply using the typical physics parameters $N_f$ and~$N_c$, which indicate gauge algebra $\lie{su}(N_c)$ and matter representation $V = \text{fund}^{\oplus N_f}$. Using familiar expressions for the quadratic Casimir invariants~\cite{Slansky}, the condition becomes 
  \deq{
    N_c (N_c^2-1) = \frac{N_c^2-1}{2N_c}\cdot N_fN_c \quad \Rightarrow N_f = 2 N_c,
  }
  which reproduces the well-known condition for $\N=2$ QCD to be superconformal. One can thus interpret the theorem as indicating that the failure of the original theory to be superconformal is manifested as an anomaly that prevents realization of the higher symmetry algebra at the quantum level.
\end{rmk}

Like the deformed symmetry algebras we met above, the factorization algebra of observables of the deformed theory localizes to the $\CC_{z_1}$ plane. 

\begin{thm}
Suppose the anomaly condition (\ref{anomaly}) is satisfied and let $\Obs(\fg, V)$ be the factorization algebra of quantum observables on $\CC^2$ associated to the holomorphic theory. 
Then, $\Obs(\fg, V)$ is equivalent to a stratified factorization algebra on $\CC^2$, which is trivial away from $\CC_{z_1} \subset \CC^2$, and equivalent to a holomorphic translation invariant factorization algebra $\Obs_{z_1}(\fg, V)$ on $\CC_{z_1}$.
\end{thm}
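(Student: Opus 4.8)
The plan is to lift the classical localization of Section~\ref{sec: defsym} to the factorization algebra of quantum observables, using one-loop exactness to control the quantum corrections. By Proposition~\ref{prop: 2twist}, the twisted theory is holomorphic $BF$ theory for~$\fg[\ep]$ coupled to the symplectic boson valued in~$V$, and the deformation $\dbar\rightsquigarrow\dbar+z_2\tfrac{\partial}{\partial\ep}$ is precisely the Koszul differential resolving $\CC_{z_1}\subset\CC^2$. First I would analyze the classical observables $\Obs^{cl}(\fg,V)$, whose underlying complex is the Chevalley--Eilenberg cochains on the deformed fields. Off the plane $z_2=0$ the argument of Lemma~\ref{lem: localkm} applies verbatim: inverting~$z_2$ trivializes the Koszul differential, so $\Obs^{cl}(\fg,V)|_{\CC^2\setminus\CC_{z_1}}\simeq\ul{\CC}$. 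Along the stratum, the restriction maps and bump-function cochain equivalences~$s_\rho$ of Propositions~\ref{prop: compact} and~\ref{prop: big complex}---applied now to the full field content rather than to the symmetry algebra---identify $\pi_*\Obs^{cl}(\fg,V)$ with the classical observables of a holomorphic theory on~$\CC_{z_1}$, which is the $\hbar=0$ limit of the desired $\Obs_{z_1}(\fg,V)$.

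It remains to show that this stratification survives quantization when~\eqref{anomaly} holds. By Proposition~\ref{prop: twist cotangent} the deformed theory admits a one-loop finite, one-loop exact prequantization, so all quantum corrections are carried by a single one-loop term; the only obstruction to promoting this to a solution of the quantum master equation is the local anomaly computed in the preceding proposition. When~\eqref{anomaly} is satisfied this anomaly vanishes, the quantum BV differential is genuinely square-zero, and $\Obs(\fg,V)$ is a factorization algebra over~$\CC[[\hbar]]$ deforming~$\Obs^{cl}(\fg,V)$. Off~$\CC_{z_1}$ the $\hbar=0$ complex is already acyclic after inverting~$z_2$; since the quantum differential is an $\hbar$-filtered deformation of the classical one, the associated spectral sequence argument gives $\Obs(\fg,V)|_{\CC^2\setminus\CC_{z_1}}\simeq\ul{\CC}$ as well. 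Along the stratum, I would transport the quantization along~$s_\rho$ to produce a holomorphically translation-invariant quantum factorization algebra $\Obs_{z_1}(\fg,V)$ on~$\CC_{z_1}$.

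The main obstacle is this last transport: one must check at the cochain level that the one-loop quantization is compatible with the pushforward~$\pi_*$ and the homotopy equivalence~$s_\rho$, so that the quantum corrections on~$\CC^2$ reduce exactly to those of the two-dimensional theory. Concretely, the weights of the relevant one-loop graphs---built from the~$\dbar^*$-propagator on~$\CC^2$ contracted with the deformed interaction---must restrict, after integrating out the~$z_2$-direction against the bump function exactly as in the proofs of Propositions~\ref{prop: KM cocycle} and~\ref{prop: VF cocycle}, to the corresponding graph weights for the propagator on~$\CC_{z_1}$. The vanishing of~\eqref{anomaly} is precisely what ensures that no residual local term obstructs this transport; granting it, the $s_\rho$ equivalence lifts to a stratum-wise equivalence of quantum factorization algebras, and $\Obs_{z_1}(\fg,V)$ inherits its holomorphic translation invariance along~$\CC_{z_1}$ from the manifest invariance of the construction in the~$z_1$-direction.
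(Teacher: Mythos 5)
Your outline—localize the classical \emph{interacting} observables via the bump-function equivalences $s_\rho$, then quantize on $\CC^2$, then carry the quantization across to $\CC_{z_1}$—is not the route the paper takes, and it breaks down precisely at the step you yourself flag as ``the main obstacle.'' The paper instead filters $\Obs(\fg,V)$ by $2m+n$ ($\hbar$-power plus symmetric degree), so that the $E_1$ page of the associated spectral sequence is the observables of the \emph{free} theory with the Koszul term $z_2\frac{\partial}{\partial\ep}$ absorbed into the linear differential. Localization is performed only at this free level (Lemma~\ref{lem: bflocal}), where it is a purely linear Koszul statement—no $L_\infty$ corrections, no comparison of Feynman weights. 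Since $E_1$ is already trivial off $\CC_{z_1}$, so is the abutment; along the plane, the higher pages are then realized directly as a chiral deformation of the free chiral theory (the $\fg$-valued ghost system together with the $\beta\gamma$ system valued in $V$) on $\CC_{z_1}$, and the resulting interacting chiral theory is identified downstairs via Li's theorem (Theorem~\ref{thm: Li} and Proposition~\ref{prop: LWY}), never by pushing the four-dimensional one-loop effective action through $s_\rho$.

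The genuine gap in your proposal is that transport step, and your claim that ``the vanishing of~\eqref{anomaly} is precisely what ensures that no residual local term obstructs this transport'' is not correct. Anomaly vanishing guarantees that the scale-$L$ quantum master equation admits a solution on $\CC^2$; it says nothing about whether the renormalized graph weights built from the $\dbar^*$-propagator on $\CC^2$, integrated against the bump function in the $z_2$-direction, reproduce the renormalized weights of a two-dimensional theory equipped with its own gauge fixing and propagator. That comparison is an independent analytic statement, and Propositions~\ref{prop: KM cocycle} and~\ref{prop: VF cocycle} do not supply it—they compute pullbacks of \emph{local cocycles} along $s_\rho$, not of effective actions. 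Two smaller inaccuracies: first, Proposition~\ref{prop: twist cotangent} gives one-loop exactness and finiteness for the \emph{undeformed} twist; in the presence of the quadratic (but nilpotent) vertex $I_S$ the well-definedness of the weight expansion requires the separate argument the paper gives (its variant of~\cite[Lemma 3.12]{BWhol}), since quadratic interaction vertices generically ruin the graph expansion. Second, your classical step needs the $s_\rho$ equivalences extended from the symmetry algebras $\sG'_{\cN=2}$ and $\fX'_{\cN=2}$ to the full field Lie algebra $\sL(\fg,V)$, including the $\Omega^{2,\bu}$-valued antifields in $\fg_V^*$; this is plausible but is not contained in Propositions~\ref{prop: compact} or~\ref{prop: big complex} as stated. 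The paper's filtration sidesteps all of these issues by making the only nontrivial localization argument a statement about free fields.
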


The final goal is to characterize the factorization algebra $\Obs_{z_1}(\fg,V)$ in a more familiar algebraic description.
By~\cite[Theorem 2.2.1]{CG2}, a holomorphic translation invariant factorization algebra $\cF$ on $\CC$ (satisfying some natural conditions) defines a vertex algebra that we will denote $\VV[\cF]$. 
We then utilize results of~\cite{LiVertex, CG2} which will allow us to relate solutions of the QME, which we have produced by the method of renormalization, and vertex algebras. 
The conclusion is the following.

\begin{prop}
As a vertex algebra, $\VV \left[\Obs_{z_1} (\fg, V)\right]$ is equivalent to the $\fg$-BRST reduction of the $\beta\gamma$ system valued in $V$.  
\end{prop}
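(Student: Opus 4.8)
The plan is to build on the localization results already established for the symmetry algebras $\sG'_{\cN=2}$ and $\fX'_{\cN=2}$ in~\S\ref{sec: defcur} and~\S\ref{ssec: defVF}, to promote them to the level of the full interacting theory of observables, and then to translate the resulting two-dimensional factorization algebra into vertex-algebra language via the dictionary of~\cite[Theorem 2.2.1]{CG2} together with the comparison of solutions of the QME and vertex algebras in~\cite{LiVertex, CG2}. First I would recall from Proposition~\ref{prop: 2twist} that the holomorphic twist of $\cN=2$ super-QCD is holomorphic $BF$ theory valued in $\fg[\ep]$ coupled to the holomorphic symplectic boson (equivalently a $\beta\gamma$ system) valued in $V$, all living naturally on $\CC^{2|1}$. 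I would then write out the deformed differential $\dbar + z_2\,\tfrac{\partial}{\partial\ep}$ acting on each sector, together with the $BF$ interaction coupling the gauge field to the matter fields, as in Equation~\eqref{holdef}.

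Second, I would localize each sector to the plane $\CC_{z_1}$ exactly as in the preceding two subsections. The gauge multiplet $\fg[\ep]$ sits in the Dolbeault-resolved Koszul complex~\eqref{twoterm}, so that $BF$ theory on $\CC^{2|1}$ reduces to $BF$ theory valued in $\fg$ on $\CC_{z_1}$; the matter multiplet, being governed by the same two-term $\ep$-complex, localizes identically to the $\beta\gamma$ (symplectic boson) system valued in $V$ on $\CC_{z_1}$. Since the interaction is a local functional of these fields, the bump-function maps $s_\rho$, corrected to $L_\infty$-morphisms as in Proposition~\ref{prop: Linf1}, should assemble into a cochain-level equivalence of the coupled theory with its two-dimensional reduction, refining the cohomology-level localization of the preceding theorem. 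Here one must invoke the one-loop exactness and finiteness of the quantization (Proposition~\ref{prop: twist cotangent}) to guarantee that the deformed quantum observables are well-defined and that the classical localization survives quantization; the vanishing of the anomaly~\eqref{anomaly} is precisely what ensures the gauged theory is consistent, so that $\Obs_{z_1}(\fg,V)$ exists as claimed.

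Third, I would identify the two-dimensional factorization algebra. The factorization algebra of observables of $\fg$-valued holomorphic $BF$ theory on $\CC_{z_1}$ realizes the Chevalley--Eilenberg (BRST) complex of the local $\fg$-current algebra, supplying the ghost sector and the Chevalley--Eilenberg differential, while the $BF$ coupling realizes the $\widehat{\fg}$-action of the currents on the $\beta\gamma$ system at the level dictated by Proposition~\ref{prop: KM cocycle}. Thus $\Obs_{z_1}(\fg,V)$ is the factorization algebra of observables of the $\fg$-gauged $\beta\gamma$ system. Applying the vertex-algebra dictionary then produces a vertex algebra in which the $\beta\gamma$ fields furnish the $\beta\gamma$ vertex algebra, the ghost sector furnishes the $bc$-system for $\fg$, and the Chevalley--Eilenberg differential becomes the standard BRST charge $\oint J_{\fg}$; hence $\VV[\Obs_{z_1}(\fg,V)]$ is the $\fg$-BRST reduction of the $\beta\gamma$ system.

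The main obstacle, I expect, will be the passage from the classical localization of the symmetry algebras to a genuine cochain-level equivalence of the full \emph{quantum} factorization algebras of observables: one must verify that the interaction term and the BV quantization are compatible with the $L_\infty$ bump-function maps, and that no anomaly beyond the one already captured by~\eqref{anomaly} obstructs the identification. A secondary subtlety is matching the differential produced by the Chevalley--Eilenberg construction on the factorization side with the conventional BRST charge of the gauged $\beta\gamma$ vertex algebra, and in particular checking that the Kac--Moody level appearing in the matter action and the ghost contribution conspire so that the BRST differential squares to zero precisely when~\eqref{anomaly} holds.
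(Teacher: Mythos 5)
There is a genuine gap, and it sits exactly where you flag your ``main obstacle'': you propose to promote the bump-function maps $s_\rho$ (and their $L_\infty$ corrections) to a cochain-level equivalence of the full \emph{interacting, quantum} factorization algebra of observables with a two-dimensional theory, but you give no mechanism for doing so, and the paper never does this. The $s_\rho$ maps were constructed only for the symmetry algebras $\sG'_{\cN=2}$ and $\fX'_{\cN=2}$; making them compatible with the BV pairing, the propagator, RG flow, and the quantum differential is a substantially harder problem than anything in \S\ref{sec: defsym}, and it is not needed. The paper's actual route sidesteps it with the interaction spectral sequence: one filters $\Obs(\fg,V)$ by $F^p = \prod_{2m+n\geq k} \CC\hbar^m \tensor \Sym^n(\sL(\fg,V)^\vee[-1])$, so that the $E_1$-page $\sF_1$ is the cohomology of the \emph{free} theory deformed by $z_2\,\partial/\partial\ep$. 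Localization is then a triviality about the linear differential (the Koszul complex is acyclic where $z_2$ is invertible, Lemma~\ref{lem: bflocal}), requiring no quantum or $L_\infty$ input at all; the interaction only re-enters afterwards, as a chiral deformation $I_{1d} = \int_{\CC_{z_1}} \langle\beta_{1d},[a,\gamma_{1d}]\rangle_V + \langle b,[a,a]\rangle_\fg$ of the localized free theory on $\CC_{z_1}$.

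The second gap is in your final step, where you say that ``applying the vertex-algebra dictionary'' yields the BRST reduction. The dictionary of~\cite[Theorem 2.2.1]{CG2} only converts a holomorphic translation-invariant factorization algebra on $\CC$ into \emph{some} vertex algebra; it does not, by itself, identify the quantization of the interacting chiral theory with the cohomology of the free vertex algebra by a BRST charge. That identification is precisely the content of Li's theorem (Theorem~\ref{thm: Li}), which matches solutions of the renormalized QME with $\hbar$-dependent fields $I^{\text{hol}}(z)$ having trivial self-OPE, together with Proposition~\ref{prop: LWY}, which then gives
\[
\VV\left[\Obs_{z_1}(\fg,V)\right] \simeq H^\bu\left(\VV^{free}[\fg,V],\; Q = \oint \tfrac{\d z}{2\pi i}\, I^{\text{hol}}(z)\right),
\]
the BRST reduction of the $bc$--$\beta\gamma$ system. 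Without invoking these results (or proving substitutes for them), your argument asserts the conclusion rather than deriving it. Relatedly, your appeal to Proposition~\ref{prop: KM cocycle} for ``the level'' is not what controls consistency here: what makes the two-dimensional BRST differential square to zero is the vanishing of the anomaly $\kappa(\fg,V) = \ch_2^\fg(\fg^{ad}) - \ch_2^\fg(V)$, which is a four-dimensional one-loop computation (wheels with three vertices), not a consequence of the classical localization of cocycles.
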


\subsection{A holomorphic deformation of $\cN=2$} 

The holomorphic theory we start with is a coupled holomorphic $BF-\beta\gamma$ system, as we introduced in~\S\ref{ssec: theories}. 
We assume the holomorphic BF theory has underlying Lie algebra $\fh = \fg[\ep]$ where $\fg$ is an ordinary Lie algebra\footnote{Taking $\fg$ to be an ordinary Lie algebra as opposed to a dg or $L_\infty$ algebra is for simplicity here, and to match with the familiar situation in the $\cN=2$ untwisted theory.
What is important is that we have this extra odd direction labeled by $\ep$.} and $\ep$ is a parameter of degree $-1$. 
The $\beta\gamma$ system we consider is valued in the graded vector space $\VV = V[\ep]$ where $V$ is a $\fg$-representation, and $\ep$ is as above.

Physically, as we recollected in Proposition \ref{prop: 2twist}, this theory is equivalent to the holomorphic twist of $\cN=2$ supersymmetric QCD with Lie algebra $\fg$ and matter transforming in the symplectic $\fg$-representation $T^*V$. 

The coupled theory is abstractly summarized by thinking about it as the holomorphic BF theory for the semi-direct product graded Lie algebra
\[
\fg_V \overset{\rm def}{=} \fg[\ep] \ltimes V[\ep][-1]
\]
where the semi-direct product is induced by the $\fg$ representation $V$. 
With this notation, the fields of the theory can be written succinctly as 
\[
\Omega^{0,\bu}(\CC^2) \otimes \fg_V [1] \oplus \Omega^{2,\bu}(\CC^2) \otimes \fg_V^* .
\]
In the first component lives the pair of fields $(\ul{A}, \ul{\gamma})$ and in the second component are the conjugate fields $(\ul{B}, \ul{\beta})$. 

The full action can be written as
\beqn\label{N=2super}
S(\ul{A}, \ul{B}, \ul{\gamma}, \ul{\beta}) = \int_{\CC^{2|1}} \<\ul{B} , F_{\ul{A}}\>_\fg + \int_{\CC^{2|1}} \<\ul{\beta}, \dbar_{\ul{A}} \ul{\gamma}\>_V
\eeqn
where $F_{\ul{A}} = \dbar \ul{A} + \frac{1}{2} [\ul{A}, \ul{A}]$ and $\dbar_{\ul{A}} \ul{\gamma} = \dbar \ul{\gamma} + [\ul{A}, \ul{\gamma}]$.
More explicitly, in terms of the components $\ul{\alpha} = \alpha + \ep \alpha'$, we can expand the action as
\begin{align*}
S & = \int_{\CC^2} \<B', \dbar A + \frac{1}{2}[A,A]\>_\fg + \int_{\CC^2} \<B, \dbar A' + [A, A']\>_\fg \\\cL
& + \int_{\CC^2} \<\beta', \dbar \gamma + [A, \gamma]\>_V + \int_{\CC^2} \<\beta, \dbar \gamma' + [A, \gamma'] + [A', \gamma]\>_V
\end{align*}
The first and second lines correspond to the first and second terms in (\ref{N=2super}). 
Note that due to the nature of the pairing between fields and anti-fields, the primed fields $(-)'$ appear precisely once in each term in the action. 

We turn on the following deformation of the holomorphic twist of the free hypermultiplet
\beqn\label{holdef}
I_S(\beta', \gamma') = \int_{\CC^2} z_2 \<B' \wedge A'\>_\fg + \int_{\CC^2} z_2 \<\beta' \wedge \gamma'\>_V .
\eeqn
Equivalently, as an integral over the graded space $\CC^{2|1}$ we can write this action as
\[
I_S(\ul{\beta}, \ul{\gamma}) = \int_{\CC^{2|1}} \<\ul{B} \wedge z_2 \frac{\partial}{\partial \varepsilon} \ul{A}\>_\fg + \int_{\CC^{2|1}} \<\ul{\beta} \wedge z_2 \frac{\partial}{\partial \varepsilon} \ul{\gamma}\>_V .
\]

The deformed theory is completely described by a local Lie algebra that we denote by $\cL(\fg, V)$.
In other words, the Maurer-Cartan elements of $\cL(\fg,V)$ are equivalent to solutions to the classical equations of motion of the deformed theory $S + I_S$. 
The underlying graded Lie algebra is of the form
\beqn\label{sL}
\cL(\fg, V) = \Omega^{0,\bu}(\CC^2) \otimes \fg_V [1] \ltimes \Omega^{2,\bu}(\CC^2) \otimes \fg_V^* .
\eeqn
The differential has two components $\dbar + z_2 \frac{\partial}{\partial \ep}$. 

%
%

\subsection{An exact quantization and the QME}

Holomorphic field theories admit very well-behaved one-loop quantizations in any dimension.
The approach to renormalization for holomorphic theories in the BV formalism that we take is developed in~\cite{BWhol}.
We refer to this work for the notation and conventions used below. 

There are two approaches to producing a renormalized BV action in the case of the deformed holomorphic theory we study here:
\begin{itemize}
\item[(1)] Treat the deformation $z_2 \frac{\partial}{\partial \varepsilon}$ as part of the kinetic term in the action.
This amounts to deforming the linear BV operator
\[
\dbar \rightsquigarrow \dbar + z_2 \frac{\partial}{\partial \varepsilon} .
\]
Since this deformation does not commute with the gauge fixing operator $Q^{GF} = \dbar^*$, the approach of~\cite{BWhol} does not directly apply, and some extra work must be done in producing the renormalized action.
\item[(2)] Consider the deformation as a particular background of the theory. 
This means that we treat the deformation as prescribing a one-parameter family of theories over the ring $\CC[c]$, where the deformed action has the additional interaction term
\[
c \int z_2 \<\beta' \gamma'\> .
\]
In general, treating quadratic terms as deformations of the interacting part of theory is ill-posed since RG flow can produce connected diagrams of infinite size.
Due to the particular form of this deformation, however, the graph expansion is still well-defined even in the presence of this quadratic term. 
\end{itemize}

In principle, by the general formalism to constructing BV theories developed in~\cite{CosRenorm}, both approaches to quantization will yield equivalent results. 
However, one approach may involve significantly more complicated analysis in order to evaluate the respective Feynman diagrams.
We will take approach (2) to studying the quantization of the deformed holomorphic theory, since we can most directly borrow the calculations performed in~\cite{BWhol}.

In doing this, it is convenient to split up the action in the following way:
\beqn\label{theory1}
S + I_S = S_\text{free} + I + I_{S}
\eeqn
where $S_\text{free}$ is the free part of the action in~\eqref{N=2super}, $I$ is the interacting part of the action in~\eqref{N=2super}, and $I_S$ is the deformation in~\eqref{holdef}. 

The gauge fixing condition we choose is given by the operator
\[
Q^{GF} = \dbar^* \otimes 1
\]
which acts on the fields of the theory $\Omega^{0,\bu} (\CC^2) \otimes \fg_V [1]$. 

There is a simple combinatorial observation of the allowable Feynman diagrams that can appear in the graph expansion of the holomorphic theory in the presence of the deformation. 
Without the deformation the theory admits a quantization that is exact at one-loop. 
Even in the presence of the deformation, at one-loop the only possible diagrams that can appear must have external edges labeled by the fields $\ul{A} = A + \ep A'$ or $\ul{\gamma} = \gamma + \ep \gamma'$.
Moreover, since the propagators trade a $\ul{A}$ for a $\ul{B}$ and a $\ul{\gamma}$ for a $\ul{\beta}$, this means that the holomorphic gauge still provides an exact quantization at one-loop. 

The next thing we need to know is that the renormalization group flow acts trivially at one-loop in the presence of the deformation. 
Indeed, by a slight variant of~\cite[Lemma 3.12]{BWhol}, we have the following: 
\begin{lem}
The limit
\[
I[L] + I_S[L] \overset{\rm def}{=} \lim_{\epsilon \to 0} W(P_{\epsilon< L}, I + I_S)
\]
exists. 
Thus, there exists a one-loop finite prequantization of holomorphic theory, even in the presence of the deformation $I_S$. 
\end{lem}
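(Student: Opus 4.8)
The plan is to reduce the claim to a finite collection of per-diagram analytic estimates. First I would invoke the one-loop exactness of Proposition~\ref{prop: twist cotangent} together with the combinatorial observation recorded just above: since every propagator exchanges $\ul{A}\leftrightarrow\ul{B}$ and $\ul{\gamma}\leftrightarrow\ul{\beta}$, the external edges of any surviving graph carry only $\ul{A}$ or $\ul{\gamma}$, and the holomorphic gauge $Q^{GF}=\dbar^*$ provides an exact quantization at one loop. Trees contribute no short-distance divergences, so the only weights $W_\Gamma(P_{\epsilon<L}, I+I_S)$ whose $\epsilon\to 0$ behavior is in question are one-loop wheels built from the trivalent vertex $I$ and the bivalent vertex $I_S$. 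It then suffices to show (a) that only finitely many such graphs contribute at each order, and (b) that each of their weights has an $\epsilon\to 0$ limit.

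For (a), the essential point is that the particular form of $I_S$ forbids chains of bivalent vertices, which is exactly what would otherwise produce connected graphs of unbounded size. The BV pairing is the Berezin integral over~$\CC^{0|1}$; under it the unprimed field $A$ is conjugate to the primed antifield $B'$ and the primed field $A'$ to the unprimed antifield $B$ (and likewise for $\gamma,\beta$), so that $P_{\epsilon<L}$ always joins a primed leg to an unprimed one. The deformation $I_S = \int z_2\<B', A'\>_\fg + \int z_2\<\beta',\gamma'\>_V$, by contrast, carries \emph{only} primed legs. Consequently no propagator can connect two $I_S$ vertices directly: every edge incident to an $I_S$ vertex must terminate on an unprimed leg of a vertex coming from $I$. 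Since at most one $I_S$ vertex can then be inserted along any edge of the underlying one-loop graph, at fixed loop order and fixed number of external legs the number of $I_S$ insertions---and hence the number of contributing graphs---is finite.

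For (b), I would adapt the argument of~\cite[Lemma 3.12]{BWhol}, observing that the only modification is that each $I_S$ vertex inserts the holomorphic coordinate $z_2$ as a polynomial prefactor in the position-space integrand. The $\epsilon\to 0$ divergences are short-distance phenomena governed by the Gaussian decay of the heat kernel underlying $P_{\epsilon<L}$, and a polynomial factor is uniformly dominated by that decay, so the holomorphic estimates of~\cite{BWhol} go through with the polynomial absorbed into the constants. The hard part, and the step genuinely new relative to~\cite{BWhol}, is precisely this analytic point: the factor $z_2$ breaks the translation invariance on which the cleanest form of the holomorphic convergence argument relies, so one must verify that the UV analysis is nonetheless insensitive to it. Because the divergence is local and $z_2$ is smooth and holomorphic, this insensitivity holds; assembling (a) and (b) then yields the existence of $\lim_{\epsilon\to 0} W(P_{\epsilon<L}, I+I_S)$ and hence the claimed one-loop finite prequantization.
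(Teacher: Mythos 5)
Your proposal is correct and takes essentially the same route as the paper: your leg-type argument in (a) is precisely the diagrammatic content of the paper's observation that $I_S$ is nilpotent, $\{I_S,I_S\}=0$, which is what renders the weight expansion with bivalent vertices well-defined, and your step (b) is the paper's adaptation of~\cite[Lemma 3.12]{BWhol}. The one cosmetic difference is that the paper handles the $z_2$ insertions by absorbing the factors $z_2^{n_\alpha}$ into the compactly supported smooth function $\Phi$ appearing in the wheel integrand---so that the cited lemma applies verbatim to the new $\Phi'$---rather than ``into the constants'' of the estimates, which is the cleaner way to phrase your final analytic step since mere boundedness of $|z_2|$ on the support would give control of the size of the weights but not, by itself, existence of the $\epsilon\to 0$ limit.
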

\begin{proof}
The first observation is algebraic. 
Ordinarily, for the weight expansion to be well-defined one must look at graphs with vertices of valence $\geq 3$. 
See~\cite[Chapter 2]{CosBook}. 
The interaction $I_S$ is only quadratic in the fields, but it is nilpotent: $\{I_S, I_S\} = 0$.
Thus, the weight expansion over graphs with bivalent vertices labeled by $I_S$, and trivalent vertices labeled by $I$ is well-defined.

The remainder of the proof is analytic. 
In fact, the proof is nearly identical to the analysis performed in the proof of~\cite[Lemma 3.12]{BWhol}, so we only point out the key additional argument necessary to handle this case. 

For finite $\epsilon$ and~$L$, a general term in the weight of a wheel diagram will be of the form 
\[
\int_{(\CC^2)^k} \left(\prod_{\alpha=1}^k \d z_1^{\alpha} \d z_2^\alpha\right) \Phi(z^1,\ldots, z^k) \left(\prod_{\alpha=1}^k P_{\epsilon<L}(z^{\alpha}, z^{\alpha+1}) z_2^{n_{\alpha}} \right) .
\]
This integral corresponds to taking the weight of a wheel diagram with $k$ vertices. 
Here:
\begin{itemize}
\item $\Phi$ is a compactly supported smooth function on $(\CC^2)^k$;
\item $P_{\epsilon<L}$ is the propagator on $\CC^2$ obtained from the holomorphic gauge fixing condition;
\item $n_{\alpha} \in \{0,1\}$ for $\alpha = 1,\ldots, k$. 
\end{itemize}
For the situation considered in~\cite{BWhol}, it is assumed that the interactions (or vertex labels) are translation invariant; this corresponds to taking $n_\alpha = 0$ for each $\alpha = 1,\ldots, k$ in the above formula. 
In the general case, we simply observe that we can absorb the factors of $z_2^{n_\alpha}$ into the compactly supported function $\Phi$:
\[
\Phi(z^1,\ldots,z^k) \to \Phi'(z_1,\ldots, z_k) = \left(\prod_{\alpha=1}^k z^{n_\alpha}_2\right) \Phi(z^1,\ldots,z^k) .
\]
The new function $\Phi'$ is still compactly supported, and so we can apply an identical analysis carried out in~\cite{BWhol}. 
\end{proof}

In order for the effective family $\{I[L]\}_{L > 0}$ to define a quantum field theory it must satisfy the quantum master equation (QME). 
The renormalized QME exists at each fixed $L > 0$ and is of the form
\[
\dbar I[L] + \frac{1}{2} \{I[L], I[L]\}_L + \hbar \Delta_L I[L] = 0 .
\]
Since our theory is one-loop exact, and satisfies the classical master equation, the only possible anomaly appears at one-loop. 
Thus, if the equation is {\em not} satisfied, then the effective family is said to be anomalous and the scale $L$ anomaly is given by
\[
\Theta[L] = \hbar^{-1} \left(\dbar (I[L]+I_S[L]) + \frac{1}{2} \{I[L]+I_S[L], I[L]+I_S[L]\}_L + \hbar \Delta_L (I[L]+I_S[L]) \right)
\]

By general manipulations of RG flow and the QME, we know that the limit $L \to 0$ of $\Theta[L]$ exists
\[
\Theta = \lim_{L \to 0} \Theta[L] 
\]
Moreover, the functional $\Theta$ is {\em local} and since $\Theta$ is an obstruction, it is also a cocycle.
We now turn to computing this cocycle. 

\subsection{Anomaly cocycle}

The quantization $I[L] + I_S[L]$ is defined as a sum over graphs of genus $\leq 1$. 
It is clear that the anomaly $\Theta[L]$ is also given as a sum over graphs.
In fact, as $L \to 0$, for the holomorphic theories we consider it is shown in~\cite[Proposition 4.4]{BWhol} that this sum concentrates over graphs given by wheels with a particular number of vertices. 

\begin{prop}[{see~\cite[Proposition 4.4]{BWhol}}] \label{prop: anomaly}
The anomaly $\Theta = \lim_{L \to 0} \Theta[L]$ is given as the sum over wheels with precisely three vertices:
\[
\hbar \Theta = \lim_{L \to 0} \lim_{\epsilon \to 0} \sum_{\Gamma \in {\rm Wheel}_{3}, e} W_{\Gamma, e} \left(P_{\epsilon<L}, K_\epsilon, I + I_S\right) .
\] 
Here, the sum is over wheels with $3$ vertices equipped with a distinguished edge $e$.
A general term in the sum is depicted in Figure \ref{fig:wheel}.
\end{prop}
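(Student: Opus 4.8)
The plan is to identify $\Theta$ as the one-loop obstruction to the quantum master equation and then to localize this obstruction onto wheels of a fixed size. Since the deformed theory satisfies the classical master equation and admits a one-loop exact, one-loop finite prequantization (by the discussion preceding this proposition, itself a consequence of Proposition~\ref{prop: twist cotangent} together with the preceding lemma), the tree-level part of the failure of the renormalized QME vanishes, and there are no contributions beyond one loop. Hence $\Theta[L]$ is a pure order-$\hbar$ quantity, and writing it out in terms of $\dbar(I+I_S)[L]$, the bracket $\{I[L]+I_S[L],\,I[L]+I_S[L]\}_L$, and $\hbar\Delta_L(I+I_S)[L]$, one sees that every surviving term is a connected genus-one graph: the BV Laplacian $\Delta_L$ contracts a single pair of legs of the effective interaction, forming a loop, while the remaining legs are joined by partial propagators. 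Thus $\Theta[L]$ is \emph{a priori} a sum over all wheels, in which one distinguished edge $e$ (the one created by $\Delta_L$) carries the heat kernel $K_\epsilon$ and the remaining edges carry $P_{\epsilon<L}$; this is precisely the combinatorial content of the weights $W_{\Gamma,e}(P_{\epsilon<L},K_\epsilon,I+I_S)$.

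The key step is to show that, after taking $\epsilon\to 0$ and then $L\to 0$, only wheels with exactly three vertices survive. This is the content of the analytic argument of~\cite[Proposition 4.4]{BWhol}, which applies to any holomorphic theory on $\CC^d$ and concludes that the anomaly concentrates on wheels with precisely $d+1$ vertices; here $d=2$, giving three. (This is the holomorphic incarnation of the familiar triangle anomaly in four real dimensions.) The mechanism is a matching of antiholomorphic form degree against the short-distance scaling of the holomorphic propagator: each edge built from $Q^{GF}=\dbar^*$ supplies a fixed number of $\d\zbar$ factors, and in order for the product of kernels around the loop to assemble, in the $L\to 0$ limit, into a nonvanishing \emph{local} functional (a contact term supported on the small diagonal), the number of edges around the loop---equivalently, the number of vertices on the wheel---must equal $d+1$. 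Wheels with fewer vertices cannot produce a top-degree form and integrate to zero, while wheels with more vertices vanish by power counting as the cutoffs are removed.

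The only new feature relative to~\cite{BWhol} is the quadratic deformation $I_S$, which introduces \emph{bivalent} vertices carrying polynomial factors $z_2^{n_\alpha}$ alongside the trivalent vertices labeled by $I$. Two observations make the cited argument go through. First, because $I_S$ is nilpotent, $\{I_S,I_S\}=0$, the graph expansion incorporating bivalent $I_S$-vertices together with trivalent $I$-vertices is well-defined, exactly as established in the lemma preceding this proposition. Second, each polynomial insertion $z_2^{n_\alpha}$ attached to an $I_S$-vertex can be absorbed into the compactly supported smooth density appearing in the wheel weight, $\Phi\mapsto\Phi'=\big(\prod_\alpha z_2^{n_\alpha}\big)\Phi$, leaving the relevant estimates and the form-degree bookkeeping of~\cite{BWhol} untouched. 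Consequently the concentration onto three-vertex wheels persists, with the three vertices now allowed to carry either an $I$- or an $I_S$-label and with the distinguished edge $e$ labeled by $K_\epsilon$, as depicted in Figure~\ref{fig:wheel}. I expect the main obstacle to lie precisely in this last bookkeeping: one must check that admitting bivalent vertices does not open up three-vertex wheels whose $z_2$-weight alters the short-distance scaling so as to introduce a genuinely new divergence, and that the absorption $\Phi\mapsto\Phi'$ is compatible, uniformly across all such labelings, with the iterated limits $\epsilon\to 0$ and $L\to 0$. Verifying this uniformity is the technical heart of the argument.
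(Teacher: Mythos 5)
Your proposal is correct and follows essentially the same route as the paper: the paper likewise reduces to the citation of~\cite[Proposition 4.4]{BWhol} (one-loop exactness plus the classical master equation confine the obstruction to one-loop wheel diagrams, and the holomorphic form-degree/scaling argument concentrates these on wheels with $d+1=3$ vertices), and it handles the new bivalent $I_S$-vertices exactly as you do, via nilpotency of $I_S$ and absorption of the $z_2^{n_\alpha}$ factors into the compactly supported function $\Phi$ in the lemma preceding the proposition. The uniformity concern you flag at the end is precisely what that preceding lemma resolves, so no genuinely new analysis is needed beyond it.
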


For a wheel $\Gamma$ with distinguished internal edge $e$, the weight $W_{\Gamma,e}(P_{\epsilon<L}, K_\epsilon, I)$ is the graph integral where the heat kernel $K_\epsilon$ is placed on the distinguished edge and the propagators $P_{\epsilon<L}$ are placed on the other internal edges.
The vertices are labeled by $I$ as usual.

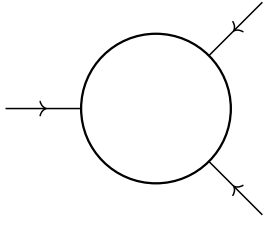
\begin{figure}
\begin{center}
\begin{tikzpicture}[line width=.2mm, scale=1]



		\draw[fill=black] (0,0) circle (1cm);
		\draw[fill=white] (0,0) circle (0.99cm);

		\draw[fermion](180:2) -- (180:1);
		\draw[fermion](45:2) -- (45:1);
		\draw[fermion](-45:2) -- (-45:1cm);
	    	\clip (0,0) circle (1cm);
\end{tikzpicture}
\caption{The anomaly}
\label{fig:wheel}
\end{center}
\end{figure}

The anomaly $\Theta$ is thus given by a sum over weights associated to one-loop wheel diagrams.
By a simple observation on allowable diagrams that can appear, we see that $\Theta$ is only a function of the $\ul{A}$-field.
Thus, it is represented by a cocycle in the local Chevalley--Eilenberg complex
\[
\Theta \in \cloc^\bu(\Omega^{0, \bu}(\CC^2, \fg[\ep])) =  \cloc^\bu(\cG_{\cN=2}) .
\]
We characterized the relevant classes in this local cohomology in~\S\ref{ssec: KMxtns}. 

\begin{prop}
The anomaly cocycle $\Theta$ is a nonzero multiple of the local cocycle $\phi_{\cN=2}^{(2), 2} (\kappa(\fg,V)) \in \cloc^\bu(\cG_{\cN=2})$
where $\kappa(\fg,V)$ is the invariant polynomial
\[
\kappa(\fg,V) = \ch_2^\fg (\fg^{ad}) - \ch_2^\fg(V) \in \Sym^2(\fg^*)^\fg .
\]
In particular, the anomaly vanishes if and only if $\kappa(\fg,V) = 0$.
\end{prop}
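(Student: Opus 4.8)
The plan is to evaluate the sum of Feynman weights furnished by Proposition~\ref{prop: anomaly} and to identify the resulting local functional with a specific representative in $\cloc^\bu(\sG_{\cN=2})$. By that proposition, $\Theta$ is a sum over wheels with exactly three vertices, each carrying a distinguished edge labeled by the heat kernel and remaining internal edges labeled by the propagator $P_{\epsilon<L}$ built from $Q^{GF}=\dbar^*$. We have already observed that every external leg must be labeled by the field $\ul A$, so that $\Theta\in\cloc^\bu(\sG_{\cN=2})$; the first task is therefore to determine which labelings of the three vertices by the interactions $I$ and $I_S$ can contribute. Counting external legs shows that a nonzero local functional of the required (quadratic) type in $\ul A$ must come from wheels with exactly two trivalent vertices, labeled by the cubic interaction $I$, together with a single bivalent vertex labeled by the deformation $I_S$. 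This is the structural reason that the surviving class sits in the $dz_2$-direction: the bivalent insertion contributes precisely the factor of $z_2$ appearing in $I_S$, which after the radial integration in the $z_2$-plane produces the holomorphic one-form $dz_2$ in the cocycle $\phi^{(2),2}_{\cN=2}$.

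Next I would factor each such weight as a product of an analytic factor and a Lie-algebraic factor. The analytic factor is the holomorphic graph integral attached to a three-vertex wheel on $\CC^2$; this is exactly the integral already analyzed in~\cite{BWhol} (and in the Kac--Moody setting of~\cite{GWkm}), and it evaluates to a nonzero universal constant times the local expression $\frac{1}{(2\pi i)^2}\int\kappa(\alpha\wedge\partial\alpha')\wedge dz_2$, i.e. the representative of $\phi^{(2),2}_{\cN=2}(\kappa)$. In particular the presence of the single $z_2$ insertion fixes the form type so that, by simple degree reasons, the companion cocycle $\phi^{(2),1}_{\cN=2}$ cannot be produced; any cubic contribution from three trivalent $I$-vertices would land in the class $\phi^{(3)}_{\cN=2}$, which is cohomologically trivial in the deformed complex by Lemma~\ref{lem: thetatrivial}, and so does not affect $\Theta$ as a cohomology class.

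The Lie-algebraic factor is obtained by contracting the structure constants around the loop. For the wheel whose internal edges run through the gauge fields $(\ul A,\ul B)$, the two trivalent vertices contribute adjoint-valued matrices and the bivalent vertex contributes the invariant pairing, so the cyclic contraction produces the trace $\Tr_{\fg^{ad}}(X_aX_b)$; for the wheel whose internal edges run through the matter fields $(\ul\gamma,\ul\beta)$, the identical combinatorics instead produce $\Tr_V(X_aX_b)$. The two families of diagrams differ by a relative sign coming from the opposite statistics of the matter loop relative to the gauge loop (equivalently, from the ghost number carried by $\ul\gamma,\ul\beta$). Summing the two contributions and recognizing the quadratic traces as the degree-two components $\ch_2^\fg(\fg^{ad})$ and $\ch_2^\fg(V)$ of the Chern character yields the invariant polynomial $\kappa(\fg,V)=\ch_2^\fg(\fg^{ad})-\ch_2^\fg(V)$, whence $\Theta$ is a nonzero multiple of $\phi^{(2),2}_{\cN=2}(\kappa(\fg,V))$ and vanishes precisely when $\kappa(\fg,V)=0$.

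The main obstacle I anticipate is the careful bookkeeping of signs and combinatorial factors: in particular, verifying that the gauge and matter loops enter with opposite signs (so that the anomaly is the difference, rather than the sum, of the two traces) requires tracking the parities of all fields and the orientation data of the distinguished edge in the wheel. A secondary point requiring care is confirming that the universal analytic constant is genuinely nonzero---this follows from the same residue computation used in Propositions~\ref{prop: KM cocycle} and~\ref{prop: VF cocycle}, but must be checked against the normalization $(2\pi i)^{-2}$ fixed in the definition of $\phi^{(2),2}_{\cN=2}$.
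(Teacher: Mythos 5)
Your overall strategy---factoring each three-vertex wheel into an analytic factor times a Lie-theoretic factor, evaluating the analytic factor via the wheel formula of~\cite{GWkm} so that the $z_2$-insertion produces $\d z_2$ and rules out $\phi^{(2),1}_{\cN=2}$ on form-degree grounds, and extracting $\Tr_{V}(X^2)$ and $-\Tr_{\fg^{ad}}(X^2)$ from the matter and gauge loops with a relative sign coming from the degree shift of $V$ against $\fg$---coincides with the paper's, and those parts are correct. The genuine gap is in your treatment of the wheels with three trivalent $I$-vertices. The paper does not dispose of these cohomologically: it shows they vanish \emph{identically}, because their algebraic factor is a supertrace over the graded representations $V[\ep]$ and $\fg[\ep]$, in which the $\ep$-component cancels the $\ep^0$-component on the nose, e.g.\ $\Tr_{V[\ep]}(X^3) = \Tr_V(X^3) - \Tr_V(X^3) = 0$ and likewise $\Tr_{\fg[\ep]}(X^3)=0$. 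Your substitute argument asserts that any such cubic contribution ``lands in the class $\phi^{(3)}_{\cN=2}$'' and is then killed by Lemma~\ref{lem: thetatrivial}. But $\phi^{(3)}_{\cN=2}(\theta)$ depends only on the lowest component $\alpha$ of the superfield, while the external legs of a three-$I$-vertex wheel are full superfields $\ul{A} = A + \ep A'$; a priori the weight is a cubic local functional involving $A'$ as well, and no classification of cubic local cocycles of $\sG_{\cN=2}$ is available in the paper (or invoked by you) that would force such a functional into the family $\phi^{(3)}_{\cN=2}(\theta)$. Ruling out the $A'$-dependent terms, and then showing that the $\ep$-independent remainder vanishes, is exactly the direct supertrace computation your argument skips.

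A second, related problem is that even granting that step, your route proves something weaker than the statement: you obtain only that $\Theta$ is \emph{cohomologous}, in the deformed complex, to a multiple of $\phi^{(2),2}_{\cN=2}(\kappa(\fg,V))$, whereas the proposition (and the paper's computation) asserts an identity of cochains. For the ``vanishes if and only if'' conclusion the cohomological version can be salvaged, but then the ``only if'' direction needs an ingredient you never cite: that $\phi^{(2),2}_{\cN=2}(\kappa)$ remains nontrivial in $H^1_{\rm loc}(\sG'_{\cN=2})$ whenever $\kappa \neq 0$. This does follow from Proposition~\ref{prop: KM cocycle}, since the pullback along $s_\rho$ is the nonzero two-dimensional Kac--Moody cocycle $-\tfrac{1}{2}\phi^{2d}_{\kappa}$, but as written your proof neither establishes the cochain-level identity nor closes this loop. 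The direct supertrace cancellation repairs the gap and yields the stronger statement in one stroke, which is why the paper takes that route.
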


\begin{proof}
This is a direct calculation applying the formula for the anomaly given in Proposition \ref{prop: anomaly}. 
We will be short in our calculation of the anomaly, and will emphasize the structural features of the calculation. 

By Proposition \ref{prop: anomaly}, the anomaly is given by evaluating the weight of a wheel where we place the interactions $I$ or $I_S$ on the vertices and the propagator on the edges (and the heat kernel on a distinguished edge). 

Note that for type reasons (since $I_S$ is nilpotent) at most one of the vertices in the $3$-vertex wheel can be labeled by $I_S$, the remaining vertices are labeled by $I$. 
The propagator depends just on the free theory, which has the form $S_{free} = \int \ul{\beta} \dbar \ul{\gamma} + \int \ul{B} \dbar \ul{A}$.
Thus, the propagator splits into two parts:
\[
P = P_{\ul{\beta}\ul{\gamma}} + P_{\ul{B}\ul{A}}
\]

Enumerating the possible $3$-vertex wheels that can appear, we find the following four cases, depicted in Figure \ref{fig: anomaly}:
\begin{itemize}
\item[(I)] All vertices labeled by $I$ and all internal edges labeled by $P_{\ul{\beta\gamma}}$;
\item[(II)] All vertices labeled by $I$ and all internal edges labeled by $P_{\ul{BA}}$;
\item[(III)] Two vertices labeled by $I$, one vertex labeled by $I_S$ and all internal edges labeled by $P_{\ul{\beta\gamma}}$;
\item[(IV)] Two vertices labeled by $I$, one vertex labeled by $I_S$ and all internal edges labeled by $P_{\ul{BA}}$;
\end{itemize}

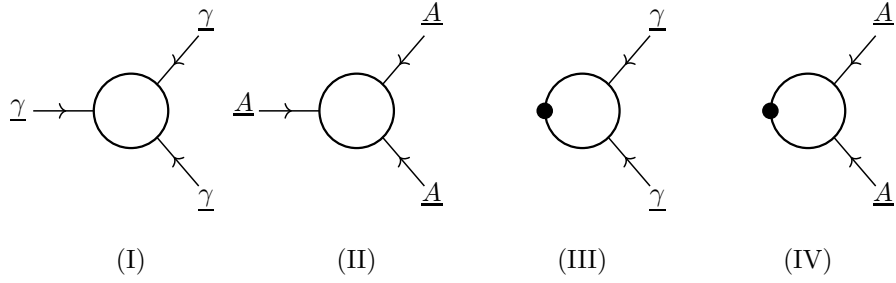
\begin{figure}
\begin{tikzpicture}[line width=.2mm, scale = 1]

		\draw[fermion] (-3.6,1)--(-4.2, 0.3);
		\draw[fermion] (-3.6, -1)--(-4.2, -0.3);
		\draw[fermion] (-5.8,0)--(-5, 0);
		\draw[fill=black] (-4.5,0) circle (.5cm);
		\draw[fill=white] (-4.5,0) circle (.49cm);
		\draw (-3.5, 1.2) node {$\ul{\gamma}$};
		\draw (-3.5, -1.2) node {$\ul{\gamma}$};
		\draw (-6, 0) node {$\ul{\gamma}$};
		\draw(-4.5, -2) node{${\rm (I)}$} ;

		\draw[fermion] (-0.6, 1)--(-1.2, 0.3);
		\draw[fermion] (-0.6, -1)--(-1.2, -0.3);
		\draw[fermion] (-2.8, 0)--(-2, 0);
		\draw[fill=black] (-1.5,0) circle (.5cm);
		\draw[fill=white] (-1.5,0) circle (.49cm);
		\draw (-0.5, 1.2) node {$\ul{A}$}; 
		\draw (-0.5, -1.2) node {$\ul{A}$}; 
		\draw (-3, 0) node {$\ul{A}$}; 
		\draw(-1.5, -2) node{${\rm (II)}$} ;

		\draw[fermion] (2.4, 1)--(1.8, 0.3);
		\draw[fermion] (2.4, -1)--(1.8, -0.3);
		\draw (2.5, 1.2) node {$\ul{\gamma}$}; 
		\draw (2.5, -1.2) node {$\ul{\gamma}$}; 
		\draw(1.5, -2) node{${\rm (III)}$} ;
		\draw[fill=black] (1.5,0) circle (.5cm);
		\draw[fill=white] (1.5,0) circle (.49cm);
		\filldraw[color=black]  (1, 0) circle (.1);

		\draw[fermion] (5.4, 1)--(4.8, 0.3);
		\draw[fermion] (5.4, -1)--(4.8, -0.3);
		\draw (5.5, 1.2) node {$\ul{A}$}; 
		\draw (5.5, -1.2) node {$\ul{A}$}; 
		\draw(4.5, -2) node{${\rm (IV)}$} ;
		\draw[fill=black] (4.5,0) circle (.5cm);
		\draw[fill=white] (4.5,0) circle (.49cm);
		\filldraw[color=black]  (4, 0) circle (.1);
		
\clip (0,0) circle (.3cm);
\end{tikzpicture}
\caption{The anomaly. 
The trivalent vertices are labeled by the cubic interaction $I$.
The bivalent vertices labeled by $\bullet$ are labeled by $I_S$. }
\label{fig: anomaly}
\end{figure}

By general considerations, the anomaly evaluated on $\ul{A} = \alpha \otimes \ul{X} \in \Omega^{0,\bu} (\CC^2) \otimes \fg[\ep]$ where $\alpha$ is a Dolbeault form and $\ul{X} = X + \ep X' \in \fg[\ep]$, 
will have the form
\[
\Theta(\ul{A}) = \Theta^{an}(\ul{\alpha}) \Theta^{alg}(X) .
\]
Here, $\Theta^{an}$ is a local functional of the abelian local Lie algebra $\alpha \in \Omega^{0,\bu}(\CC^2)$ and $\Theta^{alg}$ is an algebraic function of the graded Lie algebra $\ul{X} = X + \ep X' \in \fg[\ep]$. 

We can read off the algebraic factor directly in each of the cases (I)-(IV). 
Note that for type reasons cases (I) and (II) yield functionals that are independent of $\ep$ and hence are just functions of the ordinary Lie algebra $\fg$. 
For the algebraic factor in case (1), the value on an element $X \in \fg$ is
\[
\Tr_{V[\ep]}(X^3) = \Tr_V (X^3) - \Tr_{V} (X^3) = 0 .
\] 
Hence, case (I) does not contribute to the anomaly. 
Similarly, the contribution to the algebraic factor in case (II) is
\[
\Tr_{\fg[\ep]} (X^3) = \Tr_\fg (X^3) - \Tr_{\fg} (X^3) = 0 .
\]
So, case (II) also does not contribute to the anomaly. 

In the last two cases (III), (IV), note that the number of external edges is two (since there is a bivalent vertex). 
Thus the algebraic factor is quadratic as a polynomial on $\fg[\ep]$.
Moreover, it must be linear in $X \in \fg$ and in $\ep X' \in \ep \fg$. 
We can identify such polynomials as quadratic polynomials just on the ordinary Lie algebra $\fg$. 
Doing this, we see that the algebraic factor for case (III) is $\Tr_{V} (X^2)$ and for case (IV) is $- \Tr_{\fg^{ad}}(X^2)$. 
Notice the sign difference since $V$ appears shifted by cohomological degree one relative to $\fg$ in the complex of fields. 

The only thing left to compute is the analytic factor in cases (III) and (IV). 
The analytic factor will again be quadratic, since one of the vertices in bivalent. 
We can therefore assume that we have an abelian Lie algebra, and simply compute the weight of the wheel $\Gamma$ with $3$-vertices where two of the external edges are labeled by elements $\alpha \in \Omega_c^{0,*}(\CC^2)$ and one is labeled by the linear function $z_2$.
In fact, the general formula for the analytic weight of a wheel of this shape for any holomorphic theory on $\CC^2$ has been computed in~\cite[Appendix B]{GWkm} (there, a formula for the weight in any dimension is given).
For general differential form inputs $\alpha, \beta, \gamma$ the formula is a symmetric sum of terms of the form
\[
  \int \alpha_0 \,\partial \alpha_1 \,\partial \alpha_2 .
\]
In our case, we see that the analytic weight is $\int \alpha_0 \,\partial \alpha_1 \,\partial (z_2) = \int \alpha_0 \,\partial \alpha_1 \,\d z_2$ as desired.
\end{proof}

\begin{rmk}
The odd vector field $z_2 \frac{\partial}{\partial \ep}$ that we are deforming the theory by sits inside of the graded Lie algebra of holomorphic vector fields $\fX_{2}$ on $\CC^{2|1}$, see Definition \ref{dfn: vf}.
We argued in~\S\ref{sec: twist} that graded Lie algebra $\fX_{2}$ is the enhancement of the twist of the $\cN=2$ superconformal algebra. 
Moreover, in Proposition \ref{prop: symenhance2} we showed that this enchanced algebra is a classical symmetry of the holomorphic twist of any four-dimensional $\cN=2$ theory on $\RR^4$. 

A more general problem than the one we study in this section is whether we can quantize the symmetry by the full algebra $\fX_{\cN=2}$ acting on the classical theory. 
Of course, we will see the same anomaly as above, but a natural question is whether there are other anomalies. 
If the Lie algebra $\fg$ and the representation $V$ are traceless (that is, $\Tr_{\fg^{ad}}(X) = 0$ and $\Tr_{V} (X) = 0$ for all $X \in \fg$), for instance when $\fg$ is semi-simple, then it turns out that there are no other anomalies.
That is, so long as the condition
\[
0 = \kappa(\fg,V) = \ch_2^\fg (\fg^{ad}) - \ch_2^\fg(V) \in \Sym^2(\fg^*)^\fg 
\]
is satisfied then the full algebra $\fX_{\cN=2}$ is a symmetry of the theory at the quantum level. 
\end{rmk}

We have just computed the anomaly to quantizing the holomorphic theory in the presence of the deformation $I_S$.
If we assume that the anomaly is trivial then we obtain a QFT described by the effective family $\{I[L] + I_S[L]\}_{L > 0}$. 
So long as $\fg$ is semi-simple, this quantization is the unique one-loop exact quantization (up to homotopy) which preserves translation invariance and is $U(2)$-invariant.

By the general formalism of~\cite{CG2}, this QFT defines a factorization algebra of observables which we will denote by $\Obs(\fg, V)$. 
This is a factorization algebra on $\CC^2$ defined over $\CC[[\hbar]]$ whose $\hbar \to 0$ limit is the factorization algebra $\Obs(\fg, V) / \hbar$ which assigns to an open set $U \subset \CC^2$ the cochain complex
\[
\left(\Obs(\fg,V) / \hbar \right)(U) \cong \clie^\bu \left(\cL(\fg,V)(U) \right) 
\]
where $\cL(\fg,V)$ is the local Lie algebra describing the classical theory as introduced in (\ref{sL}). 
In other words $\left(\Obs(\fg, V) / \hbar\right)(U)$ is the cochain complex of classical observables, which are given by functions on the fields supported on $U \subset \CC^2$ equipped with the classical BRST differential.

\subsection{Localization}

The idea of localization is very similar to our analysis of the deformed symmetry factorization algebras in~\S\ref{sec: defsym}. 
We will show that in the presence of the deformation $I_S$, the factorization algebra of observables becomes equivalent to a stratified factorization algebra which is trivial away from the plane $\CC_{z_1}$. 
Along the plane $\CC_{z_1}$, in the next section we will characterize the complex one-dimensional factorization algebra in terms of a vertex algebra. 

Our main tool will be a spectral sequence converging to the cohomology of $\Obs(\fg, V)$, similar to the one considered in~\cite{BPS-SS}.
This filtration also appears in~\cite{EGW} where it is used in the context of 4d $\cN=4$ supersymmetric Yang--Mills theory.
The key property of this spectral sequence is that the first page computes the cohomology of the observables where we turn {\em off} the interactions which are of cubic order and higher.
That is, it is simply the cohomology of the free theory in the presence of the deformation. 
We will find that the cohomology of the free theory localizes to the $\CC_{z_1}$ plane; see Lemma~\ref{lem: bflocal}. 
Upstairs, on $\CC^2$ the spectral sequence converges to the cohomology of the interacting quantum field theory. 
By the fact the the theory localizes at the $E_1$-page, we conclude that the final page of the spectral sequence also localizes to an interacting theory on $\CC_{z_1}$.
Schematically, the picture is the following:
\begin{equation}
\begin{tikzcd}
\{{\rm Free\;theory\; on\;} \CC^2 \} \ar[dd, rightsquigarrow, "{\rm localize}"'] \ar[rrr, Rightarrow] &&& \{{\rm Interacting\;theory\; on\; } \CC^2\} \ar[dd, rightsquigarrow, "{\rm localize}"] \\ \\
\{{\rm Free\;chiral\;theory\;on\;} \CC_{z_1}\} \ar[rrr, Rightarrow] &&& \{{\rm Interacting\;chiral\;theory\;on\;} \CC_{z_1}\}
\end{tikzcd} 
\label{eq: squarepic}
\end{equation}

Now, we get into the proofs of the above assertions. 
As a graded factorization algebra, the $\Obs(\fg,V)$ is given by $\clie^\bu(\cL(\fg,V))[[\hbar]]$, where we recognize $\clie^\bu(\cL(\fg,V))$ is the factorization algebra of classical observables $\Obs(\fg, V)/ \hbar$.
The underlying graded factorization algebra of $\clie^\bu(\cL(\fg, V))$ is of the form
\[
\prod_{n \geq 0} \Sym^n \left(\cL(\fg, V)^\vee[-1]\right) 
\]
where $(-)^\vee$ denotes the continuous linear dual. 
Define the following filtration on $\Obs(\fg, V)$ by
\[
F^p \Obs(\fg, V) = \prod_{2m + n \geq k} \CC \hbar^m \otimes \Sym^n \left(\cL(\fg, V)^\vee[-1]\right)  .
\]

The spectral sequence associated to this filtration has first page given by the cohomology with respect to the {\em linear} part of the differential.
This is the free limit of the classical theory.
The linear term in the differential has two terms: $\dbar + z_2 \frac{\partial}{\partial \ep}$, so the $E_1$-page is given by the following factorization algebra 
\beqn\label{e1}
\cF_1 := H^\bu \left( \Sym \left(\cL(\fg, V)^{\# \vee}[-1]\right) , \dbar + z_2 \frac{\partial}{\partial \ep} \right).
\eeqn
Here, the $\#$ notation $\cL(\fg, V)^\#$ indicates that we are completely forgetting the Lie structure and only remembering the underlying cochain complex. 

At this page, we see the factorization algebra localizes to the $z_1$-plane. 
The proof is completely similar to that of Lemma \ref{lem: localkm}.

\begin{lem}\label{lem: bflocal}
The factorization algebra $\cF_1$ from (\ref{e1}) restricted to $\CC^2 \setminus \CC_{z_1}$ is equivalent to the constant factorization algebra with stalk $\CC$:
\[
\left. \cF_1 \right|_{\CC^2 \setminus \CC_{z_1}} \simeq \ul{\CC} .
\]
\end{lem}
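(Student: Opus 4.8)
The plan is to reduce the statement to the acyclicity of the underlying linear complex of the free theory on the open stratum, exactly as in the proof of Lemma~\ref{lem: localkm}. Since $\sF_1$ is by definition the Chevalley--Eilenberg cochain complex $\Sym(\sL(\fg,V)^{\#\vee}[-1])$ computed with respect only to the linear differential $\dbar + z_2\frac{\partial}{\partial\ep}$, it suffices to show that the complex $(\sL(\fg,V)^\#, \dbar + z_2\frac{\partial}{\partial\ep})$ is acyclic after restriction to any $U \subset \CC^2\setminus\CC_{z_1}$. Indeed, the continuous dual of such an acyclic complex is again acyclic, so its symmetric algebra is quasi-isomorphic to $\CC$ concentrated in the $\Sym^0$ summand, and the resulting factorization structure maps are manifestly those of the constant factorization algebra $\ul\CC$.

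First I would record that, forgetting the Lie structure as the notation $(-)^\#$ indicates, the graded object $\sL(\fg,V)^\#$ from~\eqref{sL} is a finite direct sum of its four constituent fields $\ul A, \ul\gamma, \ul B, \ul\beta$, each of which is a free $\CC[\ep]$-module on a Dolbeault complex of the form $\Omega^{0,\bu}(\CC^2)$ or $\Omega^{2,\bu}(\CC^2)$ tensored with one of $\fg, V, \fg^*, V^*$ (up to a shift). The linear differential $z_2\frac{\partial}{\partial\ep}$ does not mix these four summands---such mixing occurs only in the cubic interaction, which has been turned off at the $E_1$-page---and within each summand it is precisely the map sending the $\ep$-component to $z_2$ times the non-$\ep$-component. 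Each summand is therefore a two-term complex of exactly the shape appearing in~\eqref{twoterm}, namely the Dolbeault resolution of a Koszul complex for the element $z_2$, with the various degree shifts merely relocating where the doublet sits.

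Then, just as in Lemma~\ref{lem: localkm}, restricting to $z_2\neq 0$ amounts to inverting $z_2$ in the ring of holomorphic functions, over which each of these Dolbeault complexes is a module; multiplication by the now-invertible $z_2$ is an isomorphism, so every summand becomes acyclic. Concretely, one can exhibit the contracting homotopy $z_2^{-1}\ep$, which is well-defined on the open stratum and commutes with $\dbar$ there (since $z_2^{-1}$ is holomorphic and $\ep$ is $\dbar$-closed), so that it contracts the total differential $\dbar + z_2\frac{\partial}{\partial\ep}$. This establishes the acyclicity of $\sL(\fg,V)^\#|_{\CC^2\setminus\CC_{z_1}}$ and hence the claim.

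The main obstacle is essentially bookkeeping rather than analysis: one must verify that the linear BV differential really is block-diagonal across the four fields and that every constituent---including the antifields $\ul B, \ul\beta$ valued in $\Omega^{2,\bu}$ and the dualized representations $\fg^*, V^*$---participates in an acyclic $\ep$-doublet, so that no acyclicity is lost through the shifts. The one genuinely non-formal point, as in the previous section, is passing from acyclicity of the cosheaf of fields to a statement about the Chevalley--Eilenberg chains at the cochain (rather than merely cohomology) level; here this is immediate because the differential relevant to $\sF_1$ is linear, so $\Sym$ of a contractible complex is contractible onto its $\Sym^0 = \CC$ piece.
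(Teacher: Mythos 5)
Your proposal is correct and follows essentially the same route as the paper's proof: reduce to showing that the underlying complex of fields $\bigl(\sL(\fg,V)^{\#}, \dbar + z_2\tfrac{\partial}{\partial\ep}\bigr)$ is acyclic on the open stratum, which follows from the Koszul-type two-term complex~\eqref{twoterm} once $z_2$ is inverted, exactly as in Lemma~\ref{lem: localkm}. Your additional bookkeeping (block-diagonality of the linear differential across the four field sectors, the explicit contracting homotopy $\ep/z_2$, and the observation that $\Sym$ of a contractible complex contracts onto $\Sym^0 = \CC$ because the $E_1$ differential is linear) merely makes explicit what the paper leaves implicit.
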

\begin{proof}
It suffices to prove that the sheaf of cochain complexes $\left(\cL(\fg, V)^{\#}, \dbar + z_2 \frac{\partial}{\partial \ep}\right)$ restricted to $\CC^2 \setminus \CC_{z_1}$ is quasi-isomorphic to the trivial sheaf. 
This follows from the familiar short exact sequence (\ref{twoterm}). 
\end{proof}

Just as in~\S\ref{sec: defsym}, we define the factorization algebra $\cF_1'$ on $\CC_{z_1}$ by the pushforward of $\cF_1$ along $\pi : \CC^2 \to \CC_{z_1}$:
\beqn\label{fprime}
\cF'_1 = \pi_* \cF_1 .
\eeqn

The next page in the spectral sequence involves the interacting part of the theory, and its quantization. 
Instead of analyzing the full quantization on $\CC^2$, we will only characterize the quantization of the localized theory on $\CC_{z_1}$. 
This is sensible, by our analysis of the first page in the spectral sequence, since we know the factorization algebra becomes completely trivial away from the $z_1$-plane. 

\subsection{BRST reduction}

To study the quantization of the chiral theory on $\CC_{z_1}$ we make use of an elegant result of~\cite{LiVertex} which sets up a correspondence between quantizations of chiral theories and vertex algebras. 
First, we recall the definition of BRST reduction of a vertex algebra. 

Suppose that $\VV$ is any $\ZZ$-graded conformal\footnote{The same definition holds for quasi-conformal vertex algebra, where we do not demand an action by the full Virasoro, just $\{L_n\}_{n \geq -1}$.} vertex algebra, and a field $J_{\text{BRST}}(z)$ of conformal weight one, cohomological degree one, and has trivial OPE with itself
\[
J_{\text{BRST}}(z) J_{\text{BRST}} (w) \sim 0 .
\]
One then defines the following endomorphism (of cohomological degree one) of the vertex algebra
\[
Q_{\text{BRST}} = \oint \frac{\d z}{2 \pi i} J_{\text{BRST}}(z),
\]
which is called the {\em BRST charge}. 
The condition that $J_{\text{BRST}}(z)$ has trivial OPE with itself implies that $(Q_{\text{BRST}})^2 = Q_{\text{BRST}} \circ Q_{\text{BRST}} = 0$ acting on $\VV$, and hence we can form the complex $\left(\VV, Q_{\text{BRST}}\right)$. 
This object is a {\em dg vertex algebra}. 
Its cohomology
\[
H^* \left(\VV, Q_{\text{BRST}}\right)
\]
is a graded vertex algebra, known as the {\em BRST reduction} of $\VV$ with respect to $J_{\text{BRST}}(z)$. 

\begin{rmk}
The use of terminology is potentially confusing here. In the physics literature, ``BRST'' typically refers to the familiar homological technique for quantizing gauge theories by introducing ghosts, closely connected to the Chevalley--Eilenberg construction. What is called ``BRST reduction'' here is essentially a deformation of the differential, which in most examples imposes the gauge symmetry, but can also be totally unrelated to any Lie algebra action. 
The terminology follows typical usage in the vertex algebra literature; a special case of the procedure is sometimes referred to as ``Drinfeld--Sokolov reduction,'' especially in parts of the literature more closely connected to physics.

Throughout this article, we have used the term ``twist'' to describe precisely the procedure of deforming the differential, but this term is normally restricted to cases where the origin of the deformation is in the action of the physical supersymmetry algebra on the full theory; this is not necessarily the case for the deformations at hand here. 
The physical origin of the BRST reduction at hand lies in passing from the free to the interacting theory, as we have tried to make clear in~\eqref{eq: squarepic} and related discussion above. 
At the four-dimensional level this is, in typical physics usage, neither a BRST nor a twisting differential, but a general deformation of the differential which induces the interaction spectral sequence of~\cite{BPS-SS}.

There is, however, a somewhat askew sense in which BRST is, perhaps, an appropriate name even with respect to physics conventions. Recall that any BRST theory determines a BV theory in canonical fashion, but not all BV theories arise in this fashion.
(Physicists would probably think of such theories as being ones for which the BV formalism can be safely ignored; except for supergravity theories, this is usually the case.) In such a theory, the base of the shifted cotangent bundle is the \emph{BRST theory}, in which antifields are not present; when it carries an internal differential, usually due to the presence of gauge symmetry, this is called the \emph{BRST differential}. 

However, in the twist of such a theory, part of the BRST differential originates in the twisting supercharge. (This is one origin of the overlap in nomenclature.) When supersymmetry is realized off-shell through the use of an auxiliary-field formalism, the twisted theory still arises from a BRST theory, and twisting can be performed just at the BRST level. The auxiliary fields may then be eliminated via their equations of motion. However, after eliminating auxiliary fields, the BRST differential (really, the twisting supercharge) \emph{may depend on interaction terms}---in particular, on superpotential terms---in its action on the component fields. This is the sense in which the introduction of interactions may be thought of as a deformation of the differential, even without passing to the BV formalism (where the action functional is encoded in the BV differential in any case). 

We see no possible choice of nomenclature that does not lead to some confusion or break with tradition, and hope that this remark makes readers sufficiently aware of the existing semantic burden. However, the specific example of two-dimensional BRST reduction we consider below is an example both of a free-to-interacting deformation and of a Chevalley--Eilenberg differential; the theory is the twist of a two-dimensional (0,2) theory with purely gauge interactions. 
\end{rmk}

There is a useful characterization, due to Li \cite{LiVertex}, of the quantum master equation for chiral theories on $\CC$ in terms of vertex algebras.

\begin{thm}[\cite{LiVertex}] \label{thm: Li}
Suppose $E$ is a free chiral theory on $\CC$ with corresponding vertex algebra $\VV[E]$. 
Then, an $\hbar$-dependent field of the vertex algebra $I^{\text{hol}} (z)$ of cohomological degree one satisfies the OPE in $\VV[E]$:
\[
I^{\text{hol}} (z) \cdot I^{\text{hol}}(w) \sim 0
\]
if and only if the corresponding family of functionals 
\[
I[L] = \lim_{\epsilon \to 0} W\left(P_{\epsilon < L}, \int \d z \; I^{\text{hol}}\right)
\]
satisfies the renormalized QME. 
\end{thm}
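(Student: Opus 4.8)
The plan is to translate both conditions in the claimed equivalence into one and the same local functional and then read off the iff directly. The essential input is the dictionary of~\cite{CG2} between holomorphically translation-invariant factorization algebras on~$\CC$ and vertex algebras, under which local functionals on the fields of $E$ correspond to fields of $\VV[E]$, and the factorization product corresponds to the operator product expansion. The interaction in question is the local functional $I = \int \d z\, I^{\text{hol}}$, and its effective version $I[L] = \lim_{\epsilon \to 0} W(P_{\epsilon<L}, I)$ is obtained by smearing $I$ along the propagator of the free theory.

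First I would unpack the renormalized QME. Since the linear differential on $E$ is $\dbar$ and the density $I^{\text{hol}}$ is a holomorphic field, the term $\dbar I[L]$ vanishes (the density is $\dbar$-closed, and this is preserved under RG flow), so the QME collapses to the vanishing of the obstruction
\[
\Theta[L] := \tfrac{1}{2}\{I[L], I[L]\}_L + \hbar\, \Delta_L I[L].
\]
By the standard arguments for holomorphic theories recalled in~\S\ref{sec: renorm} (cf.\ the anomaly analysis of Proposition~\ref{prop: anomaly}), the limit $\Theta := \lim_{L\to 0}\Theta[L]$ exists and is a local functional; this is the object whose vanishing is equivalent to the QME.

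The heart of the argument is the identification of $\Theta$ with the singular part of the self-OPE of $I^{\text{hol}}$. Both $\{-,-\}_L$ and $\Delta_L$ are contraction operators built from the scale-$L$ propagator $P_{\epsilon<L}$ and the heat kernel $K_\epsilon$; the tree-level term $\tfrac{1}{2}\{I[L], I[L]\}_L$ produces the single-contraction (classical) part of the product $I^{\text{hol}}(z)\,I^{\text{hol}}(w)$, while the one-loop term $\hbar\,\Delta_L I[L]$ produces the self-contraction (normal-ordering) part. Taking $L\to 0$ collides the two insertions as $z\to w$, and the~\cite{CG2} translation between factorization products and OPEs identifies the resulting local functional with the singular coefficients $A_n(w)$ of the expansion $I^{\text{hol}}(z)\,I^{\text{hol}}(w)\sim \sum_{n\geq 1} A_n(w)/(z-w)^n$. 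With this identification in hand, both directions follow at once: $\Theta = 0$ precisely when every singular coefficient $A_n$ vanishes, i.e.\ $I^{\text{hol}}(z)\cdot I^{\text{hol}}(w)\sim 0$.

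I expect the main obstacle to be making the identification in the previous paragraph precise at the analytic level: one must show that the $\epsilon \to 0$ and $L \to 0$ limits of the relevant Feynman integrals (the contractions of $I^{\text{hol}}$ against the holomorphic propagator) exist, interchange, and reproduce the vertex-algebra OPE coefficients on the nose, rather than merely up to an undetermined constant. The delicate bookkeeping is to match, diagram by diagram, which contractions feed the BV bracket versus the BV Laplacian against the normal-ordering conventions implicit in the OPE, so that the classical and one-loop pieces assemble into exactly the singular part of $I^{\text{hol}}(z)\,I^{\text{hol}}(w)$.
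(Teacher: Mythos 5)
The first thing to note is that the paper does not prove this statement at all: Theorem~\ref{thm: Li} is imported wholesale from \cite{LiVertex}, so there is no internal proof to compare against, and your proposal has to be judged as an attempted reconstruction of Li's argument. As such it has a genuine gap: the entire mathematical content of the theorem is precisely the identification you describe in your third paragraph --- that the $\epsilon \to 0$ and $L \to 0$ limits of the relevant graph integrals exist, that the bracket and BV Laplacian contributions assemble exactly into the singular coefficients of the self-OPE, and that this identification is lossless. You call this the ``heart of the argument'' and then defer it to ``the main obstacle,'' which means the proposal restates the theorem rather than proves it. The direction QME $\Rightarrow$ OPE $\sim 0$ is where the subtlety is concentrated: the QME obstruction is a single local functional, while the singular OPE consists of infinitely many coefficients $A_n(w)$, and the $L \to 0$ limit of the classical bracket $\{I[L],I[L]\}_L$ (a delta-function contraction) detects only the simple pole. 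One must show that the higher-order poles are seen by the $\hbar$-corrections and derivative couplings in such a way that the map from singular OPE data to QME obstructions is injective; nothing in your sketch addresses why $\Theta = 0$ should force \emph{every} $A_n$ to vanish rather than merely some integrated combination of them.

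There is also a technical misstep in your first reduction: it is not true that $\dbar I[L]$ vanishes on its own. The naive functional $I = \int \d z \, I^{\text{hol}}$ is indeed $\dbar$-closed (by Stokes, exactly because the density involves only holomorphic derivatives), but the RG flow $W(P_{\epsilon<L},-)$ does not commute with $\dbar$: one has $\dbar P_{\epsilon<L} = K_\epsilon - K_L$ up to sign, so applying $\dbar$ to $I[L]$ produces heat-kernel contraction terms that combine nontrivially with $\{-,-\}_L$ and $\Delta_L$. The correct statement is that the full scale-$L$ obstruction $\dbar I[L] + \tfrac{1}{2}\{I[L],I[L]\}_L + \hbar \Delta_L I[L]$ equals the RG flow of the $\epsilon \to 0$ limit of the scale-$\epsilon$ bracket and Laplacian applied to $I$; the $\dbar I[L]$ term is what converts scale-$L$ contractions into contractions localized near the diagonal, and it is exactly these near-diagonal heat-kernel integrals that produce the OPE singularities. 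Dropping it and declaring the remaining two terms to be the obstruction breaks the scale-coherence on which the whole renormalized QME formalism (and Li's computation) rests.
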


We see that the condition on $I^{\text{hol}}(z)$ in the theorem above is nearly identical to the condition of the field $J_{\text{BRST}} (z)$ in the general definition of BRST reduction. 
On the other hand, since the resulting family of renormalized functionals $\{I[L]\}$ satisfies the QME, we know by the abstract formalism of~\cite{CG1,CG2} that it defines a quantum field theory and hence a factorization algebra $\Obs_{E,I}$ on $\CC$. 

It is automatic that this factorization algebra is holomorphic and satisfies the conditions of~\cite[Theorem 2.2.1]{CG1}.
Thus, by this theorem, it defines a graded vertex algebra
\[
\VV[\Obs_{E,I}] . 
\]
Combining this with Theorem \ref{thm: Li}, Li dentifies the current $I^{\text{hol}} (z)$ with the standard BRST current $J_{BRST}(z)$ at the level of vertex algebras. 
This is a characterization of the vertex algebra associated to the observables of the quantization of the chiral theory. 


\begin{rmk}
The factorization algebras we consider are all defined over $\CC[\hbar]$. 
When we take the associated vertex algebra we adhere to the convention to specialize $\hbar = 2\pi i$. 
\end{rmk}

We now wish to apply this to the factorization algebra $\cF'_1$ as in (\ref{fprime}) associated to the localized free chiral theory on $\CC_{z_1}$ and the factorization algebra of the resulting chiral deformation obtained from the localization of the interacting theory on $\CC^2$. 

First off, we note that the factorization algebra $\cF'_1$ is equal to the cohomology of a factorization algebra associated to a free chiral theory on $\CC_{z_1}$. 
This is a free chiral theory consisting of a $\fg$-valued ghost $a$, its antifield $b$, and an ordinary $\beta\gamma$ system valued in $V$ whose fields we denote $\gamma_{2d}$ and $\beta_{2d}$ to not confuse them with the higher dimensional $\beta\gamma$ system. 
The action functional of the free chiral theory on $\CC_{z_1}$ is
\[
\int_{\CC_{z_1}} (b \dbar a + \beta_{2d} \dbar \gamma_{2d}) .
\]
The factorization algebra of this free chiral theory will be denoted $\Obs^{free}_{z_1}(\fg,V)$. 
The cohomology of this factorization algebra is precisely the factorization algebra $\cF'_1$. 
The vertex algebra corresponding to $\Obs^{free}_{z_1}(\fg,V)$ is generated by the free fields $a(z), b(z), \gamma_{2d}(z), \beta_{2d}(z)$ has nontrivial OPE's given by
\begin{align*}
a (z) b(w) & \sim \frac{\<a, b\>_{\fg}}{z-w} \\
\gamma_{2d}(z) \beta_{2d}(w) & \sim \frac{\<\gamma, \beta\>_V}{z-w} .
\end{align*}
Denote this vertex algebra by $\VV^{free} [\fg,V]$. 

The spectral sequence with $E_1$-pages $\cF_1$ converges to the cohomology of the factorization algebra $\Obs(\fg,V)$ on $\CC^2$. 
For the factorization algebra on $\CC_{z_1}$ this amounts to taking a further cohomology of $\cF'_1$ which depends on the interacting part of the field theory.  

This can be realized by deforming the free chiral theory $\Obs^{free}_{z_1}(\fg,V)$ by the chiral deformation
\[
I_{2d} = \int_{\CC_{z_1}} \<\beta_{2d}, [a, \gamma_{2d}]\>_V + \<b, [a, a]\>_\fg .
\]
The resulting theory is simply the BF $\beta\gamma$ system on $\CC_{z_1}$. 

By the discussion above, the associated vertex algebra is given by the cohomology of the graded vertex algebra $\VV^{free}[\fg,V]$ with respect to the differential $Q = \oint \d z I^{hol}(z)$:
\[
\VV[\Obs_{z_1}(\fg,V)] = H^\bu \left(\VV^{free}[\fg,V], Q = \oint \<\beta(z), [a(z), \gamma_{2d}(z)]\>_V + \oint \<b(z), [a(z),a(z)]\>_\fg\right) .
\]
This is the description of the BRST reduction of the $\beta\gamma$ system by the affine Kac--Moody Lie algebra generated by the fields $a(z)$, see, for instance, \cite{Karabali}. 

\begin{rmk} 
  The results of this section can be interpreted as a proof, in our formalism, of the descriptions of two-dimensional chiral algebras associated to Lagrangian theories given in~\cite[\S3]{BeemEtAl}, and in particular of the case of $\N=2$ super QCD~\cite[\S5.1--2]{BeemEtAl}. 
\end{rmk}

\printbibliography

\end{document}